\documentclass[runningheads, envcountsame, a4paper]{llncs}

\usepackage{hyperref}
\usepackage{aliascnt}

\usepackage{thmtools}
\usepackage{thm-restate}
\usepackage{multicol}
\usepackage{amssymb}
\usepackage{graphicx}
\usepackage{bussproofs}
\usepackage{amsmath}

\usepackage[nameinlink]{cleveref}
\crefalias{definition}{Definition}

\usepackage{amsfonts}
\usepackage{xspace}
\usepackage{tikz-cd}
\usepackage{tikz}
\usepackage{pgf}
\usetikzlibrary{arrows,trees,positioning,arrows,chains,shapes.geometric, decorations.pathreplacing,decorations.pathmorphing,shapes,matrix,shapes.symbols,shapes,snakes,automata,backgrounds,petri,arrows.meta,positioning,arrows,calc,matrix,shapes.multipart,decorations.text,calc,arrows.meta}
\tikzstyle{nod}= [circle, draw,inner sep=0pt, minimum size=0.5cm]
\tikzstyle{m}=[circle, thin, draw, minimum size=14mm,inner sep=3pt]
\tikzset{reflexive left/.style={->,loop,looseness=10,in=160,out=220},
reflexive right/.style={->,loop,looseness=10,in=20,out=320}}

\newcommand{\agents}{\mathcal{A}}
\renewcommand{\implies}{\rightarrow}

\title{A Dynamic Deontic Simplicial Logic \\ for Joint Commitments}

\author{
  Giorgio Cignarale\orcidID{0000-0002-6779-4023}\inst{1} \and
  Hugo Rincon Galeana\orcidID{0000-0002-8152-1275}
}

\institute{
  TU Wien, Theory and Logic Group, Vienna, Austria. \\
  \email{giorgio@logic.at}
  \and
  \email{hugorincongaleana@gmail.com}
}

\titlerunning{Dynamic Deontic Simplicial Logic}
\authorrunning{G. Cignarale, H. Rincon Galeana}

\begin{document}
\maketitle

\begin{abstract}
We introduce the Deontic Simplicial Logic (\textbf{DSL}), a deontic logic for group obligations grounded in simplicial complexes: vertices encode individual commitments, and higher-dimensional simplices encode the joint commitments of the groups they connect. The resulting group modality behaves like a distributed-commitment operator with a genuinely normative character: it validates achievement but not the unrestricted introspection or monotonicity familiar from its epistemic counterpart, and impurity lets the model distinguish an agent's mere absence from a configuration from an explicit commitment to the contrary. We give a sound and complete axiomatization for the group modality.
%at arbitrary group size, using a stratified language and a Group Invariance lemma to overcome the standard obstacle to completeness for group operators of this kind. 
We then extend \textbf{DSL} to the Dynamic Deontic Simplicial Logic (\textbf{DDSL}), which introduces action modalities modeling agents' choices among mutually exclusive commitments, with effects captured by a product update construction on simplicial models; to our knowledge, this is the first dynamic deontic logic built on simplicial complexes. Soundness and completeness for \textbf{DDSL} are established via reduction axioms to the static case. Throughout, we illustrate both logics with worked examples of static and dynamic multi-agent commitment scenarios.
\end{abstract}

\textbf{Keywords:} Deontic logic, Simplicial complexes, Joint commitments

\section{Introduction}
\label{sec:Intro}

Reasoning about obligations and permissions in multi-agent settings is one of the central challenges in deontic logic \cite{Horty2001,Kooi2008-KOOMCB,Tamminga2013-TAMDLF,collobl,Tamminga2020-TAMTIO-6}. Traditional deontic frameworks treat agency and normativity largely at the propositional or relational level, offering limited tools to capture how collective commitments emerge from structured interactions among agents.
At the same time, distributed systems research has developed powerful geometric methods, most notably simplicial complexes \cite{comb_topo,topoAsyn}, to model partial information, coordination, and higher-order interaction among agents.
This paper proposes a new perspective, namely interpreting simplicial complexes as models of normative structure. 
Instead of viewing simplices as encoding shared knowledge, we treat them as representing shared commitments among agents.
The central advantage of our geometric approach is expressiveness that is unavailable in Kripke-based frameworks.\footnote{The advantages of the impure approach in the epistemic settings are shown in \cite{rojo}.} In standard multi-agent deontic logics, the absence of a commitment between agents \textit{must} be encoded as a formula. In the simplicial approach, the absence of a simplex connecting agents is a structural property of the model representing the impossibility of those agents to be jointly committed, which can also be expressed with formulas.

\textbf{Survey of the literature:}
Combinatorial topology has been widely used in distributed computing, especially to model concurrency and asynchrony \cite{comb_topo,topoAsyn} and the epistemic interpretation of simplicial complexes is well established~\cite{castanedaDagRep,DEL-cover,know_in_simpl,goubault_chroma,Castaneda2023-CASCPL}.
Formally, a simplicial complex consists of a set of vertices together with a collection of simplices, where a simplex is a finite set of vertices. In particular, every subset of a simplex is itself a simplex and every vertex appears as a simplex on its own. A simplex is called \textit{facet}, if it is maximal (i.e. there is no proper superset). The dimension of a simplex is the number of vertices minus one.
Some examples of simplices are offered in \autoref{fig:SCdim}.

\begin{figure}[t]
    \centering
    \resizebox{8cm}{!}{%
\begin{tikzpicture}
\begin{scope}[xshift=-6cm,>=stealth, yshift=3cm]

    \node[circle,draw,minimum size=4mm, fill=red!80] (1) {};
    \node (a) [below = .1cm  of 1] {$0$-simplex};

    \node[circle,draw,minimum size=4mm, fill=blue!80] (2) [right = 2cm  of 1] {};
    \node[circle,draw,minimum size=4mm, fill=blue!80] (3) [right = 1cm  of 2] {};
    \node (b) [below = .1cm  of 2] {};
    \node (b1) [right = 1.3cm  of a] {$1$-simplex};

     \path[]
        (2) edge node[above] {} (3);

    \node[circle,draw,minimum size=4mm, fill=green] (4) [below = 2cm  of 1] {}; 
    \node (g) [below = 1.5cm  of 4] {}; 
    \node[circle,draw,minimum size=4mm, fill=green] (5) [left = .5cm  of g] {};
    \node[circle,draw,minimum size=4mm, fill=green] (6) [right = .5cm  of g] {};
    \node (c) [below = .1cm  of g] {$2$-simplex};

    \path[]
        (4) edge node[above] {} (5)
        (5) edge node[above] {} (6)
        (4) edge node[above] {} (6);

     \begin{pgfonlayer}{background}
\draw[fill=green!10,line width=0.1mm,line cap=round,line join=round] (4.center)--(5.center)--(6.center)--cycle;
\end{pgfonlayer}

    \node[circle,draw,minimum size=4mm, fill=orange] (7) [right = 2cm  of 6] {}; 
    \node (e) [right = 1.5cm  of 7] {}; 
    \node[circle,draw,minimum size=4mm, fill=orange] (8) [below = .2cm  of e] {};
    \node (f) [above = 1.5cm  of 8] {}; 
    \node[circle,draw,minimum size=4mm, fill=orange] (9) [right = .2cm  of f] {};
    \node (w) [above = 1.5cm  of 7] {};
    \node[circle,draw,minimum size=4mm, fill=orange] (10) [right = .4cm  of w] {};
    \node (d) [below = .1cm  of 7] {$3$-simplex};

    \path[]
        (7) edge node[above] {} (8)
        (8) edge node[above] {} (9)
        (9) edge node[above] {} (10)
        (7) edge node[above] {} (10)
        (8) edge node[above] {} (10)
        ;

    \path[dashed]
        (7) edge node[above] {} (9);
    
    \begin{pgfonlayer}{background}
\draw[fill=orange!20,line width=0.1mm,line cap=round,line join=round] (7.center)--(8.center)--(10.center)--cycle;
\draw[fill=orange!10,line width=0.1mm,line cap=round,line join=round] (9.center)--(8.center)--(10.center)--cycle;
\end{pgfonlayer}
    
    \end{scope}

\end{tikzpicture}

    }
    \caption[Simplices of various dimensions.]{Simplices of various dimensions.}
    \label{fig:SCdim}
\end{figure}
In the epistemic framework, each vertex represents the local state of an agent, while simplices represent global configurations made of the local states of agents. Agents' uncertainty is represented by having multiple simplices that share that agent vertex, and higher-dimensional topological properties are used to represent group knowledge.
In particular, \textit{impure} simplicial complexes, i.e., simplicial complexes whose maximal simplices do not all have the same dimension, have been employed in modeling crashing agents \cite{Rojo_bisim,rojo,Deadoralive,Rojo-2-3,goubault_et_al:LIPIcs.STACS.2022.33}. In our deontic setting, by contrast, impurity models the deliberate absence of commitment: an agent whose vertex is missing from a simplex has not committed to the others in that configuration, which is normatively distinct from an agent who has committed negatively.

Despite their success in epistemic and computational contexts, these topological tools have so far received no attention as a foundation for deontic reasoning.
Work on agency in deontic logic has instead largely focused on the `Seeing To It That' (STIT) frameworks, which analyze obligations relative to the actions available to agents at a given moment \cite{Horty2001,sep-logic-action}. Extensions of STIT to collective settings have also been proposed to study joint obligations and shared responsibility \cite{Horty1995,Tamminga2021-TAMERF}. 
STIT logic in Kripke frameworks has been proved to be undecidable for group of agents \cite{herzig:hal-03526736}.

The notion of joint commitment has been studied extensively in the philosophy of collective intentionality.
\cite{Castelfranchi} argues that social commitments between agents are what hold a group together, and that no organization is possible without them. On this account, commitment is not merely a philosophical construct but a structural feature of multi-agent interaction: social commitments are between agents, typically with respect to some action or state which the committed agent is prepared to perform or achieve. Crucially, \cite{Castelfranchi} distinguishes individual from collective commitments, where the latter express the shared purpose or objective of the group as a whole, a distinction our simplicial framework is designed to capture geometrically, with vertices representing individual commitments and higher-dimensional simplices encoding their joint counterparts.
Notions of joint commitments and their dynamics have also received a logical characterization \cite{Herzig-commit,Lorini-commit,MEYER19991}.
In particular, \cite{Lorini-commit} proposes a model-theoretic semantics and a complete logic for the dynamics of commitments, formalizing social commitment within STIT logic and extending it with dynamic operators for commitment creation and cancellation. 
Our work is related to Lorini's \cite{Lorini-commit} in its use of STIT-style agency and its dynamic extension, but departs from it by grounding the semantics in simplicial complexes rather than Kripke structures, which allows the geometric structure of the model itself to encode the presence or absence of joint commitment among subgroups of agents.

\textbf{Main contributions:} 
We adopt a normative interpretation of simplicial complexes, in which simplices represent collective commitments among agents and we formulate the Deontic Simplicial Logic (\textbf{DSL}) to express these joint commitments in a geometric interpretation: vertices represent individual commitments of agents, while higher-dimensional simplices encode joint commitments shared by the agents corresponding to their vertices.

We introduce the group modality $\mathbf{D}_G \varphi$ representing a joint commitment of agents in $G$ towards $\varphi$.
Using impure simplicial complexes makes the absence of a simplex normatively meaningful, reflecting the lack of mutual commitment in a given configuration.
In the spirit of STIT-based agency \cite{Horty2001}, we interpret joint commitment as an achievement notion: a group $G$ is jointly committed to $p$ in the sense that the members of $G$ have collectively arranged their choices so that they see it to that $p$ holds.

The static \textbf{DSL} captures the normative landscape at a single moment: which commitments hold, which are absent, and how they are distributed across groups. However, it cannot represent how that landscape changes when a commitment is made, withdrawn, or changed.
To model this, we extend \textbf{DSL} to the Dynamic Deontic Simplicial Logic (\textbf{DDSL}), which introduces action modalities $[\alpha_G]\varphi$ expressing that $\varphi$ holds after the agents in group $G$ jointly establish their commitments as arranged in $\alpha_G$. The underlying update mechanism adapts the product update of Dynamic Epistemic Logic (DEL) \cite{plaza,Baltag1998,ditmarsch2007dynamic,Gob_DEL} to the simplicial setting, filtering the simplicial complex to retain only those faces compatible with agents' new commitments. This gives DDSL the ability to reason about commitment creation, cancellation, and the restructuring of group normative relationships over time.

We prove soundness and completeness of both logics: the static case is proved via the canonical simplicial model construction \cite{rojo} and the dynamic case via reduction axioms to the former.
We also provide formalizations of examples both in the static and dynamic cases.

\textbf{Motivating example:}
Alex ($a$) wants to invite Baldo ($b$) and Carla ($c$) for a small party $p$. To communicate, they mainly use a group chat. We present five different scenarios:

    \begin{itemize}
    
        \item  \textit{Version 1:} $b$ says that he will participate, but $c$ says she cannot. This means that $a$ and $b$ have a joint commitment to come, while $c$ has a joint commitment to not come.\footnote{We argue that there is a commitment of agent $c$ in \textit{not} coming to the party if $c$ said so. If $c$ decides to come to the party after all, $c$ cannot expect there to be enough room, for example.}

        \item  \textit{Version 2:} $b$ says that he will participate, but $c$ does not reply. This means that $a$ and $b$ have a joint commitment to come, while $c$ has no joint commitment to either option.

        \item \textit{Version 3:} Instead of using the group chat, each pair communicates separately, and agents reveal that they will join the party. As a result, each agent has only a pairwise commitment to each other but no joint commitment to the whole group.

        \item  \textit{Version 4:} $b$ says that he will participate, but only if $c$ also participate, and $c$ states she will participate only if $b$ does. In case $b$ (resp. $c$) will not participate, $c$ (resp. $b$) will also not participate. $a$, being the host, will participate anyway. This means that either $b$ and $c$ are both committed to come, or that $b$ and $c$ are both not committed to come, but in no case they have different commitments.

        \item \textit{Version 5:} $b$ says that he will participate, but only if $c$ also participate, and the same holds for $c$ w.r.t. $b$.
        Then $b$ privately communicates to $a$ that $b$ will not participate if $c$ does not let them know. The result is similar to the above, but $b$ also has a private commitment towards $a$ of not coming in case $c$ does not let them know.

        \item \textit{Version 6:} $c$ replies positively to the invitation, while $b$ tells them that he will let them know. In this version, $a$ and $c$ are committed to come, no matter what $b$ does. We interpret the `let you know' as $b$ being committed to both options, one in each possible configuration, and is only a matter of time before $b$ decides to come or not, realizing a single configuration.

    \end{itemize}

Further dynamic multi-agent examples are provided in later sections.

\textbf{Overview:} \Cref{sec:DSL} introduces the static Deontic Simplicial Logic \textbf{DSL} and the formalization of the motivating example using that framework. Its soundness and completeness are proved in \Cref{sec:SC1}. Some properties of deontic simplicial models are provided in \Cref{sec:prop}. \Cref{sec:DDSL} extends the static \textbf{DSL} logic to the dynamic \textbf{DDSL}, 
and provides further examples involving (joint) actions.
Its soundness and completeness are proved in \Cref{sec:SC2}. Some conclusions are provided in \Cref{sec:concl}.

\section{Deontic Logic for Simplicial Complexes}
\label{sec:DSL}

We consider a finite set of agents, $\mathcal{A}=\{\ a,b,\ldots\}$ such that $\vert \mathcal{A} \vert = n+1$ for $n \in \mathbb{N}$, and a set $ P = \bigsqcup_{a \in \mathcal{A}} P_a$ of propositional variables where each $P_a$ is a countable set of propositional variables indexed by an agent $a \in \mathcal{A}$\footnote{In line with the standard combinatorial topology, the number $|\mathcal{A}|$ of agents is $ n + 1$ in order to make the dimension of the simplicial complexes' equal to $n$.}. Furthermore, we assume that the agent-indexed propositional sets are pairwise disjoint, i.e. if $a \neq b \in \mathcal{A}$, then $P_a \cap P_{b} = \varnothing$.

Given the deontic nature of our framework, our propositions are used to indicate agent's adherence to a certain plan or action for which there might be some (joint) commitment.
For example, if  $p$ stands for `meeting at 15:00', then $p_a$ stands for `meeting at 15:00 for agent $a$', $p_b$ is `meeting at 15:00 for agent $b$' and so on.

We begin with the definition of a simplicial model:

\begin{definition}[Simplicial model \cite{Deadoralive}]
\label{def:SimMod}
A simplicial model $\mathcal{C}$ is a triple $(C, \chi, l)$ consisting of:

\begin{itemize}
    \item A (simplicial) complex $C \ne \varnothing$ is a collection of simplices that are non-empty finite subsets of a given set $V$ of vertices such that $C$ is downward closed (i.e. $X \in C$ and $\varnothing \ne Y \subseteq X$ imply $Y \in C$). Simplices represent state of affairs in which agents hold mutual commitments to each other.
    \item Vertices represent local states of agents, with a chromatic map $\chi: V \rightarrow \mathcal{A}$ assigning each vertex to one of the agents in such a way that each agent has at most one vertex per simplex, i.e. $\chi(v) = \chi(u)$ for some $v, u \in X \in C$ implies that $v = u$. For $X \in C$, we define $\chi(X) = \{\chi(v) | v \in X\}$ to be the set of agents in simplex $X$.
    \item A valuation $l : V \rightarrow 2^P$ assigns to each vertex which propositional variables of the vertex’s owner are true in it, i.e. $l(v) \subseteq P_a$ whenever $\chi(v) = a$. Variables from $P_a \setminus l(v)$ are false in vertex $v$, whereas variables from $P \setminus P_a$ do not belong to agent $a$ and cannot be evaluated in $a$’s vertex $v$. The set of variables that are true in a simplex $X \in C$ is given by $l(X):= \bigsqcup_{v \in X}l(v)$
    
\end{itemize}
\end{definition}

If $Y \subseteq X$ for $X, Y \in C$, we say that $Y$ is a \textit{face} of $X$. Since each simplex is a face of
itself, we use ‘simplex’ and ‘face’ interchangeably. A face $X$ is a \textit{facet} if it is a maximal
simplex in $C$, i.e., $Y \in C$ and $Y \supseteq X$ imply $Y = X$.
The set of facets is denoted $\mathcal{F}(C)$. The dimension of simplex $X$, written $dim(X) = |X| - 1$, e.g., vertices are of dimension $0$, edges are of dimension $1$, etc. The dimension of a simplicial complex (model) is the largest dimension of its facets. A simplicial complex is \textit{pure} if all facets have dimension $n$, i.e. contain vertices for all agents. Otherwise it is \textit{impure}.
A pointed simplicial model is a pair $(\mathcal{C}, X)$ where $X \in C$.
In the following, we focus on (but not limited to) impure simplicial complexes, and we use the term configuration to refer to the simplices of maximal dimension representing possible (joint) commitments of agents.
Since we adopt a deontic interpretation of simplicial complexes, we sometimes refer to them as \textit{deontic simplicial complexes}.

We can now introduce the elements of our deontic logic on simplicial complexes.
In contrast with the standard epistemic interpretation of simplicial models, which focuses on knowledge of single agents, we use a group modality as our primary modality, which is meant to capture the mutual commitment of agents in a group $G \subseteq \agents$. We start by defining the language for the deontic simplicial logic \textbf{DSL}:

\begin{definition}[Stratified language $\mathcal{L}^{DSL}$]
\label{def:langL}
For $G \subseteq \agents$ (possibly empty), the fragment $\mathcal{L}^{DSL}_G$ is defined by mutual recursion on $G$:
\[
\varphi ::= \top \ \mid\ p_a\ (a\in G) \ \mid\ \neg\varphi \ \mid\ (\varphi\wedge\varphi) \ \mid\ \mathbf{D}_H\varphi\ \ (\varnothing\neq H\subseteq G,\ \varphi\in\mathcal{L}^{DSL}_H).
\]
We set $\bot := \neg\top$, and define $\rightarrow,\vee$ as usual. The \emph{full language} is $\mathcal{L}^{DSL} := \mathcal{L}^{DSL}_\agents$. A formula $\varphi \in \mathcal{L}^{DSL}_G$ is called \emph{$G$-stratified}. 

\end{definition}

The novel modality $\mathbf{D}_G \varphi$ reads as `agents in group $G \subseteq \mathcal{A}$ have mutual (or joint) commitment towards $\varphi$'. 
For example, interpreting $\varphi$ as `meeting at 15:00', $\mathbf{D}_{\{a,b\}} \varphi$ is read as `$a$ and $b$ have mutual commitment to meet at 15:00'. Following \cite{Gilbert1989}, we intend this commitment to be explicit and known to the agents, as resulting from some sort of (previous) communication, making our notion of commitment deeply linked to a pre-existing epistemic background, upon which commitments are built. This is not to be seen as a bug, but rather a feature: group normativity is based on a common understanding of the norms, rules and other normative constraints \cite{Tuomela2007,Bicchieri_2005}.

This means that when such epistemic background is missing, joint commitments are affected as well. Consider the case where agent $a$ and $c$ have a joint commitment towards $p$ because of a previous agreement. Agent $b$ might as well be committed to some proposition, but since there was no joint communication, $b$ has no \textit{joint} commitment towards it.\footnote{An example of this situation is offered in \autoref{fig:disa} (right).}

The $G-$stratification restricts the scope of formulas that can be present within the modality. Its informal reading is that a group of agents $H$ cannot be committed to formulas involving agents of other groups $G \supset H$. In other words, a group commitment is made by the commitments of its participants, and not of those of other agents.

\begin{lemma}[Monotonicity of stratification]
\label{lem:strat-mono}
If $H \subseteq G$, then $\mathcal{L}^{DSL}_H \subseteq \mathcal{L}^{DSL}_G$.
\end{lemma}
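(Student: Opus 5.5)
We fix $H \subseteq G$ and show by structural induction on $\varphi$ that every $\varphi \in \mathcal{L}^{DSL}_H$ also belongs to $\mathcal{L}^{DSL}_G$. Formally, we prove the statement ``for all $\varphi$, if $\varphi \in \mathcal{L}^{DSL}_H$ then $\varphi \in \mathcal{L}^{DSL}_G$'' by induction on the construction of $\varphi$ as a formula of $\mathcal{L}^{DSL}_H$. This is legitimate because the clauses of \Cref{def:langL} generate $\mathcal{L}^{DSL}_H$ as the least set closed under them. The mutual recursion on groups causes no circularity: the clause for $\mathbf{D}_K$ only refers to the fragment $\mathcal{L}^{DSL}_K$ of a smaller or equal group, and that fragment is treated as already given.

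The base cases are immediate. First, $\top$ belongs to every fragment. Second, if $p_a \in \mathcal{L}^{DSL}_H$ then $a \in H \subseteq G$, so $p_a \in \mathcal{L}^{DSL}_G$. For the Boolean cases, suppose $\varphi = \neg\psi$ or $\varphi = (\psi \wedge \chi)$ with $\psi, \chi \in \mathcal{L}^{DSL}_H$. The induction hypothesis places $\psi$ and $\chi$ in $\mathcal{L}^{DSL}_G$, and closure of $\mathcal{L}^{DSL}_G$ under $\neg$ and $\wedge$ finishes these cases.

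The only case with content is the modal one, and it is where the shape of the grammar matters. Suppose $\varphi = \mathbf{D}_K\psi$ was formed in $\mathcal{L}^{DSL}_H$, so that $\varnothing \neq K \subseteq H$ and $\psi \in \mathcal{L}^{DSL}_K$. We do not need the induction hypothesis on $\psi$ here. Transitivity gives $\varnothing \neq K \subseteq G$, and the side condition $\psi \in \mathcal{L}^{DSL}_K$ refers only to $K$, not to the ambient group. Hence the very same clause of \Cref{def:langL}, applied in $\mathcal{L}^{DSL}_G$, yields $\mathbf{D}_K\psi \in \mathcal{L}^{DSL}_G$.

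The main point to be careful about is therefore just the justification of the induction principle in the presence of the mutual recursion. The modal case reduces to transitivity of $\subseteq$, so no real obstacle arises.
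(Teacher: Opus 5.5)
Your proof is correct and matches the paper's own argument: structural induction on $\varphi\in\mathcal{L}^{DSL}_H$, with the $\top$, $p_a$ and Boolean cases immediate, and the modal case $\mathbf{D}_K\psi$ closed by $K\subseteq H\subseteq G$ and reuse of the same clause with the same $K$. Your extra paragraph on why induction is legitimate under the mutual recursion is a harmless elaboration that the paper leaves implicit.
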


\begin{proof}
By induction on $\varphi \in \mathcal{L}^{DSL}_H$. If $\varphi=\top$, then $\varphi\in\mathcal{L}^{DSL}_G$ trivially, by the same clause. If $\varphi=p_a$ with $a\in H$, then $a\in H\subseteq G$, so $\varphi\in\mathcal{L}^{DSL}_G$. The cases $\neg\psi$, $(\psi_1\wedge\psi_2)$ follow immediately from the induction hypothesis. If $\varphi=\mathbf{D}_K\psi$ with $\varnothing\neq K\subseteq H$ and $\psi\in\mathcal{L}^{DSL}_K$, then $K\subseteq H\subseteq G$, so $\varphi\in\mathcal{L}^{DSL}_G$ by the same clause, taking the same $K$.
\end{proof}

\begin{definition}[Semantics]
\label{def:semL}
For a group $G \subseteq \mathcal{A}$:
   \begin{itemize}
   \item $\mathcal{C},X \models \top \quad \text{always}$;
    \item $\mathcal{C},X \models p_a \text{ iff } 
    %a \in \chi(X) \text{ and } 
    a \in \chi(X) \text{ and } p_a \in l(X)$, 
    \item $\mathcal{C},X \models \neg \varphi \text{ iff } \mathcal{C},X \not\models \varphi $
    \item $\mathcal{C},X \models (\varphi \wedge \psi) \text{ iff } \mathcal{C},X \models \varphi \text{ and } \mathcal{C},X \models \psi $
    
    \item $\mathcal{C},X \models \mathbf{D}_G \varphi \text{ iff } G \subseteq \chi (X) \text{ and }  \mathcal{C}, Y \models \varphi \text{ for all } Y \in C \text{ s.t. }  
    G \subseteq \chi (X \cap Y)$

\end{itemize}

We say that a formula $\varphi$ is valid, written $\mathcal{C} \models \varphi$, iff for all $X \in C$, $\mathcal{C}, X \models \varphi$.
\end{definition} 

Among the well-known group modalities in epistemic logic, such as mutual knowledge, distributed knowledge and common knowledge \cite{van2015handbook}, 
the semantics of our joint commitment operator is very similar to the semantics of distributed knowledge
\cite{know_in_simpl,Deadoralive,ericDK}. 
In the epistemic framework, distributed knowledge is the knowledge obtained by joining the local states of all agents.
This `centralized' feature, treating the group as a `higher-order agent', makes it the best candidate for interpreting joint commitment, as we intend it as a (shared) obligation towards each agent in the group and also to the group itself.
The crucial difference from the standard account is that we require $G \subseteq \chi(X)$, as agents' presence is not a given in impure simplicial complexes.

From a semantic perspective, the mutual commitment $\mathbf{D}_G \varphi$ is true if $\varphi$ is true in all simplices that share a $G$-colored face with it. This means that when evaluating a formula for a pair, such as $\mathcal{C},X \models \mathbf{D}_{\{a,b\}}\varphi$, we will check all the adjacent faces where $\varphi$ holds along the edges of $X$ that are \{a,b\}-colored.

As an example, consider the complex $\mathcal{C}$ in \autoref{fig:EX6} (left). We want to check whether the formula $\mathcal{C},X \models \mathbf{D}_{\{a,b,c\}}(p_a \wedge p_b \wedge p_c)$ is true. Firstly we check that $\{a,b,c\} \subseteq \chi(X)$, which holds. Secondly, we need to look for all the adjacent faces $Y$ such that
$\{a,b,c\} \subseteq \chi(Y\cap X)$ and $C,Y \models p_a \wedge p_b \wedge p_c$.
In this example, the only face that satisfies both conditions is $X$ itself, making the formula true.

A single agent version can be obtained by using singleton groups: $\mathcal{C},X \models \mathbf{D}_{\{a\}} \varphi \text{ iff } a \in \chi(X) \text{ and } \mathcal{C}, Y \models \varphi \text{ for all } Y \text{ s.t. } a \in \chi (X \cap Y)$, which semantically corresponds to the definition of knowledge in the epistemic framework.
In our interpretation, $\mathbf{D}_{\{a\}} \varphi$ represents the commitment that an agent has to herself.

Because of the distributed nature of our operator, agents might be jointly committed to state of affairs, without having a direct single-agent commitment towards that state of affairs. An example is shown in \autoref{fig:bigtri}. $C,X \not \models D_{\{a\}} (p_a \wedge p_b)$, as $C,Y \models \neg p_b$, and $C,X \not \models D_{\{b\}} (p_a \wedge p_b)$, as $C,W \models \neg p_a$. However, $C,X \models D_{\{a,b\}} (p_a \wedge p_b)$, as $(p_a \wedge p_b)$ holds both in $X$ and $Z$.

\begin{figure}[t]
    \centering
    \resizebox{7cm}{!}{%
\begin{tikzpicture}
    \begin{scope}[xshift=-6cm,>=stealth, yshift=3cm]
    \node[circle,draw,minimum size=10mm, fill=red] (1) {$p_a$};
    \node[circle,draw,minimum size=10mm, fill=blue!50] (2) [right = 2cm  of 1]  {$p_b$};
    \node (g) [right = 1cm  of 1]  {};
    \node[circle,draw,minimum size=10mm, fill=green] (3) [above = 2cm  of g] {$\neg p_c$}; 
    \node (h) [above = 0.5cm  of g]  {$X$};
    \node[circle,draw,minimum size=10mm, fill=green] (4) [below = 2cm  of g] {$p_c$};
    \node (r) [right = 2cm  of 2]  {};
    \node[circle,draw,minimum size=10mm, fill=red] (5) [below= 1cm  of r]  {$\neg p_a$};
    \node[circle,draw,minimum size=10mm, fill=green] (6) [above = 1cm  of r]  {$\neg p_c$};
     \node (l) [left = 2cm  of 1]  {};
    \node[circle,draw,minimum size=10mm, fill=green] (7) [below= 1cm  of l]  {$\neg p_c$};
    \node[circle,draw,minimum size=10mm, fill=blue!50] (8) [above = 1cm  of l]  {$\neg p_b$};
    \node (h'') [right = 1cm  of 2]  {$W$};
    \node (h') [below = 0.5cm  of g]  {$Z$};
    \node (h''') [left = 1cm  of 1]  {$Y$};

    \node (s) [below = 0.3cm  of h']  {};
    %\node (h'') [left = 1cm  of s]  {$W$};
    %\node (h''') [right = 1cm  of s]  {$Z$};
    \node (f) [above = 1cm  of 1]  {\Large{$\mathcal{C}$}};

    \path[]
        (1) edge node[above] {} (2)
        (1) edge node[above] {} (3)
        (3) edge node[above] {} (2)
        (4) edge node[above] {} (2)
        (1) edge node[above] {} (4)
        %(4) edge node[above] {} (5)
        %(4) edge node[above] {} (6)
        (5) edge node[above] {} (6)
        (2) edge node[above] {} (5)
        (2) edge node[above] {} (6)
        (7) edge node[above] {} (8)
        (1) edge node[above] {} (7)
        (1) edge node[above] {} (8)
        ;

\begin{pgfonlayer}{background}
\draw[fill=black!10,line width=0.1mm,line cap=round,line join=round] (1.center)--(2.center)--(3.center)--cycle;
\draw[fill=black!10,line width=0.1mm,line cap=round,line join=round] (1.center)--(2.center)--(4.center)--cycle;
\draw[fill=black!10,line width=0.1mm,line cap=round,line join=round] (5.center)--(2.center)--(6.center)--cycle;
\draw[fill=black!10,line width=0.1mm,line cap=round,line join=round] (7.center)--(1.center)--(8.center)--cycle;
\end{pgfonlayer}

    \end{scope}
\end{tikzpicture}
    }
    \caption{Simplicial model $\mathcal{C}$ with four possible configurations $Y,X,Z,W$.}
    \label{fig:bigtri}
\end{figure}

Importantly, the fact that $\mathbf{D}_G \varphi$ is true only if the agents in $G$ are actually present in the facet of evaluation makes sure that agents which are not present in the facet are not jointly committed to anything, distinguishing between an agent having a negative joint commitment (such as saying `I will not come to your party') and no joint commitment at all (such as not answering the phone at all).

Before proceeding, we establish two structural properties of simplicial models that will underpin the completeness proofs of \Cref{sec:SC1,sec:SC2}:

\begin{lemma}[Chromatic transitivity]
\label{lem:chromtrans}
For any simplicial model $\mathcal{C}=(C,\chi,l)$, any $\varnothing\ne G\subseteq\agents$, and any $X,Y,Z\in C$: if $G\subseteq\chi(X\cap Y)$ and $G\subseteq\chi(Y\cap Z)$, then $G\subseteq\chi(X\cap Z)$.
\end{lemma}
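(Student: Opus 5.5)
The plan is to argue agent by agent, reducing the claim to the uniqueness of colors within a simplex. Fix an arbitrary agent $a\in G$; it suffices to show $a\in\chi(X\cap Z)$. Since $a\in G\subseteq\chi(X\cap Y)$, there is a vertex $u\in X\cap Y$ with $\chi(u)=a$. Likewise, since $a\in\chi(Y\cap Z)$, there is a vertex $w\in Y\cap Z$ with $\chi(w)=a$.

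The key step is to identify $u$ and $w$. Both lie in the single simplex $Y$ and carry the same color $a$. The chromatic condition in \Cref{def:SimMod} states that each agent has at most one vertex per simplex, so $u=w$. Consequently this vertex lies in $X$ (as $u\in X$) and in $Z$ (as $w\in Z$), that is, $u\in X\cap Z$, and therefore $a=\chi(u)\in\chi(X\cap Z)$. Since $a\in G$ was arbitrary, we conclude $G\subseteq\chi(X\cap Z)$.

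There is essentially no obstacle. The only point needing care is to invoke the chromatic uniqueness condition on the \emph{middle} simplex $Y$, since that is where both witnesses live; applying it to $X$ or $Z$ would not work. Nonemptiness of $G$ is not used beyond making the statement meaningful. Note also that $X\cap Z$ need not be checked to be a simplex: $\chi$ is defined on arbitrary sets of vertices, and in any case $X\cap Z$ is a face of $X$ whenever it is nonempty, which it is because $G\neq\varnothing$.
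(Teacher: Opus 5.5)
Your proof is correct and follows essentially the same route as the paper's proof: both fix $a\in G$ and use chromatic uniqueness in the middle simplex $Y$ to identify the $a$-vertex shared with $X$ and the one shared with $Z$. That vertex then lies in $X\cap Z$, so $a\in\chi(X\cap Z)$.
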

\begin{proof}
Fix $a\in G$. From $a\in\chi(X\cap Y)$: $Y$'s (unique, by chromaticity) $a$-vertex $v_a$ satisfies $v_a\in X$. From $a\in\chi(Y\cap Z)$: $Y$'s $a$-vertex, the same $v_a$, by uniqueness, satisfies $v_a\in Z$. Hence $v_a\in X\cap Z$, so $a\in\chi(X\cap Z)$. As $a\in G$ was arbitrary, $G\subseteq\chi(X\cap Z)$.
\end{proof}

\begin{lemma}[Group Invariance]
\label{lem:groupinv}
Let $\mathcal C=(C,\chi,l)$ be \emph{any} simplicial model, $\varnothing\ne G\subseteq\agents$, and $\psi\in\mathcal L^{DSL}_G$. For any $X,X'\in C$ with $G\subseteq\chi(X\cap X')$:
\[
\mathcal C,X\models\psi \iff \mathcal C,X'\models\psi.
\]
\end{lemma}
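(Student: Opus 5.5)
We argue by structural induction on $\psi\in\mathcal L^{DSL}_G$, with $G$ fixed but arbitrary (the induction in the modal clause switches to a subgroup $H\subseteq G$, so we take the claim for all nonempty $G$ simultaneously and induct on the formula). Assume throughout that $G\subseteq\chi(X\cap X')$. By symmetry of the hypothesis it suffices to prove one direction in each case, or to show both sides are equivalent to a condition symmetric in $X,X'$.

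The propositional cases are routine. For $\top$ there is nothing to show. For $p_a$ with $a\in G$: since $a\in\chi(X\cap X')$, there is a vertex $v\in X\cap X'$ with $\chi(v)=a$. By chromaticity, $v$ is the unique $a$-vertex of both $X$ and $X'$. Because $l(v)\subseteq P_a$ and the agent sets of variables are disjoint, $p_a\in l(X)$ iff $p_a\in l(v)$ iff $p_a\in l(X')$. Moreover $a\in\chi(X)$ and $a\in\chi(X')$ both hold, so the two truth conditions coincide. Negation and conjunction follow immediately from the induction hypothesis applied to the same $G$, $X$ and $X'$.

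The main case is $\psi=\mathbf D_H\varphi$ with $\varnothing\ne H\subseteq G$ and $\varphi\in\mathcal L^{DSL}_H$. Here we do not even need the induction hypothesis; what we need is to show that the two modal conditions coincide.

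First, $H\subseteq G\subseteq\chi(X\cap X')$ gives $H\subseteq\chi(X)$ and $H\subseteq\chi(X')$, so the presence conditions hold on both sides.

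Second, we claim that for any $Y\in C$,
\[
H\subseteq\chi(X\cap Y)\iff H\subseteq\chi(X'\cap Y).
\]
Suppose $H\subseteq\chi(X\cap Y)$. Since $H\subseteq\chi(X'\cap X)$, Lemma~\ref{lem:chromtrans}, applied to the triple $X',X,Y$, yields $H\subseteq\chi(X'\cap Y)$. The converse holds symmetrically.

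It follows that the set of $Y$ quantified over in the clause for $\mathcal C,X\models\mathbf D_H\varphi$ is exactly the set quantified over for $X'$. Both sides therefore say that $\varphi$ holds at all simplices of the same set, and they are equivalent.

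I expect the modal step to be the only real obstacle, and it is resolved by observing that the monotone restriction $H\subseteq G$ of the stratified language lets us apply chromatic transitivity at $H$. This is precisely why the statement is restricted to $G$-stratified formulas: for a $\mathbf D_K$ with $K\not\subseteq G$ the transfer would fail.
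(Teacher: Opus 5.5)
Your proof is correct and follows essentially the same route as the paper: structural induction, with the atomic case settled by the shared unique $a$-vertex and the disjointness of the $P_a$, and the $\mathbf D_H$ case settled by getting both presence gates from $H\subseteq G\subseteq\chi(X\cap X')$ and showing, via Lemma~\ref{lem:chromtrans}, that the two witness sets coincide, without using the induction hypothesis. You are also slightly more explicit than the paper in checking the presence conjunct $a\in\chi(X)$ and $a\in\chi(X')$ in the atomic case.
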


\begin{proof}
By structural induction on $\psi$ (for whichever $G$ makes $\psi\in\mathcal L^{DSL}_G$ true; a single $\psi$ may lie in several such fragments, and the claim is proved for each).

\textbf{$\psi=\top$:} trivial.

\textbf{$\psi=p_a$}: since $G\subseteq\chi(X\cap X')$, in particular $a\in\chi(X\cap X')$, so $X,X'$ share the same (unique) $a$-vertex $v_a$. Since $P_a\cap P_b=\varnothing$ for $b\ne a$, only $v_a$ can contribute $P_a$-atoms to $l(X)$ or $l(X')$; hence $p_a\in l(X)\iff p_a\in l(v_a)\iff p_a\in l(X')$.

\textbf{$\psi=\neg\chi$, $\psi=\chi_1\wedge\chi_2$:} immediate from the IH.

\textbf{$\psi=\mathbf D_H\chi$}: Since $H\subseteq G\subseteq\chi(X\cap X')$, in particular $H\subseteq\chi(X\cap X')$, so both gates $H\subseteq\chi(X)$, $H\subseteq\chi(X')$ hold. It remains to show the two universally-quantified witness sets coincide:
\[
\{Z\in C: H\subseteq\chi(X\cap Z)\} = \{Z\in C: H\subseteq\chi(X'\cap Z)\}.
\]
Suppose $H\subseteq\chi(X\cap Z)$. Since $H\subseteq\chi(X\cap X')=\chi(X'\cap X)$, \autoref{lem:chromtrans} gives $H\subseteq\chi(X'\cap Z)$. The reverse inclusion is symmetric. Hence
$\mathcal C,X\models\mathbf D_H\chi \iff  \mathcal C,X'\models\mathbf D_H\chi$.

\end{proof}

Group Invariance says that any $G$-stratified formula, which is a formula built only from $G$'s own atoms and $\mathbf D_H$-modalities for $H\subseteq G$, cannot distinguish between two configurations that agree on all of $G$'s vertices, however much else about those configurations differs.

\subsection{\textit{Formalized Motivational Examples}}

With the tools introduced above, we can model our motivational example using simplicial complexes:

\begin{example}[Motivational example (formalized)]
\label{ex:meeting_formal} 

We formalize each version separately:

\begin{itemize}

    \item Version 1 is represented in \autoref{fig:EX1} (left). $a$ and $b$ state they will participate while $c$ will not, making $\mathcal{C},X \models \mathbf{D}_{\{a,b,c\}} (p_a \wedge p_b \wedge \neg p_c)$ true.

    \item Version 2 is represented in \autoref{fig:EX1} (middle). $a$ and $b$ will participate, $c$ did not answer. Thus $\mathcal{C}',X' \models \mathbf{D}_{\{a,b\}} (p_a \wedge p_b)$ holds, but any commitment also involving $c$ will result to be false: $\mathcal{C}',X' \not\models \mathbf{D}_{\{a,b,c\}} (p_a \wedge p_b \wedge p_c)$ and $\mathcal{C}',X' \not\models \mathbf{D}_{\{a,b,c\}} (p_a \wedge p_b \wedge \neg p_c)$, as $c \notin \chi(X')$. 
    In particular, absence of $c$ does not imply a commitment to $\neg p_c$: $\mathcal{C}',X' \models \neg p_c$ since $c \notin \chi(X')$, but $\mathcal{C}',X' \not\models \mathbf{D}_{\{c\}} \neg p_c$. This corresponds to the fact that $c$ will not attend the party, but has no commitment towards $a$ and $b$ about it.

    \item Version 3 is represented in in complex $\mathcal{C}''$ of \autoref{fig:EX1} (right)\footnote{
    In a more `real-world' reading, the agents in \autoref{fig:EX1} (right) may be seen as banks coordinating market-conduct rules, but reaching only pairwise agreements rather than a common strategy.
    }. We use $\overline{AB}$ to indicate the face of dimension one linking vertices labeled with the same lowercase letter. It is easy to check that $\mathcal{C}'', \overline{AB} \models \mathbf{D}_{\{a,b\}} (p_a \wedge p_b)$ (respectively for other pairs), while $\mathcal{C}'', \overline{AB} \not\models \mathbf{D}_{\{a,b,c\}} (p_a \wedge p_b \wedge p_c)$, as $c \notin \chi(\overline{AB})$.

    \item Version 4 is represented in complex $\mathcal{C}$ of \autoref{fig:EX6} (left). Each simplex represents a possible configuration: $Y$ represents the case where $c$ and $b$ do not come to the party in case one of them defects, and $X$ illustrates the scenario where they both come to the party. The two separate commitments are captured by the following formulas: $\mathcal{C},Y \models \mathbf{D}_{\{a,b,c\}} (p_a \wedge \neg p_b \wedge \neg p_c)$ and $\mathcal{C},X \models \mathbf{D}_{\{a,b,c\}} (p_a \wedge p_b \wedge p_c)$. 
    Since $a$ is the host, they will participate anyway: $\mathcal{C},X \models \mathbf{D}_{\{a\}} p_a $ as $a \in \chi(X \cap Y)$.

    \item Version 5 is represented in complex $\mathcal{C}'$ of \autoref{fig:EX6} (center). As above, each simplex represents a possible configuration of their meeting. 
    The difference with respect to the previous version is the addition of face $W'$, representing the result from $b$ telling $a$ that if $c$ is absent ($c \notin \chi(W')$) then $b$ will not come to the party $\neg p_b$.
    
    The fact that $c \notin \chi(W')$ makes $\mathcal{C}',W' \models \mathbf{D}_{\{a,b,c\}} (p_a \wedge p_b \wedge \neg p_c)$ false.
    Furthermore, also $\mathcal{C}', Y' \models \mathbf{D}_{\{a,b\}} (p_a \wedge p_b \wedge \neg p_c)$ is false, because while $\{a,b\} \subseteq \chi(W')$ and $\{a,b\} \subseteq \chi(Y' \cap W')$, $(p_a \wedge p_b \wedge \neg p_c)$ is false in $W'$.

    \item Version 6 is represented in $\mathcal{C}''$ of \autoref{fig:EX6} (right), with two possible configurations, $X''$ where $b$ decides to join and $Y''$ where he doesn't. In both $a$ and $c$ are committed to go to the party: $\mathcal{C},X\models \mathbf{D}_{\{a,c\}}(p_a \wedge p_c)$, as $\{a,c\} \subseteq \chi(X'' \cap Y'')$ and $p_a \wedge p_c$ holds in both $X''$ and $Y''$.
\end{itemize}

\end{example}

\begin{figure}[t]
    \centering
    \resizebox{11.5cm}{!}{%
\begin{tikzpicture}
    \begin{scope}[xshift=-6cm,>=stealth, yshift=3cm]
    \node[circle,draw,minimum size=9mm, fill=red!80] (1) {$p_a$};
    \node[circle,draw,minimum size=9mm, fill=blue!50] (2) [right = 2cm  of 1]  {$p_b$};
    \node (g) [right = 1cm  of 1]  {};
    \node[circle,draw,minimum size=9mm, fill=green] (3) [above = 2cm  of g] {$\neg p_c$}; 
    \node (h) [above = 0.5cm  of g]  {$X$};
    \node (h') [above = 2cm  of 1]  {\Large{$\mathcal{C}$}};

    \path[]
        (1) edge node[above] {} (2)
        (1) edge node[above] {} (3)
        (3) edge node[above] {} (2);

\begin{pgfonlayer}{background}
\draw[fill=black!10,line width=0.1mm,line cap=round,line join=round] (1.center)--(2.center)--(3.center)--cycle;
\end{pgfonlayer}

    \node[circle,draw,minimum size=9mm, fill=red!80] (4) [right = 1.5cm  of 2] {$p_a$};
    \node[circle,draw,minimum size=9mm, fill=blue!50] (5) [right = 2cm  of 4]  {$p_b$};
    \node (g) [right = 1cm  of 4]  {};
    \node (h) [above = 0.3cm  of g]  {$X'$};
    \node (h') [above = 2cm  of 4]  {\Large{$\mathcal{C}'$}};

    \path[]
        (4) edge node[above] {} (5);

    \node[circle,draw,minimum size=9mm, fill=red!80] (6) [right = 1.5cm  of 5] {$p_a$};
    \node[circle,draw,minimum size=9mm, fill=blue!50] (7) [right = 2cm  of 6]  {$p_b$};
    \node (g) [right = 1cm  of 6]  {};
    \node[circle,draw,minimum size=9mm, fill=green] (8) [above = 2cm  of g] {$p_c$}; 
    \node (h') [above = 2cm  of 6]  {\Large{$\mathcal{C}''$}};
    %\node (h) [above = 0.5cm  of g]  {$X$};

    \path[]
        (6) edge node[below] {$\overline{AB}$} (7)
        (6) edge node[above] {} (8)
        (8) edge node[above] {} (7);

    \end{scope}
\end{tikzpicture}
    }
    \caption[Simplicial models representing different examples from \autoref{ex:meeting_formal}, from left to right: versions 1, 2 and 3.]{Pure simplicial model $\mathcal{C}$ where every agent has a commitment to each other and to the group (left).
    Impure simplicial model $\mathcal{C}'$ where only $a$ and $b$ have mutual commitment to each other (middle). Impure simplicial model $\mathcal{C}''$ where each agent has a pairwise commitment with one another, but no joint commitment to the group (right).}
    \label{fig:EX1}
\end{figure}

\begin{figure}[t]
    \centering
    \resizebox{13.5cm}{!}{%
\begin{tikzpicture}
\begin{scope}[xshift=-6cm,>=stealth, yshift=3cm]
    \node[circle,draw,minimum size=9mm, fill=red!80] (1) {$p_a$};
    \node[circle,draw,minimum size=9mm, fill=blue!50] (2) [right = 2cm  of 1]  {$p_b$};
    \node (g) [right = 1cm  of 1]  {};
    \node[circle,draw,minimum size=9mm, fill=green] (3) [above = 2cm  of g] {$p_c$}; 
    \node (h) [above = 0.5cm  of g]  {$X$};
    \node[circle,draw,minimum size=9mm, fill=blue!50] (4) [left = 2cm  of 1]  {$\neg p_b$};
    \node (f) [left = 0.8cm  of 1]  {};
    \node (h') [above = 0.5cm  of f]  {$Y$};
    \node[circle,draw,minimum size=9mm, fill=green] (5) [left = 2cm  of 3] {$\neg p_c$}; 
    \node (h'') [above = 2cm  of 4]  {\Large{$\mathcal{C}$}};
    
    \path[]
        (1) edge node[above] {} (2)
        (1) edge node[above] {} (3)
        (1) edge node[above] {} (4)
        (1) edge node[above] {} (5)
        (5) edge node[above] {} (4)
        (3) edge node[above] {} (2);

    \begin{pgfonlayer}{background}
\draw[fill=black!10,line width=0.1mm,line cap=round,line join=round] (1.center)--(2.center)--(3.center)--cycle;
\draw[fill=black!10,line width=0.1mm,line cap=round,line join=round] (5.center)--(4.center)--(1.center)--cycle;
\end{pgfonlayer}

    \node[circle,draw,minimum size=9mm, fill=red!80] (6) [right = 4cm  of 2]{$p_a$};
    \node[circle,draw,minimum size=9mm, fill=blue!50] (7) [right = 2cm  of 6]  {$p_b$};
    \node (g) [right = 1cm  of 6]  {};
    \node[circle,draw,minimum size=9mm, fill=green] (8) [above = 2cm  of g] {$p_c$}; 
    \node (h) [above = 0.5cm  of g]  {$X'$};
    \node[circle,draw,minimum size=9mm, fill=blue!50] (9) [left = 2cm  of 6]  {$\neg p_b$};
    \node (f) [left = 0.8cm  of 6]  {};
    \node[circle,draw,minimum size=9mm, fill=green] (10) [below = 2cm  of f] {$\neg p_c$};
    \node (h') [above = 0.2cm  of f]  {$W'$};
    \node (h'1) [below = 0.4cm  of f]  {$Y'$};
    \node (h'') [above = 2cm  of 9]  {\Large{$\mathcal{C}'$}};
    
    \path[]
        (6) edge node[above] {} (7)
        (6) edge node[above] {} (8)
        (6) edge node[above] {} (9)
        (8) edge node[above] {} (7)
        (6) edge node[above] {} (10)
        (9) edge node[above] {} (10)
        ; 

        \begin{pgfonlayer}{background}
\draw[fill=black!10,line width=0.1mm,line cap=round,line join=round] (6.center)--(7.center)--(8.center)--cycle;
\draw[fill=black!10,line width=0.1mm,line cap=round,line join=round] (6.center)--(9.center)--(10.center)--cycle;
\end{pgfonlayer}

 \node[circle,draw,minimum size=9mm, fill=red!80] (1a) [right = 2cm  of 7] {$p_a$};
    \node[circle,draw,minimum size=9mm, fill=blue!50] (2a) [right = 2cm  of 1a]  {$p_b$};
    \node (ga) [right = 1cm  of 1a]  {};
    \node[circle,draw,minimum size=9mm, fill=green] (3a) [above = 2cm  of ga] {$p_c$}; 
    \node (ha) [above = 0.5cm  of ga]  {$X''$};
    \node (g'a) [left = 1.1cm  of 1a]  {};
    \node[circle,draw,minimum size=9mm, fill=blue!50] (4a) [above = 1.2cm  of g'a]  {$\neg p_b$};
    \node (fa) [above = 1cm  of 2a]  {\Large{$\mathcal{C}''$}};
    \node (h'a) [above = 0.6cm  of 1a]  {$Y''$};
    
    \path[]
        (1a) edge node[above] {} (2a)
        (1a) edge node[above] {} (3a)
        (1a) edge node[above] {} (4a)
        (3a) edge node[above] {} (2a)
        (4a) edge node[above] {} (3a);

    \begin{pgfonlayer}{background}
\draw[fill=black!10,line width=0.1mm,line cap=round,line join=round] (1a.center)--(2a.center)--(3a.center)--cycle;
\draw[fill=black!10,line width=0.1mm,line cap=round,line join=round] (1a.center)--(4a.center)--(3a.center)--cycle;
\end{pgfonlayer}

    \end{scope}
\end{tikzpicture}

    }
    \caption{Simplicial models representing different examples from \autoref{ex:meeting_formal}, from left to right: versions 4, 5 and 6.}
    \label{fig:EX6}
\end{figure}

The above examples showed that our framework can model mutual commitment (and its absence) using impure simplicial complexes, even for very nuanced cases.

In the following section we prove soundness and completeness results of our logic via the canonical simplicial model construction \cite{rojo}.

\section{Soundness and Completeness of DSL}
\label{sec:SC1}

We begin by providing the following axiomatization for the \textbf{DSL} logic. For any group $G$, we first define the \emph{presence formula} $E_G := \mathbf{D}_G\top$; since $\top$ holds everywhere, semantically $\mathcal{C},X\models E_G \iff G\subseteq\chi(X)$.

\begin{definition}[Axiom system DSL]
\label{def:AX}
For $\varnothing \neq G, H, K \subseteq \agents$, and formulas restricted to the appropriate stratified fragment:

\begin{itemize}
    \item \textbf{Taut}: Propositional tautologies over $\mathcal{L}^{DSL}$ (with $\top$ primitive)
    \item \textbf{K}: $\mathbf{D}_G (\varphi \rightarrow \psi) \rightarrow (\mathbf{D}_G \varphi \rightarrow \mathbf{D}_G \psi)$, for $\varphi,\psi \in \mathcal{L}^{DSL}_G$
    \item \textbf{T}: $\mathbf{D}_G \varphi \rightarrow \varphi$, for $\varphi \in \mathcal{L}^{DSL}_G$
    \item \textbf{4}: $\mathbf{D}_G \varphi \rightarrow \mathbf{D}_G \mathbf{D}_G \varphi$, for $\varphi \in \mathcal{L}^{DSL}_G$
    \item \textbf{WN}: $p_a \rightarrow \mathbf{D}_{\{a\}} p_a$
    \item \textbf{C}: $\mathbf{D}_G \varphi \wedge \mathbf{D}_G \psi \rightarrow \mathbf{D}_G (\varphi \wedge \psi)$, for $\varphi,\psi \in \mathcal{L}^{DSL}_G$
    \item \textbf{RE}: $E_G \wedge \neg \mathbf{D}_G \varphi \rightarrow \mathbf{D}_G \neg \mathbf{D}_G \varphi$, for $\varphi \in \mathcal{L}^{DSL}_G$
    \item \textbf{EM}: $E_G \rightarrow E_H$, for $H\subseteq G$
    \item \textbf{MON}: $E_G \wedge \mathbf{D}_H \varphi \rightarrow \mathbf{D}_G \varphi$, for $H\subseteq G$, $\varphi\in\mathcal{L}^{DSL}_H$
    \item \textbf{EU}: $E_H \wedge E_K \rightarrow E_{H\cup K}$
    \item \textbf{NE}: $\bigvee_{a\in\agents}E_a$
    \item \textbf{MP}: from $\varphi$ and $\varphi \rightarrow \psi$ infer $\psi$
    \item \textbf{RM}: from $\vdash \varphi \rightarrow \psi$ (with $\varphi,\psi \in \mathcal{L}^{DSL}_G$) infer $\vdash \mathbf{D}_G \varphi \rightarrow \mathbf{D}_G \psi$
\end{itemize}

We write $DSL \vdash \varphi$ to indicate that $\varphi \in \mathcal{L}^{DSL}$ is derivable in DSL.
\end{definition}

A few comments on the axioms are in order.

In classical deontic logic, axiom \textbf{T} is usually rejected because the operator $O$ formalizes what ought to be the case, independently of what actually is. By contrast, $\mathbf{D}_G\varphi$ formalizes what a group has committed itself to realizing: the simplicial model represents configurations in which agents have collectively structured their choices so that $\varphi$ holds. In this sense, our modality is closer to the STIT achievement operator cstit \cite{Horty2001}, which also satisfies \textbf{T}. In our setting, axiom \textbf{T} mirrors the fact that agents are \textit{faithful} to their commitments, in the sense that they hold to the intention of fulfilling it and they will see into it that the state of affairs to which they are committed holds.\footnote{We plan to address unfaithful agents in future work.} In other words, our logic describes state of affairs that ought to hold because agents share a mutual commitment to achieve that state. Importantly, this does not prevent commitments to be modified or broken, as we will see in the dynamic case (\autoref{sec:DDSL}).

One would expect axiom \textbf{5} to hold, as it does for the epistemic interpretation of our operator. However, in our deontic setting it does not hold. As a counterexample, consider \autoref{fig:EX6} (center). $\mathcal{C}',W'\not \models p_c$ as $c \notin \chi(W')$. Thus, $\mathcal{C}',W'\models \neg \mathbf{D}_{\{a,b,c\}}p_c$. By \textbf{5} it should hold that $\mathcal{C}',W'\models \mathbf{D}_{\{a,b,c\}} \neg \mathbf{D}_{\{a,b,c\}}p_c$. However, $c \not\in \chi(W')$ makes the formula false. Instead, we adopt the \emph{relativized} axiom \textbf{RE}, guarded by presence: this restores a sound form of negative introspection, but only among faces where the group $G$ is genuinely present, exactly avoiding the counterexample above (where $c\notin\chi(W')$ blocks the antecedent $E_{\{a,b,c\}}$).

Furthermore, usually group modalities of the `distributed knowledge' kind, like our joint commitment modality, are characterized by a monotonicity axiom (for $G \subseteq  G'$, $\mathbf{D}_G \varphi \rightarrow \mathbf{D}_{G'} \varphi$). However, this does not hold in impure settings. As a counterexample, consider \autoref{fig:EX6} (middle). $\mathcal{C}', W' \models \mathbf{D}_{\{a,b\}} (p_a \wedge \neg p_b)$, but $\mathcal{C}', W' \not\models \mathbf{D}_{\{a,b,c\}} (p_a \wedge \neg p_b)$, as $c \notin \chi(W')$. In its place, axiom \textbf{MON} recovers monotonicity in a relativized form, guarded by the presence of the larger group.

Necessitation ($\varphi \rightarrow \mathbf{D}_G \varphi$) does not hold for arbitrary formulas $\varphi$ in case an agent in $G$ is not present in the face of evaluation. As a counterexample, consider \autoref{fig:EX1} (middle): $\mathcal{C}', X' \models \neg p_c$ since $c \notin \chi(X')$. By the same token, $\mathcal{C}', X' \not\models \mathbf{D}_{\{c\}} \neg p_c$, making $\neg p_c \rightarrow \mathbf{D}_{\{c\}} \neg p_c$ false. Necessitation holds in a weaker version only for atoms (\textbf{WN}), and for literals holds only if the agent is present in the face of evaluation. Axiom \textbf{C} and inference rule \textbf{RM} act as a replacement for the missing necessitation-derived steps for arbitrary formulas: they only ever produce a new $\mathbf{D}_G$-fact from an already-established $\mathbf{D}_G$-fact (\textbf{C}) or an already-established implication used to move between two conditional consequents (\textbf{RM}). Neither manufactures a $\mathbf{D}_G$-fact out of nothing at a configuration where the gate $G\subseteq\chi(X)$ might fail.

Axioms \textbf{EM}, \textbf{MON}, and \textbf{EU} govern how presence and commitment interact across different groups: \textbf{EM} lets presence of a larger group entail presence of any subgroup; \textbf{MON} lets an established subgroup commitment be relabeled at the level of any larger, currently-present group; \textbf{EU} lets presence of two groups \emph{at the same point} combine into presence of their union.
Finally, \textbf{NE} is a purely structural axiom, independent of the deontic reading: since every simplex is by definition a \emph{non-empty} set of vertices (\autoref{def:SimMod}), some agent is always present at any configuration. This fact is not derivable from the other axioms, all of which are conditional on already-established $\mathbf{D}_G$/$E_G$-facts, and its absence would make certain configurations of the canonical model ill-formed, as discussed after \autoref{lem:correctness}.

\begin{theorem}[Soundness]
The axiom system DSL (\autoref{def:AX}) is sound.
\end{theorem}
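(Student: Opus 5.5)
The plan is the usual one. Show that every axiom instance is valid in every simplicial model $\mathcal C=(C,\chi,l)$ at every $X\in C$, and that \textbf{MP} and \textbf{RM} preserve validity; a straightforward induction on the length of derivations then gives the theorem. Two facts will be used throughout. First, $\mathcal C,X\models E_G \iff G\subseteq\chi(X)$, as observed before \autoref{def:AX}. Second, for $\mathbf D_G$ both the gate condition $G\subseteq\chi(X)$ and the accessibility condition $G\subseteq\chi(X\cap Y)$ are fixed by colours alone, so \autoref{lem:chromtrans} will do most of the work.

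\textbf{Taut}, \textbf{MP}, \textbf{K} and \textbf{C} are routine. For \textbf{K} and \textbf{C}, the gate $G\subseteq\chi(X)$ is supplied by the antecedent $\mathbf D_G\varphi$, and the universal clause distributes over $\rightarrow$ and $\wedge$. \textbf{EM} and \textbf{EU} follow from the characterisation of $E_G$ together with elementary set theory. \textbf{NE} holds because every simplex $X$ is non-empty, so some $a\in\chi(X)$ and hence $E_a$ holds at $X$. For \textbf{T}, note that $G\subseteq\chi(X)$ gives $G\subseteq\chi(X\cap X)$, so $X$ is among its own witnesses and $\varphi$ holds at $X$. For \textbf{WN}, $p_a$ true at $X$ forces $a\in\chi(X)$ and $p_a\in l(v_a)$, where $v_a$ is the $a$-vertex of $X$. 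Any $Y$ with $a\in\chi(X\cap Y)$ contains the same $v_a$ by chromaticity, so $p_a$ holds at $Y$. For \textbf{MON}, assume $G\subseteq\chi(X)$ and $\mathcal C,X\models\mathbf D_H\varphi$ with $H\subseteq G$. Then any $Y$ with $G\subseteq\chi(X\cap Y)$ also satisfies $H\subseteq\chi(X\cap Y)$, so $\varphi$ holds at $Y$.

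The introspective axioms need transitivity. For \textbf{4}, take $Y$ with $G\subseteq\chi(X\cap Y)$. Then $G\subseteq\chi(Y)$, so the gate holds at $Y$. For any $Z$ with $G\subseteq\chi(Y\cap Z)$, \autoref{lem:chromtrans} gives $G\subseteq\chi(X\cap Z)$, so $\varphi$ holds at $Z$. For \textbf{RE}, assume $G\subseteq\chi(X)$ and $\mathcal C,X\not\models\mathbf D_G\varphi$. Since the gate holds, there is a $Z$ with $G\subseteq\chi(X\cap Z)$ and $\mathcal C,Z\not\models\varphi$. Now take any $Y$ with $G\subseteq\chi(X\cap Y)$. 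Then $G\subseteq\chi(Y)$, and by \autoref{lem:chromtrans} (applied to $Y,X,Z$) $G\subseteq\chi(Y\cap Z)$, so $\mathbf D_G\varphi$ fails at $Y$. Alternatively, \autoref{lem:groupinv} applies directly, since $\neg\mathbf D_G\varphi\in\mathcal L^{DSL}_G$. This step is where the guard $E_G$ is essential, and I regard it as the main point to handle carefully. Without the guard the witness $Z$ need not exist, which is exactly the failure of \textbf{5} noted above.

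Finally, \textbf{RM}. Suppose $\varphi\rightarrow\psi$ is valid in all models, and $\mathcal C,X\models\mathbf D_G\varphi$. The gate $G\subseteq\chi(X)$ carries over unchanged. Every witness $Y$ satisfies $\varphi$, hence $\psi$ by validity, so $\mathbf D_G\psi$ holds at $X$. The one subtlety here is that \textbf{RM} must only be applied to validities of all models, not to truths at particular points; the induction hypothesis (every theorem is valid everywhere) provides exactly this.
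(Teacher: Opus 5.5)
Your proposal is correct and matches the paper's proof essentially axiom by axiom. It uses the same direct semantic checks, \autoref{lem:chromtrans} for \textbf{4} and \textbf{RE}, and validity of the premise for \textbf{RM}, and your explicit induction on derivation length just makes precise what the paper leaves implicit. One imprecision appears only in your closing commentary on \textbf{RE}: in the counterexample to \textbf{5}, the guard $E_G$ matters because it supplies the gate $G\subseteq\chi(X)$ of the outer $\mathbf D_G$ in the consequent, not merely the existence of the witness $Z$; your Group Invariance variant of \textbf{RE} uses it only for that gate.
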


\begin{proof}
We prove each item separately.

\begin{itemize}

    \item \textbf{K:} Towards a contradiction suppose that $\mathcal{C},X \not \models \mathbf{D}_G (\varphi \rightarrow \psi) \rightarrow (\mathbf{D}_G \varphi \rightarrow \mathbf{D}_G \psi)$, meaning that $\mathcal{C},X  \models \mathbf{D}_G \varphi$ and $\mathcal{C},X \not \models \mathbf{D}_G \psi$ while $\mathcal{C},X \models \mathbf{D}_G (\varphi \rightarrow \psi)$. By the latter $\mathcal{C},Y \models (\varphi \rightarrow \psi)$ for all $Y \in C \text{ s.t. } G \subseteq \chi (X \cap Y)$, making $\mathbf{D}_G \psi$ true.

    \item \textbf{T:} Assume $\mathcal{C},X  \models \mathbf{D}_G \varphi$ then $\mathcal{C},Y \models \varphi$ for all $Y \in C \text{ s.t. } G \subseteq \chi (X \cap Y)$, including $X$.

    \item \textbf{4:} Assume $\mathcal{C},X \models  \mathbf{D}_G \varphi$. Then $\mathcal{C},Y \models \varphi$ for all $Y \in C$ s.t.\ $G \subseteq \chi (X \cap Y)$. Take an arbitrary face $Z$ such that $G \subseteq \chi (Y \cap Z)$. Then $\mathcal{C},Z \models \varphi$ and, since each simplex has at most one vertex of each color, $G \subseteq \chi (X \cap Z)$, meaning that $\mathcal{C},Y \models \mathbf{D}_G \varphi$. Since $\mathcal{C},Y \models \mathbf{D}_G\varphi$ holds for every such $Y$ and $X$ is among them, $\mathcal{C},X \models \mathbf{D}_G\mathbf{D}_G\varphi$ follows by the semantics.

    \item \textbf{WN:} Assume $ \mathcal{C}, X\models p_a$. Then there is a vertex $v_a \in X$ with $p_a \in l(v_a)$. For every $Y$ such that $a \in \chi (X \cap Y)$, $v_a \in Y$. Thus, $p_a \in l(Y)$, making $\mathcal{C}, Y \models p_a$. Therefore $\mathcal{C}, X \models \mathbf{D}_{\{a\}} p_a$.

    \item \textbf{C}: Suppose $\mathcal{C},X\models\mathbf{D}_G\varphi$ and $\mathcal{C},X\models\mathbf{D}_G\psi$. Both give $G\subseteq\chi(X)$, and for every $Y$ with $G\subseteq\chi(X\cap Y)$: $\mathcal{C},Y\models\varphi$ and $\mathcal{C},Y\models\psi$, hence $\mathcal{C},Y\models\varphi\wedge\psi$. So $\mathcal{C},X\models\mathbf{D}_G(\varphi\wedge\psi)$.

    \item \textbf{RM}: Suppose $\vdash\varphi\to\psi$, so $\models\varphi\to\psi$. Take any $\mathcal{C},X$ with $\mathcal{C},X\models\mathbf{D}_G\varphi$: then $G \subseteq\chi(X)$, and for every $Y$ with $G\subseteq\chi(X\cap Y)$, $\mathcal{C},Y\models\varphi$, hence $\mathcal{C},Y\models\psi$. The same gate condition, already established, gives $\mathcal{C},X\models\mathbf{D}_G\psi$. So $\mathcal{C},X\models\mathbf{D}_G\varphi\to\mathbf{D}_G\psi$.

    \item \textbf{RE}: Suppose $\mathcal{C},X\models E_G$ (so $G\subseteq\chi(X)$) and $\mathcal{C},X\models\neg\mathbf{D}_G\varphi$. Then there is $Y$ with $G\subseteq\chi(X\cap Y)$ and $\mathcal{C},Y\not\models\varphi$. We show $\mathcal{C},X\models\mathbf{D}_G\neg\mathbf{D}_G\varphi$: the gate already holds, so take an arbitrary $Z$ with $G\subseteq\chi(X\cap Z)$. By \autoref{lem:chromtrans} (chaining through $X$, using $G\subseteq\chi(Y\cap X)=\chi(X\cap Y)$ and $G\subseteq\chi(X\cap Z)$): $G\subseteq\chi(Y\cap Z)$. Combined with $\mathcal{C},Y\not\models\varphi$, $Y$ witnesses $\mathcal{C},Z\not\models\mathbf{D}_G\varphi$. Since $Z$ was arbitrary, $\mathcal{C},X\models\mathbf{D}_G\neg\mathbf{D}_G\varphi$.

    \item \textbf{EM}: $\mathcal{C},X\models E_G \Rightarrow G\subseteq\chi(X)$; since $H\subseteq G$, $H\subseteq\chi(X)$, i.e.\ $\mathcal{C},X\models E_H$.

    \item \textbf{MON}: Assume $\mathcal{C},X\models E_G$ (so $G\subseteq\chi(X)$) and $\mathcal{C},X\models \mathbf{D}_H\varphi$. Take any $Y$ with $G\subseteq\chi(X\cap Y)$; since $H\subseteq G$, $H\subseteq\chi(X\cap Y)$ too, so $\mathcal{C},Y\models\varphi$. As the gate $G\subseteq\chi(X)$ already holds, $\mathcal{C},X\models\mathbf{D}_G\varphi$.

    \item \textbf{EU}: $\mathcal{C},X\models E_H\wedge E_K \Rightarrow H\subseteq\chi(X)$ and $K\subseteq\chi(X)$, both at $X$. Hence $H\cup K\subseteq\chi(X)$, i.e.\ $\mathcal{C},X\models E_{H\cup K}$.

    \item \textbf{NE}: Let $\mathcal{C},X$ be arbitrary. By \autoref{def:SimMod}, $X$ is a non-empty finite subset of $V$, so it contains some vertex $v$; let $a:=\chi(v)$. Then $a\in\chi(X)$, so $\mathcal{C},X\models E_a$, hence $\mathcal{C},X\models\bigvee_{a'\in\agents}E_{a'}$.

\end{itemize}
\end{proof}

We can now move to the completeness proof, which uses the canonical simplicial model construction, a technique introduced in \cite{rojo} for a three-valued semantics, that we adapt here to a two-valued one.

The core challenge is to build the right canonical simplicial model. To identify the set of formulae belonging to a vertex in the canonical simplicial model construction, we use our novel modality limited to singleton groups $\mathbf{D}_{\{a\}}\varphi$. In particular, each $a$-colored vertex in the canonical simplicial model is a set $\mathbf{D}_{\{a\}}\Gamma$ of formulae of the form $\mathbf{D}_{\{a\}}\varphi$ already belonging to $\Gamma$. Each face of the canonical simplicial model is a finite set $X = \{\mathbf{D}_{\{a_1\}}\Gamma, \mathbf{D}_{\{a_2\}}\Gamma, \ldots , \mathbf{D}_{\{a_n\}}\Gamma\}$ of such sets for agents $a_1, a_2, \ldots, a_n$ present in configuration $X$, while the whole canonical simplicial model is a set of these finite sets $X$ (together with the valuation and coloring functions).

The Truth Lemma below has two directions with genuinely different characters. The $(\Rightarrow)$ direction, showing that $\mathbf D_G\psi\in\Gamma$ forces truth at \emph{every} face sharing $G$'s vertices with $X_\Gamma$  is settled directly by Group Invariance (\autoref{lem:groupinv}) once we know $\mathcal C^c$ is a genuine simplicial model (\autoref{lem:correctness}): no witness needs to be built, since Group Invariance already forces agreement across every such face simultaneously. The $(\Leftarrow)$ direction is existential: we must \emph{construct} one face where $\psi$ fails, using \autoref{lem:existenceG} (Group Existence) to obtain a consistent extension via Lindenbaum, and \autoref{cor:groupsymm} (Group vertex-sharing) to confirm this new face genuinely shares $G$'s vertices with $X_\Gamma$. \autoref{lem:presence} is the small bridging fact both directions need, relating syntactic presence ($E_G\in\Gamma$) to geometric presence ($G\subseteq\chi^c(X_\Gamma)$).

\begin{definition}[Maximal Consistent sets]
A set $\Gamma$ of formulae is consistent iff $\Gamma \not \vdash \bot$. Otherwise it is inconsistent. A set $\Gamma$ of formulae is maximal consistent if $\Gamma$ is consistent but no proper superset $\Gamma_0 \supset \Gamma$ is consistent.
\end{definition}

\begin{lemma}[Properties of maximal consistent sets]
\label{lem:propMCS}
Let $\Gamma, \Delta \in \mathcal{G}$ and $a \in \agents$:

\begin{enumerate}
    \item $\Gamma \vdash \varphi \Rightarrow \varphi \in \Gamma$

    \item $\varphi \in \Gamma \Leftrightarrow \neg \varphi \notin \Gamma$

    \item $(\varphi \wedge \psi) \in \Gamma \Leftrightarrow \varphi \in \Gamma \text{ and } \psi \in \Gamma$

    \item $p_a \in \Gamma$ iff $\mathbf{D}_{\{a\}}p_a \in \Gamma$

    \item $\mathbf{D}_{\{a\}}\varphi \in \Gamma \Rightarrow \neg \varphi \notin \Gamma$

    \item If $\mathbf{D}_{\{a\}}\Gamma \subseteq \Delta$ and $\mathbf{D}_{\{a\}}\Gamma \neq \varnothing$, then $\mathbf{D}_{\{a\}}\Delta \subseteq \Gamma$.

\end{enumerate}
\end{lemma}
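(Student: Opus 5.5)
Throughout, $\mathcal{G}$ is read as the set of maximal consistent sets, and $\mathbf{D}_{\{a\}}\Gamma$ as $\{\mathbf{D}_{\{a\}}\varphi : \mathbf{D}_{\{a\}}\varphi\in\Gamma\}$. Items 1--3 are the standard facts about maximal consistent sets and follow from \textbf{Taut}, \textbf{MP} and maximality in the usual way. For 1: if $\Gamma\vdash\varphi$ but $\varphi\notin\Gamma$, then $\Gamma\cup\{\varphi\}$ is inconsistent by maximality, so $\Gamma\vdash\neg\varphi$ and hence $\Gamma\vdash\bot$. Item 2 uses consistency for one direction; for the other, maximality gives $\Gamma\vdash\neg\varphi$ whenever $\varphi\notin\Gamma$, and item 1 then yields $\neg\varphi\in\Gamma$. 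Item 3 applies item 1 to the tautologies $\varphi\wedge\psi\to\varphi$, $\varphi\wedge\psi\to\psi$ and $\varphi\to(\psi\to\varphi\wedge\psi)$.

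For item 4, the left-to-right direction is \textbf{WN} together with item 1. The converse is \textbf{T} instantiated with $G=\{a\}$ and $\varphi=p_a\in\mathcal{L}^{DSL}_{\{a\}}$, again closed under item 1.

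For item 5, assume $\mathbf{D}_{\{a\}}\varphi\in\Gamma$. By \textbf{T} we get $\varphi\in\Gamma$, so $\neg\varphi\notin\Gamma$ by item 2. The stratification side condition of \textbf{T} is met because $\mathbf{D}_{\{a\}}\varphi$ is only well formed when $\varphi\in\mathcal{L}^{DSL}_{\{a\}}$.

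Item 6 is the substantive one, and it plays the role of the usual symmetry property of the canonical relation; I would prove it by contraposition using \textbf{RE}. Suppose $\mathbf{D}_{\{a\}}\Gamma\subseteq\Delta$ and $\mathbf{D}_{\{a\}}\Gamma\neq\varnothing$, and let $\mathbf{D}_{\{a\}}\varphi\in\Delta$. Assume, for contradiction, that $\mathbf{D}_{\{a\}}\varphi\notin\Gamma$, so $\neg\mathbf{D}_{\{a\}}\varphi\in\Gamma$. We need $E_{\{a\}}\in\Gamma$ to apply \textbf{RE}. Nonemptiness supplies some $\mathbf{D}_{\{a\}}\psi\in\Gamma$, and from $\psi\to\top$ we get $\mathbf{D}_{\{a\}}\psi\to\mathbf{D}_{\{a\}}\top$ by \textbf{RM}; hence $E_{\{a\}}\in\Gamma$. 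This use of the nonemptiness hypothesis is exactly why the guarded \textbf{RE} is enough in place of \textbf{5}. \textbf{RE} now gives $\mathbf{D}_{\{a\}}\neg\mathbf{D}_{\{a\}}\varphi\in\Gamma$. This formula is of the form $\mathbf{D}_{\{a\}}(\cdot)$, so it belongs to $\mathbf{D}_{\{a\}}\Gamma\subseteq\Delta$. Applying \textbf{T} in $\Delta$ yields $\neg\mathbf{D}_{\{a\}}\varphi\in\Delta$, which contradicts $\mathbf{D}_{\{a\}}\varphi\in\Delta$ by item 2. Hence $\mathbf{D}_{\{a\}}\varphi\in\Gamma$.

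The only delicate points are the stratification bookkeeping and deriving $E_{\{a\}}$ from the nonemptiness hypothesis. For the former we must check $\neg\mathbf{D}_{\{a\}}\varphi\in\mathcal{L}^{DSL}_{\{a\}}$, so that \textbf{RE} and \textbf{T} apply; this holds because $\varphi\in\mathcal{L}^{DSL}_{\{a\}}$ and the fragment is closed under $\neg$ and $\mathbf{D}_{\{a\}}$. I expect that step to be the main obstacle, though it is routine.
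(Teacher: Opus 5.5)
Your proof is correct and follows the paper's argument. Items 1--5 go through maximality, \textbf{WN} and \textbf{T}. Item 6 is a contradiction via \textbf{RE}, with $E_{\{a\}}$ obtained from nonemptiness through \textbf{RM} and \textbf{T} then applied inside $\Delta$.

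The one small difference is the target in item 6. You refute $\neg\mathbf{D}_{\{a\}}\varphi\in\Gamma$ directly, so you land exactly on the stated inclusion $\mathbf{D}_{\{a\}}\Delta\subseteq\Gamma$. The paper instead starts from $\neg\theta\in\Gamma$, first uses the contrapositive of \textbf{T}, and concludes only $\theta\in\Gamma$; that needs one further appeal to axiom \textbf{4} to yield the inclusion as stated.
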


\begin{proof}
We prove each item separately:

\begin{enumerate}
    \item Since $\Gamma \vdash \varphi$, we have $\vdash \bigvee \Gamma_0 \rightarrow \varphi$ for some finite $\varnothing \not= \Gamma_0 \subseteq \Gamma$. From tautology $\vdash (\bigvee \Gamma_0 \rightarrow \varphi) \rightarrow \neg (\bigvee \Gamma_0 \wedge \neg \varphi)$ by \textbf{MP} it follows $\vdash \neg (\bigvee \Gamma_0 \wedge \neg \varphi)$, making $\varphi \in \Gamma$.

    \item It follows from tautology $\vdash \neg (\varphi \wedge \neg \varphi)$ and $\Gamma$ being maximally consistent.

    \item It follows from previous items.

    \item From left to right it follows from axiom \textbf{WN}, from right to left it follows from axiom \textbf{T}.

    \item By axiom \textbf{T} and \textbf{MP}, $\mathbf{D}_{\{a\}}\varphi \in \Gamma$ gives $\varphi \in \Gamma$. By item 2, $\neg\varphi \notin \Gamma$.

    \item Since $\mathbf{D}_{\{a\}}\Gamma \neq \varnothing$, fix some $\mathbf{D}_{\{a\}}\chi_0 \in \Gamma$. By \textbf{RM} on $\vdash \chi_0 \to \top$ and \textbf{MP}, $\mathbf{D}_{\{a\}}\top = E_{\{a\}} \in \Gamma$.

    Let $\mathbf{D}_{\{a\}}\theta \in \Delta$; we show $\theta \in \Gamma$. Towards a contradiction, suppose $\neg\theta \in \Gamma$. By axiom \textbf{T}, its contrapositive $\neg\theta \to \neg\mathbf{D}_{\{a\}}\theta$ holds. \textbf{MP} gives $\neg\mathbf{D}_{\{a\}}\theta \in \Gamma$.

    By axiom \textbf{RE} with $\varphi := \theta$: $E_{\{a\}} \wedge \neg\mathbf{D}_{\{a\}}\theta \to \mathbf{D}_{\{a\}}\neg\mathbf{D}_{\{a\}}\theta$. Since both conjuncts of the antecedent are in $\Gamma$, deductive closure and \textbf{MP} give $\mathbf{D}_{\{a\}}\neg\mathbf{D}_{\{a\}}\theta \in \Gamma$.

    Since $\mathbf{D}_{\{a\}}\Gamma \subseteq \Delta$ and $\mathbf{D}_{\{a\}}\neg\mathbf{D}_{\{a\}}\theta$ is of the form $\mathbf{D}_{\{a\}}(\cdot)$ and belongs to $\Gamma$, it belongs to $\mathbf{D}_{\{a\}}\Gamma$, hence in $\Delta$: $\neg\mathbf{D}_{\{a\}}\theta \in \Delta$, contradicting $\mathbf{D}_{\{a\}}\theta\in\Delta$ and the consistency of $\Delta$. Hence $\theta \in \Gamma$.
\end{enumerate}
\end{proof}

\begin{corollary}
\label{cor:vertexeq}
If $\mathbf{D}_{\{a\}}\Gamma \subseteq \Delta$ and $\mathbf{D}_{\{a\}}\Gamma \neq \varnothing$, then $\mathbf{D}_{\{a\}}\Gamma = \mathbf{D}_{\{a\}}\Delta$.
\end{corollary}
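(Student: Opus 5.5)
The corollary follows by combining item 6 of \autoref{lem:propMCS} with its own hypothesis. We show both inclusions between $\mathbf{D}_{\{a\}}\Gamma$ and $\mathbf{D}_{\{a\}}\Delta$, reading $\mathbf{D}_{\{a\}}\Sigma$ as the set of formulas of the form $\mathbf{D}_{\{a\}}\varphi$ lying in $\Sigma$.

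For $\mathbf{D}_{\{a\}}\Gamma \subseteq \mathbf{D}_{\{a\}}\Delta$, take $\mathbf{D}_{\{a\}}\varphi \in \Gamma$. By the hypothesis $\mathbf{D}_{\{a\}}\Gamma \subseteq \Delta$ it lies in $\Delta$. Since it has the form $\mathbf{D}_{\{a\}}(\cdot)$, it therefore belongs to $\mathbf{D}_{\{a\}}\Delta$.

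For the converse inclusion, take $\mathbf{D}_{\{a\}}\theta \in \Delta$. We must show that $\mathbf{D}_{\{a\}}\theta \in \Gamma$ itself, not merely $\theta\in\Gamma$, which is all item 6 states literally. The trick is to apply item 6 to the formula $\mathbf{D}_{\{a\}}\theta$ instead of to $\theta$:
\begin{itemize}
\item By axiom \textbf{4} and \textbf{MP} (via item 1 of \autoref{lem:propMCS}), $\mathbf{D}_{\{a\}}\theta \in \Delta$ gives $\mathbf{D}_{\{a\}}\mathbf{D}_{\{a\}}\theta \in \Delta$.
\item This formula belongs to $\mathbf{D}_{\{a\}}\Delta$.
\item Item 6 then yields that its argument lies in $\Gamma$, i.e.\ $\mathbf{D}_{\{a\}}\theta\in\Gamma$, so $\mathbf{D}_{\{a\}}\theta \in \mathbf{D}_{\{a\}}\Gamma$.
\end{itemize}
Stratification poses no problem: $\theta\in\mathcal{L}^{DSL}_{\{a\}}$, hence $\mathbf{D}_{\{a\}}\theta\in\mathcal{L}^{DSL}_{\{a\}}$, so axiom \textbf{4} and item 6 both apply.

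The only delicate point is this upgrade from $\theta\in\Gamma$ to $\mathbf{D}_{\{a\}}\theta\in\Gamma$, which is settled by axiom \textbf{4}. The nonemptiness hypothesis is used solely to invoke item 6, where it secures $E_{\{a\}}\in\Gamma$ for axiom \textbf{RE}. The two inclusions together give $\mathbf{D}_{\{a\}}\Gamma = \mathbf{D}_{\{a\}}\Delta$.
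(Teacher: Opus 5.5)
Your proof is correct and has the same two-inclusion structure as the paper's proof, whose $(\supseteq)$ direction is a one-line appeal to item 6 of \autoref{lem:propMCS} in its stated form $\mathbf{D}_{\{a\}}\Delta\subseteq\Gamma$.

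You misdescribe that statement. Under the notation you adopt yourself, item 6 literally says that the $\mathbf{D}_{\{a\}}$-formulas of $\Delta$ lie in $\Gamma$, not merely their arguments. Your reading does, however, match what the paper's \emph{proof} of item 6 delivers. That proof fixes $\mathbf{D}_{\{a\}}\theta\in\Delta$ and concludes only $\theta\in\Gamma$, so the paper's one-line corollary relies on a strengthening that is never actually argued.

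Your step through axiom \textbf{4} supplies exactly that missing bridge: you pass to $\mathbf{D}_{\{a\}}\mathbf{D}_{\{a\}}\theta\in\Delta$ and run the item-6 argument on $\mathbf{D}_{\{a\}}\theta\in\mathcal{L}^{DSL}_{\{a\}}$, which is legitimate because \textbf{RE} and \textbf{T} are available at that stratum. If item 6 is taken at face value, this step is redundant. If it is taken as what was proved, the step is necessary. Your argument is therefore sound under either reading, whereas the paper's version is shorter but only as sound as item 6's unproved upgrade.
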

\begin{proof}
$(\subseteq)$: every $\mathbf{D}_{\{a\}}\varphi\in\mathbf{D}_{\{a\}}\Gamma$ is in $\Delta$ by hypothesis, so it is also in $\mathbf{D}_{\{a\}}\Delta$.
$(\supseteq)$: by \autoref{lem:propMCS}(6), $\mathbf{D}_{\{a\}}\Delta\subseteq\Gamma$, so it belongs to $\mathbf{D}_{\{a\}}\Gamma$.
\end{proof}

\begin{lemma}[Lindenbaum Lemma \cite{ditmarsch2007dynamic}]
\label{lem:lind}
Any consistent set $\Gamma$ can be extended to a maximal consistent set $\Delta \supseteq \Gamma$.
\end{lemma}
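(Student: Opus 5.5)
The plan is the standard Lindenbaum construction: enumerate the formulas of $\mathcal{L}^{DSL}$ and extend $\Gamma$ one formula at a time, preserving consistency at every stage. Since $P$ is a countable union of countable sets $P_a$ and $\agents$ is finite, the full language $\mathcal{L}^{DSL}$ is countable. It is generated from countably many atoms, $\top$, and finitely many modalities $\mathbf{D}_H$, and the stratified formation rules only cut this set down. So we fix an enumeration $\varphi_0,\varphi_1,\dots$ of $\mathcal{L}^{DSL}$.

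The chain is defined by $\Delta_0:=\Gamma$. At stage $k+1$ we set $\Delta_{k+1}:=\Delta_k\cup\{\varphi_k\}$ if that set is consistent, and $\Delta_{k+1}:=\Delta_k\cup\{\neg\varphi_k\}$ otherwise. Finally $\Delta:=\bigcup_k\Delta_k$, which clearly contains $\Gamma$.

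The key step is to show that each $\Delta_k$ stays consistent. Suppose instead that both $\Delta_k\cup\{\varphi_k\}\vdash\bot$ and $\Delta_k\cup\{\neg\varphi_k\}\vdash\bot$. Derivations in DSL are finite, and the only rules applied to premises are \textbf{MP} and propositional reasoning (\textbf{RM} acts only on theorems). So the deduction theorem holds for $\vdash$ from sets, and it gives $\Delta_k\vdash\neg\varphi_k$ and $\Delta_k\vdash\neg\neg\varphi_k$. By \textbf{Taut} and \textbf{MP} this yields $\Delta_k\vdash\bot$, contradicting the induction hypothesis.

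This is the one point that needs care, because \textbf{RM} is a rule rather than an axiom. We must read $\Gamma\vdash\varphi$ as "there are finitely many $\gamma_1,\dots,\gamma_m\in\Gamma$ with $\vdash\gamma_1\wedge\dots\wedge\gamma_m\to\varphi$." Under that reading the deduction theorem is immediate and \textbf{RM} is never applied to non-theorems.

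Consistency of $\Delta$ then follows by compactness: a derivation of $\bot$ from $\Delta$ uses only finitely many premises, all of which lie in some single $\Delta_k$, which we have just shown to be consistent.

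For maximality, let $\Delta_0'\supset\Delta$ be a proper superset and pick $\psi\in\Delta_0'\setminus\Delta$. The formula $\psi$ occurs as some $\varphi_k$. Since $\psi\notin\Delta$, the construction must have put $\neg\psi$ into $\Delta_{k+1}\subseteq\Delta$. Then $\Delta_0'$ contains both $\psi$ and $\neg\psi$, so it is inconsistent, and $\Delta$ is maximal consistent.
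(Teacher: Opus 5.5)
Your proposal is correct and is the standard enumeration-based Lindenbaum construction; the paper gives no proof of its own here and defers to this classical argument via \cite{ditmarsch2007dynamic}. Reading $\Gamma\vdash\varphi$ finitarily as $\vdash\gamma_1\wedge\cdots\wedge\gamma_m\to\varphi$, so that \textbf{RM} applies only to theorems and the deduction theorem is immediate, is what lets the classical argument carry over unchanged to this stratified axiom system.
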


\begin{definition}[Canonical simplicial model]
\label{def:CanSimMod}
Let $\mathcal{G}$ be the set of all maximal consistent sets. For $\Gamma \in \mathcal{G}$ and $a \in \agents$, define the \emph{projection}
\begin{equation}
    \mathbf{D}_{\{a\}}\Gamma := \Gamma \cap \{\mathbf{D}_{\{a\}}\varphi \mid \varphi \in \mathcal{L}^{DSL}\},
\end{equation}
i.e., the set of $\mathbf{D}_{\{a\}}$-formulas already belonging to $\Gamma$. We define
\begin{equation}
    X_\Gamma := \{\mathbf{D}_{\{a\}}\Gamma \mid a \in \agents,\ \mathbf{D}_{\{a\}}\Gamma \neq \varnothing\}.
\end{equation}
The vertex set of the canonical model is
\begin{equation}
    V^c := \{\mathbf{D}_{\{a\}}\Gamma \mid \Gamma \in \mathcal{G},\ a \in \agents,\ \mathbf{D}_{\{a\}}\Gamma \neq \varnothing\},
\end{equation}
where two pairs $(\Gamma,a)$, $(\Delta,a)$ yield the \emph{same} vertex object whenever $\mathbf{D}_{\{a\}}\Gamma = \mathbf{D}_{\{a\}}\Delta$ as sets.

The canonical simplicial model $\mathcal{C}^c = (C^c, \chi^c, l^c)$ is defined by:
\begin{itemize}
    \item $C^c = \{X \mid (\exists \Gamma \in \mathcal{G})(\varnothing \neq X \subseteq X_\Gamma)\}$;
    \item For $v = \mathbf{D}_{\{a\}}\Gamma \in V^c$: $\chi^c(v) = a$;
    \item For $v = \mathbf{D}_{\{a\}}\Gamma \in V^c$: $l^c(v) = \{p_a \in P_a \mid \mathbf{D}_{\{a\}}p_a \in \mathbf{D}_{\{a\}}\Gamma\}$.
\end{itemize}
\end{definition}

The main idea is to use non-empty sets $\mathbf{D}_{\{a\}} \Gamma$ as vertices of the canonical model, while sets $X_\Gamma$ are its faces. The coloring of a $\mathbf{D}_{\{a\}} \Gamma$ vertex is $a$, and the formulas holding at that vertex are those to which that agent is committed to. The projection allows empty vertices (when $a$ has no commitment content in $\Gamma$), which is exactly what impurity requires.

\begin{lemma}[Correctness]
\label{lem:correctness}
The object $\mathcal{C}^c$ of \autoref{def:CanSimMod} is a simplicial model, and $X_\Gamma\ne\varnothing$ for \emph{every} $\Gamma\in\mathcal{G}$.
\end{lemma}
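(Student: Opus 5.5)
We have to check four things: $C^c$ is a non-empty, downward-closed collection of non-empty finite subsets of $V^c$; $\chi^c$ and $l^c$ are well defined; each simplex has at most one vertex of each colour; and $X_\Gamma\neq\varnothing$ for every $\Gamma\in\mathcal G$. We start with the last claim, since the others depend on it.

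\textbf{Non-emptiness of $X_\Gamma$.} Fix $\Gamma\in\mathcal G$. By axiom \textbf{NE}, $\bigvee_{a\in\agents}E_a$ is a theorem, so it belongs to $\Gamma$ by \autoref{lem:propMCS}(1). Because $\agents$ is finite, repeated use of \autoref{lem:propMCS}(2),(3) on the disjunction yields some $a$ with $E_{\{a\}}=\mathbf D_{\{a\}}\top\in\Gamma$. Then $\mathbf D_{\{a\}}\Gamma\neq\varnothing$, so $\mathbf D_{\{a\}}\Gamma\in X_\Gamma$. This is the only point where \textbf{NE} is needed.

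\textbf{$C^c$ is a simplicial complex.} Since $\mathcal G$ is non-empty (DSL is sound, hence consistent, so Lindenbaum, \autoref{lem:lind}, applies to $\varnothing$), and each $X_\Gamma$ is non-empty, $C^c$ is non-empty. Each $X_\Gamma$ has at most $|\agents|$ elements, so it is finite, and every element is a vertex in $V^c$. Downward closure holds by definition: if $\varnothing\neq Y\subseteq X\subseteq X_\Gamma$, then $Y\subseteq X_\Gamma$, so $Y\in C^c$.

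\textbf{Well-definedness of $\chi^c$ and $l^c$.} We must check that a vertex object determines both its colour and its label. The label $l^c(v)$ is read off from the set $v$ itself, so it is independent of which $\Gamma$ represents $v$. The colour needs more care. Since $v\neq\varnothing$, every element of $v$ has the form $\mathbf D_{\{a\}}\varphi$. We take the syntactic convention that $\mathbf D_{\{a\}}\varphi$ and $\mathbf D_{\{b\}}\varphi$ are distinct formulas when $a\neq b$. Under that convention, a non-empty set of $\mathbf D_{\{a\}}$-formulas cannot also be a set of $\mathbf D_{\{b\}}$-formulas, so the colour $a$ is recovered uniquely from any element. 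The constraint $l^c(v)\subseteq P_a$ then holds by construction.

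\textbf{Chromaticity.} Take $u,v\in X\in C^c$ with $\chi^c(u)=\chi^c(v)=a$. We have $X\subseteq X_\Gamma$ for some $\Gamma$, and $X_\Gamma$ contains exactly one $a$-coloured element, namely $\mathbf D_{\{a\}}\Gamma$. Hence $u=v=\mathbf D_{\{a\}}\Gamma$.

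The main obstacle is the colour argument in the well-definedness step. It relies on the syntactic distinctness of modal operators indexed by different singleton groups, and the proof should state this convention explicitly. The non-emptiness of $X_\Gamma$ via \textbf{NE} is the substantive content but is routine.
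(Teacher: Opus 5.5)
Your proof is correct and follows the paper's argument essentially step for step. Both use \textbf{NE} with the disjunction property of maximal consistent sets to get $X_\Gamma\neq\varnothing$, the fact that $X_\Gamma$ has at most one vertex per agent for finiteness and chromaticity, unique parsing of $\mathbf D_{\{a\}}$-formulas to make $\chi^c$ well defined, and downward closure by construction. Your extra observation that $\mathcal G\neq\varnothing$ (soundness gives consistency, then Lindenbaum applied to $\varnothing$), and hence $C^c\neq\varnothing$, is a small point the paper leaves implicit.
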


\begin{proof}
\textbf{Non-emptiness of every face $X_\Gamma$.} By axiom \textbf{NE}, $\bigvee_{a\in\agents}E_a$ is a theorem, hence lies in every $\Gamma\in\mathcal{G}$ by \autoref{lem:propMCS}(1). By the standard disjunction property of maximal consistent sets (following from items 2–3 applied to the De Morgan expansion of $\vee$), some disjunct $E_{a_0}\in\Gamma$ for some $a_0\in\agents$. Since $E_{a_0}=\mathbf{D}_{\{a_0\}}\top$ is itself of the shape $\mathbf{D}_{\{a_0\}}(\cdot)$ and lies in $\Gamma$, $\mathbf{D}_{\{a_0\}}\Gamma\ni E_{a_0}$, so $\mathbf{D}_{\{a_0\}}\Gamma\ne\varnothing$, giving $X_\Gamma\ne\varnothing\in C^c$. 
This holds uniformly for every $\Gamma\in\mathcal{G}$.

\textbf{Downward closure and non-emptiness of faces.} By construction, $C^c$ consists precisely of all non-empty subsets of the various $X_\Gamma$. If $X \in C^c$ witnessed by $\Gamma$ and $\varnothing \neq Y \subseteq X$, then $Y \subseteq X \subseteq X_\Gamma$, so $Y \in C^c$.

\textbf{Finiteness.} For any $X \in C^c$, $X \subseteq X_\Gamma$ for some $\Gamma$, and $X_\Gamma$ has at most one element per agent, so $|X_\Gamma| \leq |\agents| = n+1$, hence $|X| \leq n+1$.

\textbf{Chromatic map well-defined.} Let $v = \mathbf{D}_{\{a\}}\Gamma \in V^c$ be non-empty. Every element of $\mathbf{D}_{\{a\}}\Gamma$ has the syntactic shape $\mathbf{D}_{\{a\}}\varphi$; by unique parsing of $\mathcal{L}^{DSL}$, $a$ is recoverable from any single element of $v$, so $\chi^c(v):=a$ is well-defined. Since $X_\Gamma$ contains at most one non-empty $\mathbf{D}_{\{a\}}\Gamma$ per agent, every face has at most one vertex of each color.

\textbf{Valuation.} For $v = \mathbf{D}_{\{a\}}\Gamma \in V^c$, $l^c(v) = \{p_a \in P_a \mid \mathbf{D}_{\{a\}}p_a \in \mathbf{D}_{\{a\}}\Gamma\} \subseteq P_a$ by construction.
\end{proof}

\begin{lemma}[Presence characterization]
\label{lem:presence}
For $\Gamma\in\mathcal G$ and $\varnothing\ne G\subseteq\agents$: $E_G\in\Gamma \iff G\subseteq\chi^c(X_\Gamma)$.
\end{lemma}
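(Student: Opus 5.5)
Since $G$ is finite, the plan is to reduce both sides to statements about single agents. The first step is the singleton case: $E_{\{a\}}\in\Gamma \iff a\in\chi^c(X_\Gamma)$.

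By definition, $a\in\chi^c(X_\Gamma)$ holds exactly when $\mathbf{D}_{\{a\}}\Gamma\neq\varnothing$. Here the chromatic map is well defined by \autoref{lem:correctness}, and $X_\Gamma$ contains at most one $a$-vertex, namely $\mathbf{D}_{\{a\}}\Gamma$.

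For $(\Rightarrow)$, note that $E_{\{a\}}=\mathbf{D}_{\{a\}}\top$ has the shape $\mathbf{D}_{\{a\}}(\cdot)$. So if it lies in $\Gamma$, it lies in $\mathbf{D}_{\{a\}}\Gamma$, and the projection is non-empty.

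For $(\Leftarrow)$, pick some $\mathbf{D}_{\{a\}}\chi_0\in\Gamma$. We have $\vdash\chi_0\to\top$, and since $\top\in\mathcal{L}^{DSL}_{\{a\}}$ we may take $\chi_0$ in the appropriate fragment. Applying \textbf{RM} and \textbf{MP} then gives $E_{\{a\}}\in\Gamma$, exactly as in the first paragraph of the proof of \autoref{lem:propMCS}(6).

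Next we lift this to arbitrary $G$. For $(\Rightarrow)$, suppose $E_G\in\Gamma$. For each $a\in G$, axiom \textbf{EM} with $H=\{a\}\subseteq G$ gives $E_G\to E_{\{a\}}$, so $E_{\{a\}}\in\Gamma$ by \autoref{lem:propMCS}(1). The singleton case then gives $a\in\chi^c(X_\Gamma)$, and hence $G\subseteq\chi^c(X_\Gamma)$.

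For $(\Leftarrow)$, suppose $G\subseteq\chi^c(X_\Gamma)$. The singleton case gives $E_{\{a\}}\in\Gamma$ for every $a\in G$. Enumerate $G=\{a_1,\dots,a_k\}$ with $k\ge1$, since $G\neq\varnothing$. We then show by induction on $j$ that $E_{\{a_1,\dots,a_j\}}\in\Gamma$.

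The base case $j=1$ is immediate. For the step, we have $E_{\{a_1,\dots,a_j\}}\in\Gamma$ and $E_{\{a_{j+1}\}}\in\Gamma$. Then \autoref{lem:propMCS}(3) puts their conjunction in $\Gamma$. Axiom \textbf{EU} with $H=\{a_1,\dots,a_j\}$ and $K=\{a_{j+1}\}$ gives $E_{\{a_1,\dots,a_{j+1}\}}\in\Gamma$ by closure under \textbf{MP}. At $j=k$ this yields $E_G\in\Gamma$.

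The main obstacle is the syntactic bookkeeping in the singleton $(\Leftarrow)$ step rather than any real difficulty. The element of $\mathbf{D}_{\{a\}}\Gamma$ is a priori $\mathbf{D}_{\{a\}}\chi_0$ for some $\chi_0\in\mathcal{L}^{DSL}$. To apply \textbf{RM}, which is restricted to $\{a\}$-stratified formulas, we need $\chi_0\in\mathcal{L}^{DSL}_{\{a\}}$. This holds because well-formedness of $\mathbf{D}_{\{a\}}\chi_0$ in the stratified language of \autoref{def:langL} forces it. One also has to check the matching convention for $E_a$ versus $E_{\{a\}}$. With these remarks the argument consists of direct applications of \textbf{RM}, \textbf{EM} and \textbf{EU}.
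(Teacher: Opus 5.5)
Your proof is correct and follows essentially the same route as the paper. For $(\Rightarrow)$ you use \textbf{EM} and non-emptiness of the projection; for $(\Leftarrow)$ you use \textbf{RM} to obtain each $E_{\{a\}}$, then induct over the members of $G$ with \textbf{EU}. Your remark that well-formedness of $\mathbf{D}_{\{a\}}\chi_0$ forces $\chi_0\in\mathcal{L}^{DSL}_{\{a\}}$, so that \textbf{RM} applies, makes explicit a step the paper leaves implicit.
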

\begin{proof}
$(\Rightarrow)$ For each $a\in G$: \textbf{EM} gives $\vdash E_G\to E_{\{a\}}$; \textbf{MP} gives $E_{\{a\}}\in\Gamma$, so $\mathbf D_{\{a\}}\Gamma\ne\varnothing$, so $a\in\chi^c(X_\Gamma)$. As $a\in G$ was arbitrary: $G\subseteq\chi^c(X_\Gamma)$.

$(\Leftarrow)$ For each $a\in G$: $a\in\chi^c(X_\Gamma)$ means $\mathbf D_{\{a\}}\Gamma\ne\varnothing$, i.e.\ some $\mathbf D_{\{a\}}\theta_a\in\Gamma$; by \textbf{RM} and \textbf{MP}: $E_{\{a\}}\in\Gamma$. Induct on $|G|$: if $|G|=1$, this is $E_G$ itself. If $G=G_0\cup\{a\}$ with $|G_0|=|G|-1$: by IH, $E_{G_0}\in\Gamma$; with $E_{\{a\}}\in\Gamma$, \textbf{EU} and \textbf{MP} give $E_{G_0\cup\{a\}}=E_G\in\Gamma$.
\end{proof}

\begin{lemma}[Group Existence]
\label{lem:existenceG}
If $E_G\in\Gamma$ and $\neg\mathbf D_G\psi\in\Gamma$ (with $\psi\in\mathcal L^{DSL}_G$), then $\mathbf D_G\Gamma\cup\{\neg\psi\}$ is consistent, where $\mathbf D_G\Gamma:=\bigcup_{a\in G}\mathbf D_{\{a\}}\Gamma$.
\end{lemma}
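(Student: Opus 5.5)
The plan is to argue by contradiction, following the standard existence-lemma pattern for distributed-knowledge–style modalities. Instead of necessitation, which \textbf{DSL} lacks, I would use the guarded principles \textbf{4}, \textbf{MON}, \textbf{C} and \textbf{RM}.

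Suppose $\mathbf D_G\Gamma\cup\{\neg\psi\}$ is inconsistent. By compactness of derivations there are finitely many formulas $\mathbf D_{\{a_1\}}\chi_1,\dots,\mathbf D_{\{a_k\}}\chi_k\in\mathbf D_G\Gamma$, with each $a_i\in G$ and each $\chi_i\in\mathcal L^{DSL}_{\{a_i\}}$ (as the formation rule of \autoref{def:langL} demands), such that
\[
\vdash \textstyle\bigwedge_{i\le k}\mathbf D_{\{a_i\}}\chi_i \rightarrow \psi .
\]
By \autoref{lem:strat-mono} each $\mathbf D_{\{a_i\}}\chi_i$ lies in $\mathcal L^{DSL}_{\{a_i\}}\subseteq\mathcal L^{DSL}_G$. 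Hence the conjunction, call it $\theta$, lies in $\mathcal L^{DSL}_G$. Since $\psi\in\mathcal L^{DSL}_G$ by hypothesis, \textbf{RM} applies and yields $\vdash\mathbf D_G\theta\rightarrow\mathbf D_G\psi$.

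It remains to show $\mathbf D_G\theta\in\Gamma$. This is done one conjunct at a time.

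First, for each $i$, axiom \textbf{4} (applied with group $\{a_i\}$) gives $\mathbf D_{\{a_i\}}\mathbf D_{\{a_i\}}\chi_i\in\Gamma$.

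Next, I apply \textbf{MON} with $H=\{a_i\}\subseteq G$ and $\varphi=\mathbf D_{\{a_i\}}\chi_i\in\mathcal L^{DSL}_{\{a_i\}}$. Together with the hypothesis $E_G\in\Gamma$, this gives $\mathbf D_G\mathbf D_{\{a_i\}}\chi_i\in\Gamma$.

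Finally, iterating \textbf{C} $k-1$ times gives $\mathbf D_G\theta\in\Gamma$; all intermediate conjunctions remain in $\mathcal L^{DSL}_G$, so each use of \textbf{C} is licensed. Then \textbf{MP} gives $\mathbf D_G\psi\in\Gamma$. This contradicts $\neg\mathbf D_G\psi\in\Gamma$ by \autoref{lem:propMCS}(2).

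The degenerate case $k=0$ needs separate treatment, because \textbf{C} cannot produce an empty conjunction. In that case $\vdash\psi$, hence $\vdash\top\rightarrow\psi$. Applying \textbf{RM} gives $\vdash E_G\rightarrow\mathbf D_G\psi$, and with $E_G\in\Gamma$ we again obtain $\mathbf D_G\psi\in\Gamma$, the same contradiction. This is precisely where the presence hypothesis $E_G\in\Gamma$ is indispensable: \textbf{MON} also needs it, but here it replaces necessitation.

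I expect the main obstacle to be bookkeeping rather than a conceptual difficulty. One must check that every instance of \textbf{4}, \textbf{MON}, \textbf{C} and \textbf{RM} is applied to formulas in the right stratified fragment. In particular, \textbf{MON} transports $\mathbf D_{\{a_i\}}\chi_i$ to level $G$ only because that formula is $\{a_i\}$-stratified, and \textbf{RM} needs both $\theta$ and $\psi$ in $\mathcal L^{DSL}_G$.
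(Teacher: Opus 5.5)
Your proof is correct and matches the paper's argument step for step. It uses the same finite-witness reduction and treats $k=0$ separately via \textbf{RM} on $\top\to\psi$ together with $E_G\in\Gamma$. For $k\ge1$ it runs the same chain \textbf{4}, then \textbf{MON} (using $E_G\in\Gamma$), then \textbf{C}, then \textbf{RM}, with the same stratification side conditions discharged via \autoref{lem:strat-mono}.
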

\begin{proof}
Suppose not. By finiteness, there are $\mathbf D_{\{a_1\}}\varphi_1,\ldots,\mathbf D_{\{a_k\}}\varphi_k\in\Gamma$ ($k\ge0$, each $a_j\in G$, $\varphi_j\in\mathcal L^{DSL}_{\{a_j\}}$, drawn from $\mathbf D_G\Gamma$) with
\[
\vdash(\mathbf D_{\{a_1\}}\varphi_1\wedge\cdots\wedge\mathbf D_{\{a_k\}}\varphi_k)\to\psi.
\]

\textit{Case $k=0$.} Then $\vdash\psi$; by \textbf{Taut}+\textbf{MP}, $\vdash\top\to\psi$. Since $\top,\psi\in\mathcal L^{DSL}_G$, \textbf{RM} gives $\vdash E_G\to\mathbf D_G\psi$; \textbf{MP} with $E_G\in\Gamma$ gives $\mathbf D_G\psi\in\Gamma$ directly.

\textit{Case $k\ge1$.} For each $j$: writing $\theta_j:=\mathbf D_{\{a_j\}}\varphi_j\in\mathcal L^{DSL}_{\{a_j\}}$, by \textbf{4} and \textbf{MP}: $\mathbf D_{\{a_j\}}\theta_j\in\Gamma$. By \textbf{MON} (with $H=\{a_j\}\subseteq G$, using $E_G\in\Gamma$) and \textbf{MP}: $\mathbf D_G\theta_j\in\Gamma$.

By \textbf{C} chained across $j=1,\ldots,k$ (valid since each $\theta_j\in\mathcal L^{DSL}_{\{a_j\}}\subseteq\mathcal L^{DSL}_G$ by \autoref{lem:strat-mono}): $\mathbf D_G(\theta_1\wedge\cdots\wedge\theta_k)\in\Gamma$.

By \textbf{RM} on $\vdash(\theta_1\wedge\cdots\wedge\theta_k)\to\psi$: $\vdash\mathbf D_G(\theta_1\wedge\cdots\wedge\theta_k)\to\mathbf D_G\psi$; \textbf{MP} gives $\mathbf D_G\psi\in\Gamma$.

Either way $\mathbf D_G\psi\in\Gamma$, contradicting $\neg\mathbf D_G\psi\in\Gamma$ (\autoref{lem:propMCS}(2)).
\end{proof}

\begin{corollary}[Existence, singleton case]
\label{lem:existence}
If $\mathbf D_{\{a\}}\Gamma\ne\varnothing$ and $\neg\mathbf D_{\{a\}}\psi\in\Gamma$, then $\mathbf D_{\{a\}}\Gamma\cup\{\neg\psi\}$ is consistent.
\end{corollary}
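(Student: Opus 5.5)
This corollary is the instance $G=\{a\}$ of \autoref{lem:existenceG}, so I will derive it by checking that lemma's hypotheses. First, $\mathbf D_{\{a\}}\Gamma\ne\varnothing$ must give $E_{\{a\}}\in\Gamma$. Fix some $\mathbf D_{\{a\}}\chi_0\in\Gamma$. Then $\vdash\chi_0\to\top$ is a tautology with both sides in $\mathcal L^{DSL}_{\{a\}}$, provided $\chi_0\in\mathcal L^{DSL}_{\{a\}}$; this holds because $\mathbf D_{\{a\}}\chi_0$ is well-formed only when its argument is $\{a\}$-stratified. Applying \textbf{RM} and \textbf{MP} yields $\mathbf D_{\{a\}}\top=E_{\{a\}}\in\Gamma$, exactly as in \autoref{lem:propMCS}(6). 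Alternatively, this is the $(\Leftarrow)$ direction of \autoref{lem:presence} with $G=\{a\}$.

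Second, the formula $\neg\mathbf D_{\{a\}}\psi\in\Gamma$ presupposes that $\mathbf D_{\{a\}}\psi$ is a formula of the language. By the grammar of \autoref{def:langL}, this forces $\psi\in\mathcal L^{DSL}_{\{a\}}$, so the stratification side condition of \autoref{lem:existenceG} holds.

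Third, for $G=\{a\}$ the set $\mathbf D_G\Gamma=\bigcup_{b\in\{a\}}\mathbf D_{\{b\}}\Gamma$ is just $\mathbf D_{\{a\}}\Gamma$. Applying \autoref{lem:existenceG} with $G=\{a\}$ then gives that $\mathbf D_{\{a\}}\Gamma\cup\{\neg\psi\}$ is consistent.

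The only real obstacle is the bookkeeping of the second step, namely confirming that $\psi$ lies in the correct stratified fragment so that \textbf{RM}, \textbf{4} and \textbf{C} apply inside the group lemma. If one wants to avoid relying on that, the proof of \autoref{lem:existenceG} can be repeated directly for a singleton. In that case \textbf{MON} is trivial, since $H=G=\{a\}$, and one can even skip it: \textbf{4} already gives $\mathbf D_{\{a\}}\theta_j\in\Gamma$.
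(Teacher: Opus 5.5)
Your proof is correct and matches the paper's. You obtain $E_{\{a\}}\in\Gamma$ from a witness $\mathbf D_{\{a\}}\chi_0\in\Gamma$ via \textbf{RM} and \textbf{MP}, then instantiate Group Existence (\autoref{lem:existenceG}) at $G=\{a\}$ using $\mathbf D_{\{a\}}\Gamma=\bigcup_{a'\in\{a\}}\mathbf D_{\{a'\}}\Gamma$, and your remark that well-formedness of $\mathbf D_{\{a\}}\psi$ forces $\psi\in\mathcal L^{DSL}_{\{a\}}$ makes explicit a side condition the paper leaves implicit.
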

\begin{proof}
Since $\mathbf D_{\{a\}}\Gamma\ne\varnothing$, some $\mathbf D_{\{a\}}\theta_0\in\Gamma$; by \textbf{RM} and \textbf{MP}: $E_{\{a\}}\in\Gamma$. The claim is now \autoref{lem:existenceG} instantiated at $G=\{a\}$, noting $\mathbf D_{\{a\}}\Gamma=\bigcup_{a'\in\{a\}}\mathbf D_{\{a'\}}\Gamma$. 
\end{proof}

\begin{corollary}[Group vertex-sharing]
\label{cor:groupsymm}
If $\mathbf D_G\Gamma\subseteq\Delta$ and $E_G\in\Gamma$, then $\mathbf D_{\{a\}}\Gamma=\mathbf D_{\{a\}}\Delta$ for every $a\in G$.
\end{corollary}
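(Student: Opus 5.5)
The plan is to reduce the claim to \autoref{cor:vertexeq}, applied once for each $a\in G$. Fix $a\in G$. By definition, $\mathbf D_G\Gamma=\bigcup_{b\in G}\mathbf D_{\{b\}}\Gamma$, so $\mathbf D_{\{a\}}\Gamma\subseteq\mathbf D_G\Gamma\subseteq\Delta$. This gives the first hypothesis of \autoref{cor:vertexeq} directly.

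It remains to check that $\mathbf D_{\{a\}}\Gamma\neq\varnothing$, and this is where $E_G\in\Gamma$ is used. We can argue exactly as in the $(\Rightarrow)$ direction of \autoref{lem:presence}. Axiom \textbf{EM} with $H=\{a\}\subseteq G$ gives $\vdash E_G\to E_{\{a\}}$, and \textbf{MP} together with \autoref{lem:propMCS}(1) then yields $E_{\{a\}}\in\Gamma$. Since $E_{\{a\}}=\mathbf D_{\{a\}}\top$ has the syntactic shape $\mathbf D_{\{a\}}(\cdot)$, it belongs to the projection $\mathbf D_{\{a\}}\Gamma$, which is therefore non-empty. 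Equivalently, one can cite \autoref{lem:presence} to get $a\in\chi^c(X_\Gamma)$.

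With both hypotheses in place, \autoref{cor:vertexeq} gives $\mathbf D_{\{a\}}\Gamma=\mathbf D_{\{a\}}\Delta$. Since $a\in G$ was arbitrary, the claim holds for every $a\in G$.

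The only real obstacle would be the direction $\mathbf D_{\{a\}}\Delta\subseteq\mathbf D_{\{a\}}\Gamma$, since it needs negative introspection. However, it is already handled inside \autoref{lem:propMCS}(6) through the guarded axiom \textbf{RE}, whose presence guard $E_{\{a\}}$ is exactly the non-emptiness established above. So the main point to verify is that the guard $E_{\{a\}}$ really is available, and it is, from $E_G$ via \textbf{EM}.
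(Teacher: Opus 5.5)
Your proposal is correct and matches the paper's proof: fix $a\in G$, use \textbf{EM} and \textbf{MP} to get $E_{\{a\}}\in\Gamma$, so $\mathbf D_{\{a\}}\Gamma\neq\varnothing$. Then note $\mathbf D_{\{a\}}\Gamma\subseteq\mathbf D_G\Gamma\subseteq\Delta$ and apply \autoref{cor:vertexeq}.
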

\begin{proof}
Fix $a\in G$. By \textbf{EM} and \textbf{MP}: $E_{\{a\}}\in\Gamma$, so $\mathbf D_{\{a\}}\Gamma\ni E_{\{a\}}\neq\varnothing$. Since $\mathbf D_{\{a\}}\Gamma\subseteq\mathbf D_G\Gamma\subseteq\Delta$, \autoref{cor:vertexeq} gives $\mathbf D_{\{a\}}\Gamma=\mathbf D_{\{a\}}\Delta$. As $a\in G$ was arbitrary, this holds for every $a\in G$.
\end{proof}

\begin{lemma}[Truth Lemma]
\label{lem:truth}
For every maximal consistent set $\Gamma \in \mathcal{G}$ and every formula $\varphi \in \mathcal{L}^{DSL}$:
$$\varphi \in \Gamma \;\iff\; \mathcal{C}^c, X_\Gamma \models \varphi.$$
\end{lemma}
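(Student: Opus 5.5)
We prove the Truth Lemma by structural induction on $\varphi$, with the induction hypothesis quantified over \emph{all} $\Gamma\in\mathcal G$ at once. This matters because the modal case moves from $X_\Gamma$ to a different face $X_\Delta$. Throughout we use \autoref{lem:correctness}, so that $\mathcal C^c$ is a genuine simplicial model and $X_\Gamma\in C^c$ for every $\Gamma$.

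\textbf{Base and Boolean cases.} The case $\top$ is immediate, since $\top$ is a theorem and hence lies in every $\Gamma$.

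For an atom $p_a$:
\begin{itemize}
\item[$(\Rightarrow)$] If $p_a\in\Gamma$, then \autoref{lem:propMCS}(4) gives $\mathbf D_{\{a\}}p_a\in\Gamma$. Hence the vertex $v=\mathbf D_{\{a\}}\Gamma$ is non-empty, so $v\in X_\Gamma$ and $a\in\chi^c(X_\Gamma)$. By the definition of $l^c$, $p_a\in l^c(v)\subseteq l^c(X_\Gamma)$, so $\mathcal C^c,X_\Gamma\models p_a$.
\item[$(\Leftarrow)$] If $\mathcal C^c,X_\Gamma\models p_a$, then $a\in\chi^c(X_\Gamma)$. By disjointness of the $P_b$, the atom $p_a$ must come from the $a$-vertex $\mathbf D_{\{a\}}\Gamma$. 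So $\mathbf D_{\{a\}}p_a\in\mathbf D_{\{a\}}\Gamma\subseteq\Gamma$, and \autoref{lem:propMCS}(4) gives $p_a\in\Gamma$.
\end{itemize}
Negation and conjunction follow from the induction hypothesis together with \autoref{lem:propMCS}(2),(3).

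\textbf{Modal case $\varphi=\mathbf D_G\psi$, with $\psi\in\mathcal L^{DSL}_G$, direction $(\Rightarrow)$.}
\begin{itemize}
\item From $\mathbf D_G\psi\in\Gamma$, \textbf{RM} applied to $\vdash\psi\to\top$ gives $E_G\in\Gamma$. By \autoref{lem:presence}, $G\subseteq\chi^c(X_\Gamma)$, so the gate condition holds.
\item By \textbf{T}, $\psi\in\Gamma$, and the induction hypothesis gives $\mathcal C^c,X_\Gamma\models\psi$.
\item Now let $Y\in C^c$ be arbitrary with $G\subseteq\chi^c(X_\Gamma\cap Y)$. This $Y$ need not be of the form $X_\Delta$; it may be a proper face of one.
\item Since $\psi$ is $G$-stratified, \autoref{lem:groupinv} applied in the simplicial model $\mathcal C^c$ transfers $\psi$ from $X_\Gamma$ to $Y$.
\end{itemize}
So no witness construction is needed in this direction, and the induction hypothesis is used only at $X_\Gamma$.

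\textbf{Modal case, direction $(\Leftarrow)$, by contraposition.} Assume $\mathbf D_G\psi\notin\Gamma$, so $\neg\mathbf D_G\psi\in\Gamma$.
\begin{itemize}
\item If $E_G\notin\Gamma$, then \autoref{lem:presence} gives $G\not\subseteq\chi^c(X_\Gamma)$. The gate fails, so $\mathcal C^c,X_\Gamma\not\models\mathbf D_G\psi$.
\item If $E_G\in\Gamma$, then \autoref{lem:existenceG} makes $\mathbf D_G\Gamma\cup\{\neg\psi\}$ consistent. By \autoref{lem:lind} we extend it to some $\Delta\in\mathcal G$.
\item \autoref{cor:groupsymm} yields $\mathbf D_{\{a\}}\Gamma=\mathbf D_{\{a\}}\Delta$ for every $a\in G$. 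These are non-empty, since they contain $E_{\{a\}}$ by \textbf{EM}.
\item Since vertices of $\mathcal C^c$ are identified as sets, the $a$-vertices of $X_\Gamma$ and $X_\Delta$ literally coincide for every $a\in G$. Hence $G\subseteq\chi^c(X_\Gamma\cap X_\Delta)$.
\item Since $\neg\psi\in\Delta$, we have $\psi\notin\Delta$, and the induction hypothesis at $\Delta$ gives $\mathcal C^c,X_\Delta\not\models\psi$.
\item Thus $X_\Delta$ is a face sharing $G$'s vertices with $X_\Gamma$ on which $\psi$ fails, so $\mathcal C^c,X_\Gamma\not\models\mathbf D_G\psi$.
\end{itemize}

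The main obstacle is this existential direction. One must build a witness face that shares \emph{all} of $G$'s vertices with $X_\Gamma$ simultaneously, and not merely one agent's vertex. This is exactly what the combination of \autoref{lem:existenceG} and \autoref{cor:groupsymm} is designed to supply. The points to check carefully are the vertex-identity convention in \autoref{def:CanSimMod} and that the induction hypothesis is legitimately applied at the new set $\Delta$, which is why the induction is over all $\Gamma$ simultaneously.
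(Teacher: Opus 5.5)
Your proposal is correct and follows the paper's proof essentially step for step. The $(\Rightarrow)$ modal direction gets the gate from \textbf{RM} and \autoref{lem:presence}, then transfers $\psi$ from $X_\Gamma$ by \autoref{lem:groupinv}; the $(\Leftarrow)$ direction builds the witness $X_\Delta$ via \autoref{lem:existenceG}, \autoref{lem:lind} and \autoref{cor:groupsymm}. The remaining differences are cosmetic: you treat $\top$ explicitly, split on $E_G\in\Gamma$ rather than on $G\subseteq\chi^c(X_\Gamma)$ (equivalent by \autoref{lem:presence}), and cite \autoref{lem:propMCS}(4) instead of \textbf{T} in the atomic case.
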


\begin{proof}
By induction on the construction of $\varphi$.

\paragraph{Case $\varphi = p_a$.}

$(\Rightarrow)$ Assume $p_a \in \Gamma$. By \autoref{lem:propMCS}(4), $\mathbf{D}_{\{a\}}p_a \in \Gamma$, so $\mathbf{D}_{\{a\}}p_a \in \mathbf{D}_{\{a\}}\Gamma$ and $\mathbf{D}_{\{a\}}\Gamma \neq \varnothing$. Hence $v_a = \mathbf{D}_{\{a\}}\Gamma \in X_\Gamma$ and $a \in \chi^c(X_\Gamma)$. By the definition of $l^c$, $p_a \in l^c(v_a) \subseteq l^c(X_\Gamma)$. Therefore $\mathcal{C}^c, X_\Gamma \models p_a$.

$(\Leftarrow)$ Assume $\mathcal{C}^c, X_\Gamma \models p_a$. Then $a \in \chi^c(X_\Gamma)$, so $v_a = \mathbf{D}_{\{a\}}\Gamma \in X_\Gamma$, and $p_a \in l^c(X_\Gamma)=l^c(v_a)$. By definition of $l^c$, $\mathbf{D}_{\{a\}}p_a \in \mathbf{D}_{\{a\}}\Gamma \subseteq \Gamma$. By axiom \textbf{T} and \textbf{MP}, $p_a \in \Gamma$.

\paragraph{Case $\varphi = \neg\psi$.}

$(\Rightarrow)$ Assume $\neg\psi \in \Gamma$. By \autoref{lem:propMCS}(2), $\psi \notin \Gamma$. By the induction hypothesis (contrapositive of $(\Leftarrow)$ for $\psi$), $\mathcal{C}^c, X_\Gamma \not\models \psi$, hence $\mathcal{C}^c, X_\Gamma \models \neg\psi$.

$(\Leftarrow)$ Assume $\mathcal{C}^c, X_\Gamma \models \neg\psi$. By the induction hypothesis (contrapositive of $(\Rightarrow)$ for $\psi$), $\psi \notin \Gamma$. By \autoref{lem:propMCS}(2), $\neg\psi \in \Gamma$.

\paragraph{Case $\varphi = \psi_1 \wedge \psi_2$.}

$(\Rightarrow)$ Assume $(\psi_1 \wedge \psi_2) \in \Gamma$. By \autoref{lem:propMCS}(3), $\psi_1,\psi_2 \in \Gamma$. By the IH, $\mathcal{C}^c, X_\Gamma \models \psi_1$ and $\mathcal{C}^c, X_\Gamma \models \psi_2$.

$(\Leftarrow)$ Assume $\mathcal{C}^c, X_\Gamma \models \psi_1 \wedge \psi_2$. By the IH, $\psi_1,\psi_2 \in \Gamma$. By \autoref{lem:propMCS}(3), $(\psi_1 \wedge \psi_2) \in \Gamma$.

\paragraph{Case $\varphi = \mathbf{D}_G\psi$, for $\varnothing\ne G\subseteq\agents$, $\psi\in\mathcal L^{DSL}_G$.}

$(\Rightarrow)$ Assume $\mathbf D_G\psi\in\Gamma$.

\emph{Gate condition:} By \textbf{Taut}+\textbf{MP}, $\vdash\psi\to\top$; since $\psi,\top\in\mathcal L^{DSL}_G$, \textbf{RM} gives $\vdash\mathbf D_G\psi\to E_G$; \textbf{MP} gives $E_G\in\Gamma$. By \autoref{lem:presence}: $G\subseteq\chi^c(X_\Gamma)$.

\emph{Content at $X_\Gamma$:} By \textbf{T}+\textbf{MP}: $\psi\in\Gamma$. By the IH: $\mathcal C^c,X_\Gamma\models\psi$.

\emph{Propagation:} Let $W\in C^c$ be arbitrary with $G\subseteq\chi^c(X_\Gamma\cap W)$. Since $\mathcal C^c$ is a simplicial model (\autoref{lem:correctness}) and $\psi\in\mathcal L^{DSL}_G$, \autoref{lem:groupinv} applies to $X=X_\Gamma$, $X'=W$: $\mathcal C^c,X_\Gamma\models\psi \iff \mathcal C^c,W\models\psi$. Combined with the previous step: $\mathcal C^c,W\models\psi$.

Since $W$ was arbitrary $\mathcal C^c,X_\Gamma\models\mathbf D_G\psi$.

$(\Leftarrow)$, contrapositive: assume $\mathbf D_G\psi\notin\Gamma$; by \autoref{lem:propMCS}(2), $\neg\mathbf D_G\psi\in\Gamma$.

If $G\not\subseteq\chi^c(X_\Gamma)$: the gate already fails, so $\mathcal C^c,X_\Gamma\not\models\mathbf D_G\psi$.

If $G\subseteq\chi^c(X_\Gamma)$: by \autoref{lem:presence}, $E_G\in\Gamma$. By \autoref{lem:existenceG}, $\mathbf D_G\Gamma\cup\{\neg\psi\}$ is consistent; extend via \autoref{lem:lind} to $\Delta\in\mathcal G$. By \autoref{cor:groupsymm}: $\mathbf D_{\{a\}}\Gamma=\mathbf D_{\{a\}}\Delta$ for every $a\in G$, so $G\subseteq\chi^c(X_\Gamma\cap X_\Delta)$. Since $\neg\psi\in\Delta$, $\psi\notin\Delta$; by the IH (contrapositive), $\mathcal C^c,X_\Delta\not\models\psi$. So $X_\Delta$ witnesses $\mathcal C^c,X_\Gamma\not\models\mathbf D_G\psi$.
\end{proof}

\begin{theorem}[Completeness]
\label{theo:compl}
The axiomatization of \autoref{def:AX} is complete with respect to impure simplicial models: $\models \varphi \text{ implies } \vdash \varphi$
\end{theorem}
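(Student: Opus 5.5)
The plan is the standard contrapositive argument through the canonical simplicial model. Suppose $\not\vdash\varphi$ for some $\varphi\in\mathcal{L}^{DSL}$. Then $\{\neg\varphi\}$ is consistent: otherwise $\vdash\neg\varphi\to\bot$, and \textbf{Taut} with \textbf{MP} would give $\vdash\varphi$. By the Lindenbaum Lemma (\autoref{lem:lind}) we extend $\{\neg\varphi\}$ to a maximal consistent set $\Gamma\in\mathcal{G}$. The goal is then to exhibit a simplicial model and a face at which $\varphi$ fails.

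For this we take the canonical model $\mathcal{C}^c$ of \autoref{def:CanSimMod}. By \autoref{lem:correctness} it is a genuine simplicial model. The same lemma gives that $X_\Gamma\neq\varnothing$, which relies on axiom \textbf{NE}. Hence $X_\Gamma\in C^c$ is a legitimate point of evaluation; this is the point where impurity and emptiness could have caused trouble, and the earlier lemma already disposes of it. By \autoref{lem:propMCS}(2), $\neg\varphi\in\Gamma$ yields $\varphi\notin\Gamma$. The Truth Lemma (\autoref{lem:truth}) then gives $\mathcal{C}^c,X_\Gamma\not\models\varphi$. So $\varphi$ is not valid, and contraposition yields $\models\varphi\Rightarrow\vdash\varphi$.

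The only point I expect to need care is that the Truth Lemma is stated for every $\varphi\in\mathcal{L}^{DSL}=\mathcal{L}^{DSL}_\agents$. Since the target formula lies in that full language, the induction covers it directly, and the stratification side conditions in the modal case are exactly those under which the lemma was proved. One further check is that "valid" here quantifies over all faces of all simplicial models, including impure ones. $\mathcal{C}^c$ is typically impure, so it is admissible as a countermodel, which matches the "with respect to impure simplicial models" phrasing. Beyond these observations the proof is a direct assembly of \autoref{lem:lind}, \autoref{lem:correctness} and \autoref{lem:truth}.
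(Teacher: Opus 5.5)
Your proposal is correct and is essentially the paper's own proof. You argue by contraposition, extend $\{\neg\varphi\}$ to a maximal consistent $\Gamma$ by Lindenbaum, use the Correctness lemma (via \textbf{NE}) to guarantee that $X_\Gamma\neq\varnothing$ is a genuine face of $\mathcal{C}^c$, and apply the Truth Lemma at $X_\Gamma$; applying it to $\varphi\notin\Gamma$ rather than to $\neg\varphi\in\Gamma$, as the paper does, is an immaterial difference.
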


\begin{proof}
We prove the contrapositive. Suppose $\not \vdash \varphi$ for some formula $\varphi$. Then $\{\neg \varphi\}$ is a consistent set. By Lindenbaum Lemma (\autoref{lem:lind}), $\{\neg \varphi\}$ is a subset of some maximal consistent set $\Gamma$. By \autoref{lem:correctness}, $X_\Gamma\ne\varnothing$, so $X_\Gamma\in C^c$ is a genuine point of the canonical model. By \autoref{lem:truth}, $\mathcal{C}^c, X_\Gamma \models \neg \varphi$, making $\varphi$ not valid.
\end{proof}

\begin{remark}
Group modalities of the `distributed knowledge' kind are known to pose a genuine obstacle to completeness via the standard canonical-model technique: syntactic agreement on individual agents' commitments does not, in general, force semantic agreement on group-level content, since the group modality is by design not reducible to its members' contributions (\autoref{fig:bigtri}). 
In the epistemic simplicial setting, obtaining a complete axiomatization for a fully general distributed-knowledge operator has required moving beyond plain simplicial complexes to a richer structure \cite{ericDK}. The stratified grammar $\mathcal L^{DSL}_G$ together with \autoref{lem:groupinv} is what allows \textbf{DSL} to obtain full completeness for $\mathbf D_G$ at every group size, within plain simplicial complexes: 
restricting $\mathbf D_G\varphi$'s own content to $\varphi\in\mathcal L^{DSL}_G$ is precisely what lets Group Invariance settle the hard direction of the Truth Lemma directly, without needing the richer machinery the unrestricted case would demand.
\end{remark}

\section{Properties of Deontic Simplicial Models}
\label{sec:prop}

A key advantage of the simplicial structure is that it allows structural properties of the normative landscape to be read off directly from the geometry.
For example, checking whether a complex allows for a configuration where all agents share at least one joint commitment equals checking whether the simplicial model has at least a face of maximal dimension, as shown by the following theorem:

\begin{theorem}
A deontic simplicial model $\mathcal{C}=(C,\chi,l)$ allows for commitment formulas involving all agents iff $\exists X \in C$ s.t. $|X|= n+1$.

\end{theorem}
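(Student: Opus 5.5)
The first step is to make precise the informal phrase ``allows for commitment formulas involving all agents''. I will read it as: there is a configuration $X\in C$ at which the full-group presence formula $E_{\agents}=\mathbf{D}_{\agents}\top$ holds, i.e.\ $\mathcal{C},X\models\mathbf{D}_{\agents}\varphi$ for some $\varphi\in\mathcal{L}^{DSL}_{\agents}$. Equivalently, some formula $\mathbf{D}_{\agents}\varphi$ is satisfiable in $\mathcal{C}$. The equivalence of these two readings comes from \textbf{RM} applied to $\varphi\to\top$ (soundness gives $\mathbf{D}_{\agents}\varphi\to E_{\agents}$), and conversely $E_{\agents}$ is itself such a formula. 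With this reading fixed, the theorem reduces to the semantic fact already noted before \autoref{def:AX}: $\mathcal{C},X\models E_G$ iff $G\subseteq\chi(X)$.

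For the direction ($\Rightarrow$), I will take $X$ with $\mathcal{C},X\models\mathbf{D}_{\agents}\varphi$. The gate clause of \autoref{def:semL} gives $\agents\subseteq\chi(X)$, hence $|\chi(X)|=n+1$. Chromaticity (\autoref{def:SimMod}) makes $\chi$ restricted to $X$ injective, so $|X|=|\chi(X)|=n+1$.

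For the direction ($\Leftarrow$), I will take $X$ with $|X|=n+1$. Since $\chi$ is injective on $X$ and lands in $\agents$, which has $n+1$ elements, $\chi(X)=\agents$. Then for every $Y$ with $\agents\subseteq\chi(X\cap Y)$, trivially $\mathcal{C},Y\models\top$, so $\mathcal{C},X\models\mathbf{D}_{\agents}\top$. To exhibit something more contentful, I will also note that $\mathcal{C},X\models\mathbf{D}_{\agents}\delta_X$, where $\delta_X$ is the conjunction of literals $p_a$ or $\neg p_a$ over some fixed atoms, one per agent, as evaluated at $X$. This follows by \autoref{lem:groupinv}: any $Y$ sharing all $n+1$ colors with $X$ contains $X$, so it agrees with $X$ on $\agents$-stratified formulas.

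The main obstacle is only the interpretive step, namely choosing a definition of ``allows for commitment formulas involving all agents'' that is neither trivial nor false. For example, reading it as ``some $\mathbf{D}_{\agents}\varphi$ holds with $\varphi$ non-valid'' could fail if $P_a$ were empty. I therefore plan to state the $E_{\agents}$ reading explicitly and to mention the literal-conjunction witness as a remark. The counting argument via chromatic injectivity is routine.
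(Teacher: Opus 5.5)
Your proof is correct and follows essentially the same argument as the paper. Both read the informal condition as satisfiability of some $\mathbf{D}_{\agents}\varphi$ and reduce it to the gate condition $\agents\subseteq\chi(X)$. Your explicit appeal to chromatic injectivity supplies the counting step that the paper leaves implicit when it writes ``since $|\agents|=n+1$, $|X|=n+1$''.

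The literal-conjunction witness is a harmless extra remark, and you do not need Group Invariance for it. Any $Y$ with $\agents\subseteq\chi(X\cap Y)$ must equal $X$ when $|X|=n+1$, so $\mathbf{D}_{\agents}\varphi$ and $\varphi$ coincide at such an $X$.
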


\begin{proof}

$(\Rightarrow)$ Consider $\mathcal{C}$ and $X \in C$ such that $\mathcal{C},X \models \mathbf{D}_\agents \varphi$, for some $\varphi \in \mathcal{L}^{DSL}$. Since $|\agents|=n+1$, $|X| = n+1$.

$(\Leftarrow)$ Consider $X \in C$ such that $|X|= n+1$. Since $|\agents|=n+1$, $|X| = |\agents|$ meaning that $\agents \subseteq \chi(X)$, making $\mathcal{C},X \models \mathbf{D}_\agents \varphi$ true for some formula $\varphi \in \mathcal{L}^{DSL}$.

\end{proof}

The above theorem makes sure that there is a simple way to check whether a complex, no matter its size, has a configuration where all agents are committed to some state of affairs. 
It can also be useful to focus on only on lower dimensions of joint commitments. The operation that allows this is the skeleton operation:

\begin{definition}[Skeleton]
\label{def:skel}
The set of simplices of a simplicial complex $\mathcal{C}$ of dimension at most $k$ is a subcomplex of $\mathcal{C}$, called the $k$-skeleton, denoted $skel_k(\mathcal{C})$.
\end{definition}

For example, the $0$-skeleton $skel_0(\mathcal{C})$ of
a complex $\mathcal{C}$ is the complex comprising all of its vertices, The $1$-skeleton is made up of all of its vertices and edges, and so on. An example is shown in \autoref{fig:skel}.

\begin{figure}[t]
    \centering
    \resizebox{11cm}{!}{%
\begin{tikzpicture}
\begin{scope}[xshift=-6cm,>=stealth, yshift=3cm]
 
 \node[circle,draw,minimum size=9mm, fill=red!80] (1a) [right = 2cm  of 7] {$p_a$};
    \node[circle,draw,minimum size=9mm, fill=blue!50] (2a) [right = 2cm  of 1a]  {$p_b$};
    \node (ga) [right = 1cm  of 1a]  {};
    \node[circle,draw,minimum size=9mm, fill=green] (3a) [above = 2cm  of ga] {$p_c$}; 
    \node (ha) [above = 0.5cm  of ga]  {$X$};
    \node (g'a) [left = .7cm  of 1a]  {};
    \node[circle,draw,minimum size=9mm, fill=blue!50] (4a) [above = 1.8cm  of g'a]  {$\neg p_b$};
    \node (fa) [above = 1cm  of 2a]  {\Large{$\mathcal{C}$}};
    \node (h'a) [above = 0.7cm  of 1a]  {$Y$};
    
    \path[]
        (1a) edge node[above] {} (2a)
        (1a) edge node[above] {} (3a)
        (1a) edge node[above] {} (4a)
        (3a) edge node[above] {} (2a)
        (4a) edge node[above] {} (3a);

    \begin{pgfonlayer}{background}
\draw[fill=black!10,line width=0.1mm,line cap=round,line join=round] (1a.center)--(2a.center)--(3a.center)--cycle;
\draw[fill=black!10,line width=0.1mm,line cap=round,line join=round] (1a.center)--(4a.center)--(3a.center)--cycle;
\end{pgfonlayer}

 \node[circle,draw,minimum size=9mm, fill=red!80] (1) [right = 2cm  of 2a] {$p_a$};
    \node[circle,draw,minimum size=9mm, fill=blue!50] (2) [right = 2cm  of 1]  {$p_b$};
    \node (g) [right = 1cm  of 1]  {};
    \node[circle,draw,minimum size=9mm, fill=green] (3) [above = 2cm  of g] {$p_c$}; 
    \node (g') [left = .7cm  of 1]  {};
    \node[circle,draw,minimum size=9mm, fill=blue!50] (4) [above = 1.8cm  of g']  {$\neg p_b$};
    \node (f) [above = 1cm  of 2]  {\Large{$skel_1(\mathcal{C})$}};

    \path[]
        (1) edge node[above] {} (2)
        (1) edge node[above] {} (3)
        (1) edge node[above] {} (4)
        (3) edge node[above] {} (2)
        (4) edge node[above] {} (3);

    \end{scope}
\end{tikzpicture}

    }
    \caption{The $1$-skel operation of complex $\mathcal{C}$.}
    \label{fig:skel}
\end{figure}

\begin{theorem}

Let $\mathcal{C}$ be a deontic simplicial complex, $k \leq n$ and $|G| \leq k+1$. For any atomic formula $p_a \in P_a$ and any face $X \in skel_k(\mathcal{C})$:

$$\mathcal{C}, X \models \mathbf{D}_G p_a \implies skel_k(\mathcal{C}), X \models \mathbf{D}_G p_a$$

Moreover, $skel_k(\mathcal{C})$ is the largest subcomplex of $\mathcal{C}$ in which all joint commitments of groups of size at most $k+1$ to atomic formulas are preserved.
\end{theorem}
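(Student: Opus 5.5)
The plan is to unfold the semantics of \autoref{def:semL} on both models and compare the gate condition and the set of witness faces. Write $\mathcal{C}_k := skel_k(\mathcal{C})$, which has the same chromatic map and valuation as $\mathcal{C}$ restricted to simplices of dimension at most $k$ (so $\mathcal{C}_k$ is itself a simplicial model, downward closure being immediate). Fix $X\in\mathcal{C}_k$ and assume $\mathcal{C},X\models\mathbf{D}_G p_a$.

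\emph{Gate.} From the assumption $G\subseteq\chi(X)$, and $\chi$ is unchanged in $\mathcal{C}_k$, so the gate holds in $\mathcal{C}_k$.

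\emph{Witnesses.} Every $Y\in\mathcal{C}_k$ with $G\subseteq\chi(X\cap Y)$ is also a simplex of $\mathcal{C}$ with the same property, so $\mathcal{C},Y\models p_a$. Since $p_a$ is atomic and its truth at $Y$ depends only on $\chi(Y)$ and $l(Y)$, which coincide in both models, we get $\mathcal{C}_k,Y\models p_a$. Hence $\mathcal{C}_k,X\models\mathbf{D}_G p_a$. This is just the observation that shrinking the witness set preserves a universal statement, and it holds for any subcomplex containing $X$. The hypothesis $|G|\le k+1$ only ensures the gate can hold at faces of $\mathcal{C}_k$ at all. (If $p_a$ is not in $\mathcal{L}^{DSL}_G$, i.e.\ $a\notin G$, the argument is unaffected.)

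\emph{Maximality.} This is the main obstacle, since ``largest'' must be made precise. I would read it as: any subcomplex $\mathcal{D}\subseteq\mathcal{C}$ such that every face of $\mathcal{C}$ of dimension $\le k$ lies in $\mathcal{D}$ and satisfies the preservation property, but also satisfies $\dim\le k$ (i.e.\ is a complex of ``groups of size at most $k+1$''), is contained in $skel_k(\mathcal{C})$. Conversely, no simplex of dimension $\le k$ can be dropped without losing a commitment.

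For the latter, let $Z\in\mathcal{C}$ with $|Z|\le k+1$, and suppose $Z\notin\mathcal{D}$. Choose $G:=\chi(Z)$ and an atom witnessing presence; for instance use $\mathbf{D}_G\top$, or, staying atomic, note that $\mathcal{C},Z\models\mathbf{D}_G p_a$ or $\mathcal{C},Z\models\mathbf{D}_{\{a\}}\neg p_a$ pattern. Then the point $(\mathcal{D},Z)$ is not even defined, so the commitment at $Z$ is lost.

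Adding any face of dimension $>k$ violates the size bound by definition. I would state this inclusion-maximality carefully, flagging that the nontrivial content is the first part and that the ``largest'' clause is essentially definitional once the class of admissible subcomplexes is fixed.
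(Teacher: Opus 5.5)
Your argument for the displayed implication is the paper's: passing to $skel_k(\mathcal{C})$ can only shrink the witness set $\{Y : G\subseteq\chi(X\cap Y)\}$, and atoms are evaluated locally with $\chi$ and $l$ unchanged. Your two additions are correct refinements. First, you check the gate explicitly. Second, you note that $|G|\le k+1$ is idle: for $|G|>k+1$ the antecedent already fails at every $X\in skel_k(\mathcal{C})$. For the ``Moreover'' clause, the paper also just says it follows from \autoref{def:skel}. Your conclusion, that it is definitional once admissible subcomplexes are required to have dimension at most $k$, agrees with this. Without that restriction the literal claim fails whenever $\dim\mathcal{C}>k$, since $\mathcal{C}$ itself is larger and trivially preserves everything.

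The step you should drop is the ``conversely'' claim that no simplex of dimension $\le k$ can be removed without losing an atomic joint commitment. Your argument assumes an atom $p_a$ with $\mathcal{C},Z\models\mathbf{D}_G p_a$, and such an atom need not exist. Take $k=1$ and a single edge $Z=\{v_a,v_b\}$ with $l(v_a)=l(v_b)=\varnothing$. Since $\mathbf{D}_G p_x$ entails $p_x$ at the point of evaluation (axiom \textbf{T}), no atomic joint commitment holds anywhere. So the subcomplex $\{\{v_a\},\{v_b\}\}$ omits $Z$ yet loses none. Your fallbacks do not help. Neither $\mathbf{D}_G\top$ nor $\mathbf{D}_{\{a\}}\neg p_a$ is a commitment to an atomic formula, and the latter would survive at the vertex $\{v_a\}$ anyway whenever that vertex is kept. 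As you yourself observe, preservation holds for every subcomplex containing $X$, so the preservation property cannot single out $skel_k(\mathcal{C})$. Maximality comes only from the dimension bound, which makes this extra claim both unneeded and false in general.
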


\begin{proof}
Assume $|G| \leq k+1$ and $\mathcal{C}, X \models \mathbf{D}_G p_a$.
Take any $Y \in skel_k(C)$ with $G \subseteq \chi(X \cap Y)$. Since $skel_k(C) \subseteq C,  Y \in C$, and the assumption gives $\mathcal{C}, Y \models p_a$, meaning $a \in \chi(Y)$ and $p_a \in l(Y)$. Since the skeleton operation does not alter vertex valuations  and since $Y$ was arbitrary, $skel_k(\mathcal{C}), X \models \mathbf{D}_G p_a.$

The second statement follows directly from Definition \ref{def:skel}.
%Since $X \in skel_k(\mathcal{C})$, $dim(X) \leq dim(skel_k(\mathcal{C}))$ and since $k \leq n$. $dim(skel_k(\mathcal{C})) \leq dim(\mathcal{C})$,then $skel_k(\mathcal{C}), X \models \mathbf{D}_G \varphi$.

\end{proof}

Other powerful tools, expressing commitments (or lack thereof) from the perspective of agents (or group of agents), include the \textit{star} and the \textit{link} operations:

\begin{definition}[Star]
\label{def:star}
Let $\mathcal{C}$ be a simplicial complex. The star of a simplex $X \in \mathcal{C}$, written $Star(X,\mathcal{C})$ is the subcomplex of $\mathcal{C}$ whose facets are the maximal simplices of $\mathcal{C}$ that contains $X$.
\end{definition}

\begin{theorem}
Given a deontic simplicial complex $\mathcal{C}$ and a vertex $v_a \in C$, the faces of $Star(v_a,\mathcal{C})$ are exactly the simplices $Y \in C$ such that $Y \cup \{v_a\} \in C$.
\end{theorem}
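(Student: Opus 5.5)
The plan is to prove the two inclusions separately, after first unfolding \autoref{def:star}. The star of $\{v_a\}$ is the subcomplex whose facets are the maximal simplices of $C$ containing $v_a$. Because a subcomplex is downward closed, its faces are exactly the non-empty sets $Y$ with $Y\subseteq F$ for some facet $F\in\mathcal F(C)$ with $v_a\in F$. So the claim reduces to showing, for every $Y\in C$:
\[
(\exists F\in\mathcal F(C))\,(v_a\in F \text{ and } Y\subseteq F) \iff Y\cup\{v_a\}\in C .
\]

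For $(\Rightarrow)$, take such an $F$. Then $Y\cup\{v_a\}\subseteq F$, and the union is non-empty. Downward closure of $C$ (\autoref{def:SimMod}) gives $Y\cup\{v_a\}\in C$. Also $Y\in C$, since $Y$ is a non-empty subset of $F$.

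For $(\Leftarrow)$, assume $Y\cup\{v_a\}\in C$. I need a facet of $C$ containing $Y\cup\{v_a\}$; once I have it, it contains $v_a$, so it is one of the generating facets of the star, and $Y$ is a face of it.

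Finding that facet is the only real step and the expected obstacle, since the vertex set $V$ may be infinite. It goes through because simplices are chromatic: each simplex has at most one vertex per agent, so every simplex has at most $|\agents|=n+1$ vertices. Starting from $Z_0:=Y\cup\{v_a\}$, I extend to a strictly larger simplex of $C$ while one exists. Each extension raises the size by at least one and sizes are bounded by $n+1$, so the process stops after at most $n+1$ steps at a maximal simplex $F\supseteq Z_0$, that is, $F\in\mathcal F(C)$.

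Finally, I would remark that $Y$ is allowed to contain $v_a$ itself, in which case $Y\cup\{v_a\}=Y$. I would also remark that the chromatic condition excludes nothing more: any simplex containing a different $a$-vertex than $v_a$ cannot have $v_a$ added to it, so it correctly falls outside the star on both sides of the equivalence.
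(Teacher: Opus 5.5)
Your proof is correct and follows the paper's argument: two inclusions, downward closure of $C$ for one direction, and extending $Y\cup\{v_a\}$ to a facet (which then contains $v_a$) for the other. Your one addition is the justification that such a facet exists, via the chromatic bound of at most $n+1$ vertices per simplex so that repeated extension terminates; the paper simply asserts this step.
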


\begin{proof}
($\subseteq$) If $Y \in Star(v_a,\mathcal{C})$, then by \autoref{def:star} $Y \subseteq F$ for some facet $F \in \mathcal{F}(C)$ with $v_a \in F$. Thus $Y \cup \{v_a\} \subseteq F \in C$, and since $C$ is downward closed, $Y \cup \{v_a\} \in C$.

($\supseteq$) If $Y \cup \{v_a\} \in C$, extend $Y \cup \{v_a\}$ to some facet $F \in \mathcal{F}(C)$. Then $v_a \in F$ and $Y \subseteq F$, so $Y$ is a face of a facet containing $v_a$, i.e. $Y \in Star(v_a,\mathcal{C})$.
\end{proof}

The informal reading is that $Y$ is in 
a's star exactly when $Y$'s configuration is extendable by a fresh commitment from a, that is, a could join whoever is already committed in $Y$.

Finally, the \textit{link} of an agent determines which commitments other agents would still have among themselves if that agent were to drop its commitment:

\begin{definition}[Link]
\label{def:link}
Let $\mathcal{C}$ be a simplicial complex. The link of a simplex $X \in \mathcal{C}$, written $Link(X,\mathcal{C})$ is the subcomplex of $\mathcal{C}$ consisting of all simplices in $Star(X,\mathcal{C})$ that do not have common vertices with $X$.
\end{definition}

\begin{theorem}
Given a deontic simplicial complex $\mathcal{C}$ and a vertex $v_a$, the faces of $Link(v_a,\mathcal{C})$ are exactly the simplices $Y \in C$ such that $v_a \notin Y$ and $Y \cup \{v_a\} \in C$.
\end{theorem}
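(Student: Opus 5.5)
The plan is to reduce the claim to the preceding theorem characterizing $Star(v_a,\mathcal{C})$, together with \autoref{def:link}. By \autoref{def:link}, a face $Y$ lies in $Link(v_a,\mathcal{C})$ exactly when two conditions hold: $Y \in Star(v_a,\mathcal{C})$, and $Y \cap \{v_a\} = \varnothing$. The second condition is simply $v_a \notin Y$. So the argument is a double inclusion, obtained by unpacking these two conditions.

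For $(\subseteq)$, take $Y \in Link(v_a,\mathcal{C})$. Then $Y \in Star(v_a,\mathcal{C})$, so the previous theorem gives $Y \cup \{v_a\} \in C$. Since $Y$ has no vertex in common with $\{v_a\}$, we get $v_a \notin Y$. Finally, $Y \in C$ because $Star(v_a,\mathcal{C})$ is a subcomplex of $\mathcal{C}$.

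For $(\supseteq)$, take $Y \in C$ with $v_a \notin Y$ and $Y \cup \{v_a\} \in C$. The previous theorem gives $Y \in Star(v_a,\mathcal{C})$. Since $v_a \notin Y$, we have $Y \cap \{v_a\} = \varnothing$, so $Y$ meets the defining condition of $Link(v_a,\mathcal{C})$.

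There is no substantial obstacle here. The one point to check is that the link of the vertex simplex $\{v_a\}$ is exactly the set of star faces avoiding $v_a$; for a $0$-simplex, ``no common vertices with $X$'' means exactly $v_a \notin Y$. I would also remark that faces are non-empty by \autoref{def:SimMod}, so $Y \neq \varnothing$ throughout, and that chromaticity then forces $a \notin \chi(Y)$, since $Y \cup \{v_a\}$ has at most one $a$-coloured vertex. This last observation is not needed for the equivalence itself, but it supports the informal reading: the link collects the configurations of the other agents that remain compatible with $a$'s commitment.
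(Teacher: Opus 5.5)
Your proposal is correct and follows the same route as the paper: read \autoref{def:link} at the $0$-simplex $\{v_a\}$ as ``star faces with $v_a \notin Y$'', then substitute the preceding characterization of $Star(v_a,\mathcal{C})$. Your double-inclusion write-up and the chromaticity remark ($a \notin \chi(Y)$) are accurate additions, but they are not needed for the equivalence.
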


\begin{proof}
By \autoref{def:link}, $Link(v_a,\mathcal{C}) = \{Y \in Star(v_a,\mathcal{C}) \mid v_a \notin Y\}$. Substituting the characterization of $Star(v_a,\mathcal{C})$ from the previous theorem yields exactly $\{Y \in C \mid v_a \notin Y \text{ and } Y \cup \{v_a\} \in C\}$.
\end{proof}

\begin{figure}[t]
    \centering
    \resizebox{8cm}{!}{%
\begin{tikzpicture}
\begin{scope}[xshift=-6cm,>=stealth, yshift=3cm]

  \coordinate (O)  at (0, 0);
  \coordinate (A)  at ($(O) + (90:2)$);
  \coordinate (B)  at ($(O) + (30:2)$);
  \coordinate (C)  at ($(O) + (-30:2)$);
  \coordinate (D)  at ($(O) + (-90:2)$);
  \coordinate (E)  at ($(O) + (-150:2)$);
  \coordinate (F)  at ($(O) + (150:2)$);

  % Fill triangle gray
  \foreach \from/\to in {A/B, B/C, C/D, D/E, E/F, F/A} {
    \filldraw[fill=gray!30, draw=black, thick] (O) -- (\from) -- (\to) -- cycle;
  }

  % Solid black dots 
  \foreach \p in {O, A, B, C, D, E, F} {
    \filldraw[black] (\p) circle (3pt);
  }

  \node[above right=2pt] at (O) {\Large{$v$}};

  \node at ($(O) + (0, -2.6)$) {(a) $Star(v,\mathcal{C})$};

  \coordinate (O2) at (5, 0);
  \coordinate (A2) at ($(O2) + (90:2)$);
  \coordinate (B2) at ($(O2) + (30:2)$);
  \coordinate (C2) at ($(O2) + (-30:2)$);
  \coordinate (D2) at ($(O2) + (-90:2)$);
  \coordinate (E2) at ($(O2) + (-150:2)$);
  \coordinate (F2) at ($(O2) + (150:2)$);

  % Outer border 
  \draw[thick] (A2) -- (B2) -- (C2) -- (D2) -- (E2) -- (F2) -- cycle;

  \foreach \p in {A2, B2, C2, D2, E2, F2} {
    \draw[dotted, thick] (O2) -- (\p);
  }

  \foreach \p in {A2, B2, C2, D2, E2, F2} {
    \filldraw[black] (\p) circle (3pt);
  }

  \draw[dotted, thick] (O2) circle (3pt);

  \node[above right=2pt] at (O2) {\Large{$v$}};

  \node at ($(O2) + (0, -2.6)$) {(b) $Link(v,\mathcal{C})$};

    \end{scope}
\end{tikzpicture}

    }
    \caption{Star operation and Link operation from vertex $v$.}
    \label{fig:starlink}
\end{figure}

Analogous notions for group of agents can be formulated by using $Star(X,\mathcal{C})$ and $Link(X,\mathcal{C})$.

The link operation hints at a dynamic component in the `removal' of vertex $v_a$. 
In the following we take a different approach to commitment dynamics, adapting the tools from Dynamic Epistemic Logic \cite{ditmarsch2007dynamic} to simplicial complexes.

\section{Dynamic Deontic Logic for Simplicial Complexes}
\label{sec:DDSL}

In this section we introduce the dynamic deontic simplicial logic \textbf{DDSL}, extending \textbf{DSL} with an action modality that captures transformations of the simplicial structure, 
thereby simulating agents' choice among mutually exclusive (joint) commitments.

We start with some preliminary definitions, adapting the seminal work of \cite{know_in_simpl}.
We fix a set of actions $\{\alpha_a,\beta_a,\ldots\}$ indexed by agents. The agent-index indicates which agent is responsible for executing that action. 
Actions here are not to be intended like actual epistemic actions as in the DEL setting \cite{ditmarsch2007dynamic}. Rather, they represent statements of commitments from agents (e.g., in a contract). In our interpretation, actions do not alter the epistemic state of the world, they transform the current set of commitments of a set of agents into a new one, obtained by agents re-stating their commitments. Consequently, these newly stated commitments gain a geometric interpretation: a vertex represents the new commitment of a single agent, an edge a new joint commitment by two agents, and so on, and all commitments that are not re-iterated are dropped.

Adopting the action models machinery from DEL, we give them a deontic and simplicial interpretation using the simplicial commitment update models:

\begin{definition}[Simplicial commitment update model]
\label{def:commUmod}
A simplicial commitment update model $\mathcal{U}$ is a triple $(E, \chi, com)$
where $E \ne \varnothing$ is a simplicial complex, $\chi$ is a chromatic map, and $com$ is a commitment function $com : V(E) \mapsto \mathcal{L}^{DSL}_{\{a\}}$ assigning to each vertex $\alpha_a \in V(E)$ a formula from the static fragment restricted to agent $a$, where $a \in \chi(\alpha_a)$.

In particular, we require $com$ to satisfy the following stability condition:
if $\mathcal{C}, X \models com(\alpha_a)$ and $v_a \in Y \subseteq X$ then $\mathcal{C}, Y \models com(\alpha_a)$.
\end{definition}

In the same manner that standard action models in Dynamic Epistemic Logic \cite{ditmarsch2007dynamic} share structural similarities with Kripke models, simplicial commitment update models are also structurally similar to simplicial models: they both are made of a simplicial complex plus a chromatic map. What sets simplicial commitment update models apart is the commitment function $com$.
Formally, it has the same role of the \textit{precondition function} in Dynamic Epistemic Logic \cite{ditmarsch2007dynamic}, in the sense that it `gates' the formulas that are allowed to appear in the resulting model. However, the commitment function has a very distinct interpretation, as it represents the commitment of agents. An $a$-vertex in an update model with $p_a$ represents the fact that agent $a$ now commits to $p_a$. If an agent does not alter their commitments, the commitment function is simply $\top$.

The stability condition on $com$ requires that whenever an agent's commitment formula $com(\alpha_a)$ is satisfied at a face $X$, it continues to be satisfied at any subface $Y \subseteq X$ that still contains a's vertex $v_a$.
For boolean combinations of propositional variables from $P_a$, stability holds because the truth of $p_a$ depends only on the local valuation at vertex $v_a$, which is fixed regardless of the surrounding face. 
For singleton modalities 
$\mathbf{D}_{\{a\}}\psi$, stability holds because by chromaticity the set of faces checked when evaluating $\mathbf{D}_{\{a\}}\psi$ is the same at any subface $Y \subseteq X$ containing $v_a$ as at $X$ itself. Boolean combinations of these two forms therefore also satisfy the condition.

One clear advantage of using simplicial commitment update models is that they can naturally model joint actions, representing agents in $G$ simultaneously committing to certain formulas.
Formally, given a simplicial update model $\mathcal{U}=(E,\chi,com)$, we use notation $\alpha_G \subseteq E$ for joint actions made of individual actions $\alpha_a \in \alpha_G$. 
For joint actions $\alpha_G$, we use $com(\alpha_G) = \bigwedge_{a \in G} com (\alpha_a)$.
Being represented by vertices in 
$\mathcal{U}$, single agent actions have dimension $0$, while joint actions have dimension greater than $0$ in the corresponding simplicial commitment update model.
Examples of different simplicial commitment update models are offered in \autoref{fig:SAME}.

\begin{figure}[t]
    \centering
    \resizebox{11cm}{!}{%
\begin{tikzpicture}
    \begin{scope}[xshift=-6cm,>=stealth, yshift=3cm]
    
    \node[ label=135:$\alpha_a$, draw,minimum size=8mm,fill=red!80] (3) {$ p_a$};

    \node (h) [above = 1cm  of 3] {\Large{$\mathcal{U}$}};
    
    \node[label=135:$\beta_a$, draw,minimum size=9mm,fill=red!80] (1)  [right = 2cm  of 3] {$ p_a$};

    \node[label=95:$\beta_b$, draw,minimum size=9mm,fill=blue!50] (2) [right = 2cm  of 1] {$p_b$};

    \node (h) [above = 1cm  of 1] {\Large{$\mathcal{U'}$}};
    \node (h') [above = .5cm  of 1] {};
    \node (f) [right = .5cm  of h'] {$\beta_{\{a,b\}}$};
    
    \path[]
        (1) edge node[above] {} (2);

    \node[draw,minimum size=9mm, fill=red!80,label=135:$\gamma_a$] (4) [right = 2cm  of 2] {$ p_a$};
    \node[,draw,minimum size=9mm, fill=blue!50,label=45:$\gamma_b$] (5) [right = 2cm  of 4]  {$\top$};%{$p_b$};
    \node (g) [right = 1cm  of 4]  {};
    \node[draw,minimum size=9mm, fill=green,label=135:$\gamma_c$] (6) [above = 1.2cm  of g] {$p_c$}; 
    %\node (h) [above = 0.5cm  of g]  {$X$};
    \node (f) [above = 1cm  of 4]  {\Large{$\mathcal{U}''$}};
    
   % \path[]
   %     (4) edge node[above] {} (5)
   %     (4) edge node[above] {} (6)
   %     (6) edge node[above] {} (5);

    \end{scope}
\end{tikzpicture}
    }
    \caption[Examples of simplicial commitment update models.]{Examples of simplicial commitment update models. $\mathcal{U}$ (left) represents the single agent action where $a$ commits to $p_a$. $\mathcal{U}'$ (middle) illustrates the joint action $\beta_{\{a,b\}}$ made of single actions $\beta_a$ and $\beta_b$ where $a$ and $b$ jointly commit to $p$. %respectively $p_a$ and $p_b$ .
    $\mathcal{U}''$ (right) represents the simultaneous (but not joint) action where $a$ and $c$ commit to $p$ while $b$ does not alter her commitments ($\top$), but with no joint commitment.}
    \label{fig:SAME}
\end{figure}

The effects of re-instating joint commitments are captured by the simplicial product update operation:

\begin{definition}[Simplicial product update]
\label{def:SPU}
Given a simplicial model $\mathcal{C}=(C, \chi, l)$ with simplices $X \subseteq C$ and a simplicial commitment update model $\mathcal{U}=(E, \chi', com)$ with simplices $\alpha_G \subseteq E$, their simplicial product update results in the updated simplicial model $\mathcal{C}\otimes \mathcal{U}=(C^u, \chi^u, l^u)$, defined as follows:

We define the set of vertices of $\mathcal{C}\otimes \mathcal{U}$ as follows: $V^u := \{(v_a,\alpha_a)\in V(C) \times V(E) \mid \chi(v_a)=\chi'(\alpha_a)\}$, and

%for non-empty $Z \subseteq V^u$

\begin{itemize}
    \item $C^u    := \{(X,\alpha_G) \subseteq C \times E | \mathcal{C},X \models com(\alpha_a) \text{ for all } \alpha_a \in \alpha_G \text{ and s.t. } \chi(X)=\chi'(\alpha_G) \}$, where each pair $(X, \alpha_G)$
 represents the set of vertex pairs $\{(v_a, \alpha_a) \in V^u \mid v_a \in X,\ \alpha_a \in \alpha_G\}$,

    \item $\chi^u(X,\alpha_G) = \chi(X)=\chi'(\alpha_G)$
 
    %$\chi^u(v,\alpha_a) = \chi(v) = \chi'(\alpha_a) $
    \item $p_a \in l^u(v,\alpha_a)$ iff $p_a \in l(v) $ for all $p_a \in P_a$ 
    %and $\chi^u(v,\alpha_a) = \chi(v)=\chi'(\alpha_a)$
    %$p_a \in l^u(v,\alpha_a)$ iff $p_a \in l(v) $ for all $p_a \in P_a$
\end{itemize}
    
\end{definition}

The stability condition of $com$ (\autoref{def:commUmod}) guarantees that $C^u$ is a simplicial complex, i.e., closed under taking non-empty subsets. 
Concretely, if $(X, \alpha_G) \in C^u$ and $(Y, \beta_G)$ is obtained by removing some agents then $Y \in C$ and $\beta_G \in E$ by downward closure of $C$ and $E$ respectively, and the commitment condition at $Y$ follows from stability. Intuitively, the condition expresses that a commitment statement by agent $a$ concerns only $a$'s own local state at vertex $v_a$, and does not depend on which other agents happen to be present in the same configuration.

Intuitively, the simplicial product update keeps exactly those faces of $\mathcal C$ whose agents' new commitments (as specified by $com$) are already satisfied, paired with the corresponding action vertices; agents whose new commitment fails are simply excluded from the resulting configuration. Valuation and coloring on the surviving vertices are otherwise unchanged.

\begin{definition}[Language $\mathcal{L}^{DDSL}$]
\label{def:langLDDSL}
For $G \subseteq \agents$, the fragment $\mathcal{L}^{DDSL}_G$ is defined by mutual recursion on $G$:

\begin{align*}
\varphi ::= & \top \ \mid\ p_a\ (a\in G) \ \mid\ \neg\varphi \ \mid\ (\varphi\wedge\varphi) \ \mid\ \mathbf{D}_H\varphi\ (\varnothing\ne H\subseteq G,\ \varphi\in\mathcal{L}^{DDSL}_H) \ \mid\  \\
      & [\alpha_H]\varphi\ (\varnothing\ne H\subseteq G,\ \varphi\in\mathcal{L}^{DDSL}_H) 
\end{align*}

%\[
%\varphi ::= \top \ \mid\ p_a\ (a\in G) \ \mid\ \neg\varphi \ \mid\ (\varphi\wedge\varphi) \ \mid\ \mathbf{D}_H\varphi\ (\varnothing\ne H\subseteq G,\ \varphi\in\mathcal{L}^{DDSL}_H) \ \mid\ 
%[\alpha_H]\varphi\ (\varnothing\ne H\subseteq G,\ \varphi\in\mathcal{L}^{DDSL}_H),
%\]
where in the last clause $\alpha_H$ ranges over $H$-colored faces of a fixed, already-given simplicial commitment update model $\mathcal U=(E,\chi',com)$ with $com(\gamma_a)\in\mathcal L^{DDSL}_{\{a\}}$ for every $\gamma_a\in V(E)$. We set $\bot:=\neg\top$, define $\to,\vee$ as usual, and $\langle\alpha_H\rangle\varphi:=\neg[\alpha_H]\neg\varphi$. The full language is $\mathcal{L}^{DDSL}:=\mathcal{L}^{DDSL}_\agents$.

\end{definition}

\begin{remark}
Since $\mathcal L^{DSL}_{\{a\}}\subseteq\mathcal L^{DDSL}_{\{a\}}$, the codomain $\mathcal L^{DSL}_{\{a\}}$ of $com$ required by \autoref{def:commUmod} for primitive update models already satisfies the requirement above. As we will see, composed update models (\autoref{def:compo}) are assigned the richer codomain $\mathcal L^{DDSL}_{\{a\}}$ directly.
\end{remark}

\begin{lemma}[Monotonicity of stratification, DDSL]
\label{lem:strat-mono-ddsl}
If $H\subseteq G$, then $\mathcal L^{DDSL}_H\subseteq\mathcal L^{DDSL}_G$.
\end{lemma}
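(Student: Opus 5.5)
The plan is to mirror the proof of \autoref{lem:strat-mono} and argue by structural induction on $\varphi\in\mathcal L^{DDSL}_H$, with $H\subseteq G$ fixed. Since the fragments are defined by mutual recursion on the group index, the induction is on the construction of $\varphi$ as an element of $\mathcal L^{DDSL}_H$. Each clause used to build $\varphi$ will be matched by the corresponding clause of the grammar for $\mathcal L^{DDSL}_G$.

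The base cases are immediate. $\top$ belongs to every fragment. If $\varphi=p_a$ with $a\in H$, then $a\in G$ because $H\subseteq G$, so $p_a\in\mathcal L^{DDSL}_G$. For $\neg\psi$ and $(\psi_1\wedge\psi_2)$, the induction hypothesis places the immediate subformulas in $\mathcal L^{DDSL}_G$, and the same Boolean clause closes the case.

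For $\varphi=\mathbf D_K\psi$ with $\varnothing\ne K\subseteq H$ and $\psi\in\mathcal L^{DDSL}_K$, we have $K\subseteq H\subseteq G$. The modal clause of $\mathcal L^{DDSL}_G$ with the same $K$ and the same $\psi$ then yields $\varphi\in\mathcal L^{DDSL}_G$. No induction hypothesis is needed here, since the side condition on $\psi$ refers to $\mathcal L^{DDSL}_K$, which does not depend on $H$ or $G$. The action case $\varphi=[\alpha_K]\psi$ with $\varnothing\ne K\subseteq H$ and $\psi\in\mathcal L^{DDSL}_K$ is handled identically. The extra requirement that $\alpha_K$ is a $K$-colored face of the fixed update model $\mathcal U$ refers only to $K$, so it transfers unchanged.

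The only point needing care is this action clause. One should check that the side condition on $\alpha_K$, and the requirement that $com(\gamma_a)\in\mathcal L^{DDSL}_{\{a\}}$, are stated independently of the ambient fragment index. Otherwise the clause could not simply be reused at $G$. Reading \autoref{def:langLDDSL}, both conditions depend only on $K$ and $\mathcal U$, so the argument goes through verbatim, exactly as in \autoref{lem:strat-mono}.
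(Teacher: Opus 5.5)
Your proof is correct and matches the paper's argument: structural induction as in \autoref{lem:strat-mono}, where the action case $[\alpha_K]\psi$ is closed by reusing the same clause with the same $K$, since $K\subseteq H\subseteq G$. Your extra check that the side conditions on $\alpha_K$ and on $com$ depend only on $K$ and $\mathcal U$ is a sound detail that the paper leaves implicit.
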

\begin{proof}
As \autoref{lem:strat-mono}, with an identical additional case for $\varphi=[\alpha_K]\psi$ ($K\subseteq H$, $\psi\in\mathcal L^{DDSL}_K$): then $K\subseteq H\subseteq G$, so $\varphi\in\mathcal L^{DDSL}_G$ via the same clause with the same $K$.
\end{proof}

\begin{remark}
At $G=\{a\}$, since the only nonempty $H\subseteq\{a\}$ is $H=\{a\}$ itself, \autoref{def:langLDDSL} specializes to
\[
\varphi ::= \top \mid p_a \mid \neg\varphi \mid (\varphi\wedge\varphi) \mid \mathbf{D}_{\{a\}}\varphi \mid [\gamma_a]\varphi,
\]
the single-agent dynamic fragment used throughout this section. Here $\gamma_a$ always belongs to a fixed, previously-constructed update model, so $com(\gamma_a)$ is simply whatever formula that model already assigns. 
\end{remark}

\begin{lemma}[Restriction]
\label{lem:restriction}
Let $X\in C$ and $\varnothing\ne H\subseteq\chi(X)$. Define $X_H:=\{v\in X:\chi(v)\in H\}$. Then $X_H\in C$ and $\chi(X_H)=H$.
\end{lemma}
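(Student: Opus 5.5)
The plan is to verify the two claims directly from \autoref{def:SimMod}, using only downward closure and chromaticity.

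\emph{$X_H\in C$.} By definition $X_H\subseteq X$, and $X\in C$. So once we know $X_H\neq\varnothing$, downward closure of $C$ gives $X_H\in C$. For non-emptiness, use $H\neq\varnothing$: pick $a\in H$. Since $H\subseteq\chi(X)$, some vertex $v\in X$ has $\chi(v)=a\in H$. Hence $v\in X_H$.

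\emph{$\chi(X_H)=H$.} Prove the two inclusions separately.
\begin{itemize}
\item[$(\subseteq)$] Every $v\in X_H$ satisfies $\chi(v)\in H$ by the defining condition of $X_H$.
\item[$(\supseteq)$] Given $a\in H\subseteq\chi(X)$, there is $v\in X$ with $\chi(v)=a$. This $v$ lies in $X_H$, so $a\in\chi(X_H)$.
\end{itemize}
Chromaticity is not needed for the equality itself. It does show that $X_H$ contains exactly one vertex per agent of $H$, since $X$ has at most one $a$-vertex; this gives $|X_H|=|H|$, which may be convenient later.

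No step here is a real obstacle. The only point needing care is the non-emptiness of $X_H$, because simplices must be non-empty. This is exactly where the hypothesis $H\neq\varnothing$ is used, and it must be combined with $H\subseteq\chi(X)$ rather than with $H\subseteq\agents$ alone.
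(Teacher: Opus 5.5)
Your proof is correct and follows the paper's argument. The inclusion $\chi(X_H)\subseteq H$ holds by definition, the reverse inclusion picks the $a$-vertex of $X$ for each $a\in H$, and non-emptiness plus downward closure then give $X_H\in C$; your remark that chromaticity is not needed for the equality is also right, since the paper mentions uniqueness of $v_a$ only parenthetically and never uses it.
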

\begin{proof}
By construction $\chi(X_H)\subseteq H$. Since $H\subseteq\chi(X)$, for each $a\in H$ there is a (unique, by chromaticity) vertex $v_a\in X$ with $\chi(v_a)=a$, and $v_a\in X_H$; so $\chi(X_H)\supseteq H$, giving $\chi(X_H)=H$. In particular $X_H\ne\varnothing$. Since $C$ is downward closed and $\varnothing\ne X_H\subseteq X\in C$, $X_H\in C$.
\end{proof}

\begin{definition}[Semantics]
\label{def:DDSLsem}
The semantics of $\mathcal{L}^{DDSL}$ extends the semantics of $\mathcal{L}^{DSL}$ (\autoref{def:semL}), with the following additional clause. For $\varnothing\ne H\subseteq\agents$, $\alpha_H\in V(E)$ an $H$-colored face of a fixed update model $\mathcal U=(E,\chi',com)$, and any $X\in C$:
\[
\mathcal{C},X \models [\alpha_H]\varphi \;\iff\;
\big(\mathcal{C},X\models E_H \text{ and } \mathcal{C},X_H\models com(\alpha_H)\big) \implies \mathcal{C}\otimes\mathcal{U},(X_H,\alpha_H)\models\varphi,
\]
where $X_H$ is given by \autoref{lem:restriction} whenever $\mathcal C,X\models E_H$ (i.e.\ $H\subseteq\chi(X)$); if $\mathcal C,X\not\models E_H$, the antecedent is false and the clause holds vacuously. The dual is $\langle\alpha_H\rangle\varphi:=\neg[\alpha_H]\neg\varphi$.We sometimes use the shorthand $\pi(\alpha_H) := E_H\wedge com(\alpha_H)$ for the joint gate-and-commitment precondition.
\end{definition}

Intuitively, applying action $\alpha_G$ results in $\varphi$ holding in the resulting simplicial model $\mathcal{C}\otimes\mathcal{U} $, provided $\alpha_G$ is executable in $X$ i.e., the formulas of the commitment function of each individual action in the joint action hold. 

The requirements of the gate condition $E_H$ and of the restriction to $X_H$ are necessary, because two things can go wrong when a subgroup $H$ acts at a configuration $X$ that may involve more agents than $H$. 
First, at $X$ some member of $H$ may be missing, in which case there is nothing for the action to act on, and the clause is vacuously true (via $E_H$). 
Second, even when $H$ is present, the action's effects should only ever touch $H$'s own vertices, not the unrelated commitments of other agents who happen to share the configuration. This is taken care of by restricting to $X_H$ (\autoref{lem:restriction}). The Examples below only ever need the trivial case $H=\chi(X)$, but the general clause is essential once a subgroup's action must be evaluated at a strictly larger, mixed configuration.

\begin{remark}
At $H=\chi(X)$, $E_H$ holds trivially and $X_H=X$, recovering the naive matching-group clause used implicitly throughout the Examples below. At $H=\{a\}$, $X_{\{a\}}=\{v_a\}$, recovering the single-agent case previously stated as a separate local clause; that definition is superseded and is not restated separately.
\end{remark}

\begin{lemma}
\label{lem:diamond}
For $\varnothing\ne H\subseteq\agents$, $\alpha_H\in V(E)$, $X\in C$, and $\theta\in\mathcal{L}^{DDSL}$:
\[
\mathcal{C},X\models\langle\alpha_H\rangle\theta \;\iff\;
\mathcal{C},X\models E_H \text{ and } \mathcal{C},X_H\models com(\alpha_H) \text{ and } \mathcal{C}\otimes\mathcal{U},(X_H,\alpha_H)\models\theta.
\]
\end{lemma}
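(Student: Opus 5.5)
The plan is to unfold the definition of the dual, $\langle\alpha_H\rangle\theta:=\neg[\alpha_H]\neg\theta$, and then apply the semantic clause for $[\alpha_H]$ from \autoref{def:DDSLsem} together with the classical clauses for $\neg$ from \autoref{def:semL}. Concretely, $\mathcal C,X\models\langle\alpha_H\rangle\theta$ holds iff $\mathcal C,X\not\models[\alpha_H]\neg\theta$. By \autoref{def:DDSLsem}, $\mathcal C,X\models[\alpha_H]\neg\theta$ holds iff the material implication
\[
\big(\mathcal C,X\models E_H \text{ and } \mathcal C,X_H\models com(\alpha_H)\big)\implies \mathcal C\otimes\mathcal U,(X_H,\alpha_H)\models\neg\theta
\]
is true. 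Its failure means the antecedent holds and the consequent fails. By the negation clause, failure of the consequent is exactly $\mathcal C\otimes\mathcal U,(X_H,\alpha_H)\models\theta$. This yields the three conjuncts of the right-hand side.

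The one point needing care is well-definedness: the object $(X_H,\alpha_H)$ only makes sense when $H\subseteq\chi(X)$. I would handle this by a case split on whether $\mathcal C,X\models E_H$. If $E_H$ fails, then $[\alpha_H]\neg\theta$ holds vacuously, as stipulated in \autoref{def:DDSLsem}, so $\langle\alpha_H\rangle\theta$ is false. The right-hand side is also false, since its first conjunct fails, and the remaining conjuncts need never be evaluated. If $E_H$ holds, then \autoref{lem:restriction} gives $X_H\in C$ with $\chi(X_H)=H$, so $\mathcal C,X_H\models com(\alpha_H)$ is meaningful. When this commitment condition also holds, $(X_H,\alpha_H)$ satisfies the membership condition for $C^u$ in \autoref{def:SPU}, since $\chi(X_H)=H=\chi'(\alpha_H)$ and each $com(\alpha_a)$ holds at $X_H$. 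Hence $(X_H,\alpha_H)\in C^u$ is a genuine point of $\mathcal C\otimes\mathcal U$, and evaluating $\theta$ there is legitimate.

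I expect no real obstacle here; the argument is propositional bookkeeping. The only subtle step is the membership check $(X_H,\alpha_H)\in C^u$. It uses that $com(\alpha_H)=\bigwedge_{a\in H}com(\alpha_a)$, so truth of the conjunction at $X_H$ gives each individual commitment condition required in \autoref{def:SPU}.
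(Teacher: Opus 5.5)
Your proof is correct and follows the paper's argument: unfold $\langle\alpha_H\rangle\theta$ as $\neg[\alpha_H]\neg\theta$, apply the clause of \autoref{def:DDSLsem}, and negate the material implication. Your case split on $E_H$ and your check that $(X_H,\alpha_H)\in C^u$ (via $\chi(X_H)=H=\chi'(\alpha_H)$ and the conjunctive reading of $com(\alpha_H)$) are well-definedness bookkeeping that the paper leaves implicit.
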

\begin{proof}
$\langle\alpha_H\rangle\theta:=\neg[\alpha_H]\neg\theta$. By \autoref{def:DDSLsem}, $\mathcal C,X\models[\alpha_H]\neg\theta$ iff $(\mathcal C,X\models E_H$ and $\mathcal C,X_H\models com(\alpha_H))$ implies $\mathcal C\otimes\mathcal U,(X_H,\alpha_H)\not\models\theta$. Negating, using $\neg(P\to\neg Q)\equiv P\wedge Q$, gives the claim.
\end{proof}

%In particular, 
%for single-agent actions, $\gamma_a\in V(E), 
%\mathcal{C},X \models [\gamma_a]\varphi \;\Longleftrightarrow\;
%\mathcal{C},\{v_a\}\models com(\gamma_a) \text{ implies } \mathcal{C}\otimes\mathcal{U},(v_a,\gamma_a)\models\varphi$,
%and the semantics for the single-agent action diamond operation is: 
%$\mathcal{C},X\models\langle\gamma_a\rangle\theta \;\Longleftrightarrow\;
%\mathcal{C},\{v_a\}\models com(\gamma_a) \text{ and } \mathcal{C}\otimes\mathcal{U},(v_a,\gamma_a)\models\theta.$

\subsection{\textit{Examples of Commitment Dynamics}}

Our goal is to use actions to model agent choices when facing multiple possible (mutually exclusive) commitments.
We explore the intended meaning of our semantics using some examples.

\begin{example}[Simple choice]
\label{ex:Schoice}
For this example we use Version 6 of our motivating example. The simplicial complex is reported again here in
\autoref{fig:EX3} (left). As stated already in \autoref{ex:meeting_formal}, $a$ and $c$ have a mutual commitment to $p$, formally, $\mathcal{C}, X \models \mathbf{D}_{\{a,c\}} (p_a \wedge p_c)$. $b$ is not yet committed to only one of the two options, since, formally, $\mathcal{C}, X \models \mathbf{D}_{\{a,b,c\}} (p_a \wedge p_b \wedge p_c)$ and $\mathcal{C}, Y\models \mathbf{D}_{\{a,b,c\}} (p_a \wedge \neg p_b \wedge p_c)$. 
Say $b$ decides to join the party $p$. This joint action $\alpha_G$ is represented in \autoref{fig:EX3} (middle). In it, $b$ commits to $p_b$, while $a$ and $c$ do not alter their commitments, as highlighted by the use of $\top$ in the preconditions of the action model.
The resulting model is shown in \autoref{fig:EX3} (right). The joint decision results in a mutual commitment to $p$: $\mathcal{C},X \models [\alpha_G] \mathbf{D}_{\{a,b,c\}} (p_a \wedge p_b \wedge p_c)$, as $\mathcal{C},X \models com(\alpha_G)$ and $\mathcal{C}\otimes \mathcal{U},(X,\alpha_G)\models \mathbf{D}_{\{a,b,c\}} (p_a \wedge p_b \wedge p_c)$.

\begin{figure}[t]
    \centering
    \resizebox{14cm}{!}{%
\begin{tikzpicture}
    
    \node[circle,draw,minimum size=9mm, fill=red!80,label=180:$v_3$] (1) {$p_a$};
    \node[circle,draw,minimum size=9mm, fill=blue!50,label=45:$v_2$] (2) [right = 2cm  of 1]  {$p_b$};
    \node (g) [right = 1cm  of 1]  {};
    \node[circle,draw,minimum size=9mm, fill=green,label=135:$v_1$] (3) [above = 2cm  of g] {$p_c$}; 
    \node (h) [above = 0.5cm  of g]  {$X$};
    \node (g') [left = 1.1cm  of 1]  {};
    \node[circle,draw,minimum size=9mm, fill=blue!50,label=135:$v_4$] (4) [above = 1.2cm  of g']  {$\neg p_b$};
    \node (f) [above = 1cm  of 2]  {\Large{$\mathcal{C}$}};
    \node (h') [above = 0.6cm  of 1]  {$Y$};
    
    \path[]
        (1) edge node[above] {} (2)
        (1) edge node[above] {} (3)
        (1) edge node[above] {} (4)
        (3) edge node[above] {} (2)
        (4) edge node[above] {} (3);

    \begin{pgfonlayer}{background}
\draw[fill=black!10,line width=0.1mm,line cap=round,line join=round] (1.center)--(2.center)--(3.center)--cycle;
\draw[fill=black!10,line width=0.1mm,line cap=round,line join=round] (1.center)--(4.center)--(3.center)--cycle;
\end{pgfonlayer}

    \node[draw,minimum size=9mm, fill=red!80,label=135:$\alpha_a$] (14) [right = 2cm  of 2] {$\top$};
    \node[,draw,minimum size=9mm, fill=blue!50,label=45:$\alpha_b$] (15) [right = 2cm  of 14]  {$p_b$};
    \node (g) [right = 1cm  of 14]  {};
    \node[draw,minimum size=9mm, fill=green,label=135:$\alpha_c$] (16) [above = 2cm  of g] {$\top$}; 
    \node (h) [above = 0.5cm  of g]  {$\alpha_{\{a,b,c\}}$};
    \node (f) [above = 1cm  of 14]  {\Large{$\mathcal{U}$}}; %\alpha_{\{a,b,c\}}$};
    
    \path[]
        (14) edge node[above] {} (15)
        (14) edge node[above] {} (16)
        (16) edge node[above] {} (15);

     \begin{pgfonlayer}{background}
\draw[fill=black!10,line width=0.1mm,line cap=round,line join=round] (15.center)--(14.center)--(16.center)--cycle;
\end{pgfonlayer}

    \node[circle,draw,minimum size=9mm, fill=red!80, label=135:$(v_3{,}\alpha_a)$] (6) [right = 2.3cm  of 15] {$p_a$};
    \node[circle,draw,minimum size=9mm, fill=blue!50, label=45:$(v_2{,}\alpha_b)$] (7) [right = 2cm  of 6]  {$p_b$};
    \node (g) [right = 1cm  of 6]  {};
    \node[circle, draw,minimum size=9mm, fill=green,, label=135:$(v_1{,}\alpha_c)$] (8) [above = 2cm  of g] {$p_c$}; 
    \node (h) [above = 0.2cm  of g]  {$(X,\alpha_{\{a,b,c\}})$};
    \node (f) [above = 1cm  of 7]  {\Large{$\mathcal{C}\otimes\mathcal{U}$}};
    
    \path[]
        (6) edge node[above] {} (7)
        (6) edge node[above] {} (8)
        (8) edge node[above] {} (7);

    \begin{pgfonlayer}{background}
\draw[fill=black!10,line width=0.1mm,line cap=round,line join=round] (6.center)--(7.center)--(8.center)--cycle;
\end{pgfonlayer}

\end{tikzpicture}
    }
    \caption[Simplicial models for \autoref{ex:Schoice}.]{Simplicial models for \autoref{ex:Schoice}. Initial simplicial model $\mathcal{C}$ (left) 
    with two possible configurations, $X$ (everybody commits to $p$) and $Y$ ($a$ and $c$ commit to $p$, $b$ to $\neg p$).
    simplicial commitment update model $\mathcal{U}$ (middle), where $b$ commits to $p_b$ and $a$ and $c$ do not alter their commitments.
    Resulting simplicial model $\mathcal{C}\otimes\mathcal{U}$ (right), where all agents are committed to do $p$.}
    \label{fig:EX3}
\end{figure}

\end{example}

\begin{example}[Partial disagreement]
\label{ex:disa}
Take again the same scenario as above, where $a$ and $c$ are committed to join the party, while $b$ still has not decided, but is committed to either come or not. The initial situation is identical to the one in the previous example, depicted in \autoref{fig:EX3} (left).
$b$ suggests a theme for the party that $a$ and $c$ really do not like. After some discussion, the three friends end up arguing and decide separately what to do. $a$ and $c$ jointly decide to go to the party, but they do not communicate their decision to $b$, sharing no commitment to him. 
At the same time, $b$ decides not to go to the party, again without communicating her commitment to the others, violating her previous joint commitment.
The corresponding impure simplicial commitment update model is shown in \autoref{fig:disa} (left).
The resulting simplicial model \autoref{fig:disa} (right) is hence partially disconnected, representing the fact that $a$ and $c$ have a mutual commitment among each other, but none towards $b$, and that $b$ has no commitment to the others.
In particular, $a$ and $c$ retain their mutual commitment: $\mathcal{C}, \overline{AC} \models [\beta_G] D_{\{a,c\}}(p_a \wedge p_c)$, while $b$ does not: $\mathcal{C}, X \not\models [\beta_G] D_{\{a,b,c\}}(p_a \wedge p_c \wedge \neg p_b)$ because $b \notin \chi(X,\beta_G)$, as $(X,\beta_G) \notin C^u$. 
This is precisely an instance of axiom \textbf{R4b} (see \autoref{def:RAX}): here $H=\{a,c\}$ is the acting group, $K=\{a,b,c\}\not\subseteq H$, so $[\beta_G]\mathbf D_{\{a,b,c\}}(\cdots)$ can only be true if the action itself fails to execute, which it does not, so the formula is false.
Importantly, $\mathcal{C \otimes \mathcal{U}'}$ still presents some commitment for agent $b$, namely in vertex $(v_4,\beta_b)$, although it is not a joint commitment: $\mathcal{C \otimes \mathcal{U}'},(v_4,\beta_b) \models \mathbf{D}_{\{b\}}\neg p_b$.

\begin{figure}[t]
    \centering
    \resizebox{13cm}{!}{%
\begin{tikzpicture}
    
    \node[draw,minimum size=9mm, fill=red!80,label=180:$\beta_a$] (1) {$p_a$};
    \node (g) [right = 1cm  of 1]  {};
    \node[draw,minimum size=9mm, fill=green,label=135:$\beta_c$] (3) [above = 2cm  of g] {$p_c$}; 
    \node (g') [left = 1.1cm  of 1]  {};
    \node[draw,minimum size=9mm, fill=blue!50,label=135:$\beta_b$] (4) [above = 1.2cm  of g']  {$\neg p_b$};
    \node (f) [right = .5cm  of 3] {\Large{$\mathcal{U}'$}};
    
    \path[]       
        (1) edge node[above] {} (3);

    \node[circle,draw,minimum size=9mm, fill=red!80,label=180:$(v_3{,}\beta_a)$] (10) [right = 7cm  of 1] {$p_a$};
    \node (a) [right = 1cm  of 10]  {};
    \node[circle,draw,minimum size=9mm, fill=green,label=135:$(v_1{,}\beta_c)$] (30) [above = 2cm  of a] {$p_c$}; 
    \node (a') [left = 1.1cm  of 10]  {};
    \node[circle,draw,minimum size=9mm, fill=blue!50,label=135:$(v_4{,}\beta_b)$] (40) [above = 1.2cm  of a']  {$\neg p_b$};
    \node (b) [right = .5cm  of 30]  {\Large{$\mathcal{C}\otimes \mathcal{U}'$}};
    
    \path[]       
        (10) edge node[above] {} (30);

\end{tikzpicture}
    }
    \caption[Simplicial models for \autoref{ex:disa}]{Simplicial commitment update model $\mathcal{U}'$ (left) representing $a$ and $c$'s joint commitment to $p$ and simultaneous but not joint commitment of $b$ to $\neg p$, without any commitment to the others. Resulting model $\mathcal{C}\otimes \mathcal{U}'$ (right) representing $a$ and $c$'s joint commitment to $p$ and $b$'s commitment to $\neg p$.}
    \label{fig:disa}
\end{figure}

\end{example}

\begin{example}[Ghosting]
\label{ex:ghost}
We use again the same starting scenario as above, depicted in \autoref{fig:EX3} (left). However, this time $b$ violates her previous joint commitment by simply not answering in the group chat.
The corresponding impure simplicial commitment update model is shown in \autoref{fig:ghost} (left). Differently from the previous example (\autoref{fig:disa} (left)), $\mathcal{U}''$ does not have any blue colored vertex, meaning that $b$ completely withdraws any commitment, while $a$ and $c$ retain any previous commitment (as indicated by the commitment to $\top$ in their joint action).
The resulting simplicial model \autoref{fig:ghost} (right) is made only of $a$ and $c$ vertices, linked by an edge.
Also in this example, $a$ and $c$ retain their mutual commitment: $\mathcal{C}, \overline{AC} \models [\gamma_G] D_{\{a,c\}}(p_a \wedge p_c)$.
Unlike the previous example, there is no simplex where $b$ has a commitment for herself.

\begin{figure}[t]
    \centering
    \resizebox{12cm}{!}{%
\begin{tikzpicture}
    
    \node[draw,minimum size=9mm, fill=red!80,label=180:$\gamma_a$] (1) {$\top$};
    \node (g) [right = 1cm  of 1]  {};
    \node[draw,minimum size=9mm, fill=green,label=135:$\gamma_c$] (3) [above = 2cm  of g] {$\top$}; 
    \node (g') [left = 1.1cm  of 1]  {};
    \node[] (4) [above = 1.2cm  of g']  {};
    \node (f) [right = .5cm  of 3] {\Large{$\mathcal{U}''$}};
    
    \path[]       
        (1) edge node[above] {} (3);

    \node[circle,draw,minimum size=9mm, fill=red!80,label=180:$(v_3{,}\gamma_a)$] (10) [right = 7cm  of 1] {$p_a$};
    \node (a) [right = 1cm  of 10]  {};
    \node[circle,draw,minimum size=9mm, fill=green,label=135:$(v_1{,}\gamma_c)$] (30) [above = 2cm  of a] {$p_c$}; 
    \node (a') [left = 1.1cm  of 10]  {};
    \node[] (40) [above = 1.2cm  of a']  {};
    \node (b) [right = .5cm  of 30]  {\Large{$\mathcal{C}\otimes \mathcal{U}''$}};
    
    \path[]       
        (10) edge node[above] {} (30);

\end{tikzpicture}
    }
    \caption[Simplicial models for \autoref{ex:ghost}]{Simplicial commitment update model $\mathcal{U}''$ (left) representing $a$ and $c$'s joint commitment to $p$ and the absence of any commitment of $b$. Resulting model $\mathcal{C}\otimes \mathcal{U}''$ (right) representing $a$ and $c$'s joint commitment to $p$.}
    \label{fig:ghost}
\end{figure}

\end{example}

\section{Soundness and Completeness of DDSL}
\label{sec:SC2}

Before proving soundness and completeness, we introduce the definition of simplicial commitment update model composition and illustrate some of its properties:
%Since the simplicial product update operation is conducted on a vertex by vertex basis, each representing a commitment of an agent, it is important to highlight the properties that hold at vertices:

\begin{definition}[Simplicial commitment update model composition]
\label{def:compo}
Given $\mathcal{U}=(E,\chi,com)$, $\mathcal{U}'=(E',\chi',com')$, with $com,com' \in \mathcal{L}^{DDSL}_{\{a\}}$ (\autoref{def:langLDDSL} at $G=\{a\}$), 
their composition
$\mathcal{U};\mathcal{U}'=(E'',\chi'',com'')$ is defined as follows:

Its vertices are $V(E''):=\{\alpha_a;\beta_a : \alpha_a\in V(E),\beta_a\in V(E'),\chi(\alpha_a)=\chi'(\beta_a)\}$, and:

\begin{itemize}
    \item $E''=E\times E'$,
    \item $\chi''(\alpha_G;\beta_G)=\chi(\alpha_G)=\chi'(\beta_G)$ and
    \item $com''(\alpha_a;\beta_a) := \langle\alpha_a\rangle\, com'(\beta_a) \text{ for each vertex } \alpha_a;\beta_a \in V(E'')$.
\end{itemize}

\end{definition}

In particular, we use the standard shorthand $com''(\alpha_G;\beta_G):=\bigwedge_{a\in G}com''(\alpha_a;\beta_a)$.

Intuitively, $\mathcal U;\mathcal U'$ represents performing the commitments of $\mathcal U$ first, and then, from the resulting configuration, performing those of $\mathcal U'$.\footnote{Axiom \textbf{R5} will let us later replace any two consecutive actions by this single combined one.}
The following lemma extends the notion of group stability  (\autoref{lem:groupinv}) to the dynamic language:

\begin{lemma}[Group Stability of $\mathcal L^{DDSL}_H$]
\label{lem:groupstabDDSL}
Let $\mathcal C$ be any simplicial model, $\varnothing\ne H\subseteq\agents$, $X,Y\in C$ with $H\subseteq\chi(X\cap Y)$, and $\varphi\in\mathcal L^{DDSL}_H$. Then $\mathcal C,X\models\varphi\iff\mathcal C,Y\models\varphi$.
\end{lemma}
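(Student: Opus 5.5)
We prove the claim by structural induction on $\varphi\in\mathcal L^{DDSL}_H$, simultaneously for all simplicial models $\mathcal C$ and all groups $H$. The static cases ($\top$, $p_a$, $\neg$, $\wedge$, $\mathbf D_K$ with $K\subseteq H$) are handled exactly as in \autoref{lem:groupinv}. For $p_a$ with $a\in H$, $X$ and $Y$ share the unique $a$-vertex, so they agree on $p_a$. For $\mathbf D_K\psi$ with $K\subseteq H$, both gates $K\subseteq\chi(X)$ and $K\subseteq\chi(Y)$ hold. Moreover, \autoref{lem:chromtrans} shows that the witness sets $\{Z: K\subseteq\chi(X\cap Z)\}$ and $\{Z: K\subseteq\chi(Y\cap Z)\}$ coincide. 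This case does not even need the induction hypothesis on $\psi$. The only new case is $\varphi=[\alpha_K]\psi$ with $\varnothing\ne K\subseteq H$ and $\psi\in\mathcal L^{DDSL}_K$.

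For the dynamic case, the key observation is that the semantic clause of \autoref{def:DDSLsem} evaluates everything at the restricted face $X_K$ of \autoref{lem:restriction}. First, the gate $E_K$ holds at both $X$ and $Y$, since $K\subseteq H\subseteq\chi(X\cap Y)$. Second, chromaticity gives that $X$ and $Y$ share exactly the same $K$-colored vertices. Hence $X_K=Y_K$ as sets of vertices, not merely up to agreement. Consequently the precondition $\mathcal C,X_K\models com(\alpha_K)$ is literally the same condition as $\mathcal C,Y_K\models com(\alpha_K)$. Likewise the consequent $\mathcal C\otimes\mathcal U,(X_K,\alpha_K)\models\psi$ is literally the same as $\mathcal C\otimes\mathcal U,(Y_K,\alpha_K)\models\psi$. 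So both sides of the biconditional are the same implication, and the case closes without appeal to the induction hypothesis. In particular we never need to compare two different points of the updated model $\mathcal C\otimes\mathcal U$.

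The main obstacle is making sure this identity argument is legitimate. We must check that $(X_K,\alpha_K)$ denotes one and the same face of $C^u$ regardless of whether it was reached from $X$ or from $Y$. By \autoref{def:SPU}, a face of $C^u$ is determined by its set of vertex pairs $\{(v_a,\alpha_a)\}$. That set depends only on the $K$-vertices, which coincide, so the identification holds. If one prefers a more robust argument that avoids this literal identification, the fallback is an induction on formula complexity ordered so that the induction hypothesis applies in arbitrary models, including $\mathcal C\otimes\mathcal U$. The base $\mathcal C$ plays no special role in the statement, so quantifying over all models in the induction is harmless. Then $(X_K,\alpha_K)$ and $(Y_K,\alpha_K)$ share all $K$-colored vertices in $\mathcal C\otimes\mathcal U$, and the induction hypothesis for $\psi\in\mathcal L^{DDSL}_K$ applied in $\mathcal C\otimes\mathcal U$ finishes the case. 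The precondition side would be settled by the induction hypothesis for $com(\alpha_K)$, which lies in $\mathcal L^{DDSL}_K$ via \autoref{lem:strat-mono-ddsl}. I expect the direct identity argument to suffice, and would present the fallback only as a remark.
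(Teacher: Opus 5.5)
Your proposal is correct and follows essentially the same route as the paper. Both argue by structural induction on $\varphi$: the static cases are handled as in Group Invariance (a shared $a$-vertex for atoms, coinciding witness sets via chromatic transitivity for $\mathbf D_K$), and the dynamic case closes because $K\subseteq\chi(X\cap Y)$ forces $X_K=Y_K$ and makes both gates $E_K$ hold, so the two semantic clauses are evaluated at the identical pair $(X_K,\alpha_K)$. Your fallback is not needed, and it would in any case require a non-structural measure such as $(\delta,n)$, since $com(\alpha_K)$ is not a subformula of $[\alpha_K]\psi$.
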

\begin{proof}
By structural induction on $\varphi$ (\autoref{def:langLDDSL} at level $H$).

\textit{$\varphi=\top$}: trivial.

\textit{$\varphi=p_a$} ($a\in H$): since $H\subseteq\chi(X\cap Y)$, in particular $a\in\chi(X\cap Y)$, so $X,Y$ share the unique $a$-vertex $v_a$; by disjointness of the $P_a$'s, $p_a\in l(X)\iff p_a\in l(v_a)\iff p_a\in l(Y)$.

\textit{$\varphi=\neg\psi$, $\varphi=\psi_1\wedge\psi_2$} ($\psi,\psi_1,\psi_2\in\mathcal L^{DDSL}_H$, by the grammar's own clauses): immediate from the IH, applied to the (possibly dynamic) subformula(s) directly.

\textit{$\varphi=\mathbf D_K\psi$} ($K\subseteq H$, $\psi\in\mathcal L^{DDSL}_K$): as before, $K\subseteq\chi(X)$ and $K\subseteq\chi(Y)$ both hold, and the witness sets $\{Z\in C:K\subseteq\chi(X\cap Z)\}$ and $\{Z\in C:K\subseteq\chi(Y\cap Z)\}$ coincide (via \autoref{lem:chromtrans}, chaining through $X$ and $Y$ respectively).

\textit{$\varphi=[\alpha_K]\psi$} ($K\subseteq H$, $\psi\in\mathcal L^{DDSL}_K$): as before, $K\subseteq\chi(X\cap Y)$ gives $X,Y$ sharing every $a$-vertex for $a\in K$, hence $X_K=Y_K$; and $K\subseteq\chi(X)$, $K\subseteq\chi(Y)$ both hold. By \autoref{def:DDSLsem}, $\mathcal C,X\models[\alpha_K]\psi$ and $\mathcal C,Y\models[\alpha_K]\psi$ are evaluated via the identical pair $(X_K,\alpha_K)=(Y_K,\alpha_K)$.
\end{proof}

The intuition behind this lemma is that an $H$-stratified formula can only "see" the $H$-colored part of a configuration. Consequently, if two configurations show that part identically, the formula can't tell them apart, no matter what else differs between them.

\begin{corollary}[Stability of $\mathcal L^{DDSL}_{\{a\}}$]
\label{lem:stabFrag}
For any simplicial model $\mathcal{C}$, any $X,Y\in C$ with $v_a\in Y\subseteq X$, and any $\varphi\in\mathcal{L}^{DDSL}_{\{a\}}$: $\mathcal{C},X\models\varphi \iff \mathcal{C},Y\models\varphi$.
\end{corollary}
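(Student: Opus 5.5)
This corollary is the instance $H=\{a\}$ of \autoref{lem:groupstabDDSL}; the plan is simply to verify that lemma's hypothesis. Fix $\mathcal C$, $X,Y\in C$ with $v_a\in Y\subseteq X$, and $\varphi\in\mathcal L^{DDSL}_{\{a\}}$. I take $v_a$ to be a vertex with $\chi(v_a)=a$, as in the stability condition of \autoref{def:commUmod}, where $v_a$ denotes $a$'s vertex.

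The only thing to check is $\{a\}\subseteq\chi(X\cap Y)$. Since $Y\subseteq X$, we have $X\cap Y=Y$. Since $v_a\in Y$ and $\chi(v_a)=a$, we get $a\in\chi(Y)=\chi(X\cap Y)$. This is the required hypothesis with $H=\{a\}$, which is non-empty. Note that $Y\in C$ is given, so downward closure is not even needed here.

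Applying \autoref{lem:groupstabDDSL} with $H=\{a\}$ then yields $\mathcal C,X\models\varphi\iff\mathcal C,Y\models\varphi$ directly. No separate induction is needed: the lemma already covers all clauses of $\mathcal L^{DDSL}_{\{a\}}$, including the dynamic clause $[\gamma_a]\psi$, where $X_{\{a\}}=Y_{\{a\}}=\{v_a\}$.

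There is no real obstacle. The only point deserving care is that $v_a$ is indeed $a$-coloured and is the unique $a$-vertex of $X$ (by chromaticity). This guarantees that the restrictions used in the action clause coincide, which the lemma has already handled.

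I would also remark that this corollary discharges the stability requirement of \autoref{def:commUmod} for any commitment function with values in $\mathcal L^{DDSL}_{\{a\}}$. In particular, it covers the composed models of \autoref{def:compo}, whose commitments $\langle\alpha_a\rangle com'(\beta_a)$ lie in that fragment.
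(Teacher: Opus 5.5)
Your proof is correct and is essentially the paper's argument: both instantiate \autoref{lem:groupstabDDSL} at $H=\{a\}$ after checking $\{a\}\subseteq\chi(X\cap Y)$. Your check via $X\cap Y=Y$ and $\chi(v_a)=a$ is, if anything, slightly more direct than the paper's appeal to chromaticity and uniqueness of the $a$-vertex.
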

\begin{proof}
Since $Y\subseteq X$ and both are simplices of the same chromatic complex, $v_a\in Y$ is the unique $a$-vertex of both $X$ and $Y$ (chromaticity forces at most one per color per simplex, and $Y\subseteq X$ inherits it), so $v_a\in X\cap Y$, i.e.\ $\{a\}\subseteq\chi(X\cap Y)$. The claim is \autoref{lem:groupstabDDSL} instantiated at $H=\{a\}$.
\end{proof}

The following lemma proves that the definition of the commitment function resulting from a simplicial commitment update model composition is correct:

\begin{lemma}[Composition equivalence]
\label{lem:comEquiv}
For $\varnothing\ne H\subseteq\agents$, $\alpha_H,\beta_H$ with $\chi(\alpha_H)=\chi'(\beta_H)=H$, and $X\in C$ with $H\subseteq\chi(X)$:
\[
\mathcal{C},X\models com''(\alpha_H;\beta_H) \;\iff\;
\mathcal{C},X\models com(\alpha_H) \text{ and } \mathcal{C}\otimes\mathcal{U},(X_H,\alpha_H)\models com'(\beta_H).
\]
\end{lemma}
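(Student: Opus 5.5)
The plan is to unfold $com''(\alpha_H;\beta_H)=\bigwedge_{a\in H}\langle\alpha_a\rangle com'(\beta_a)$ and evaluate each conjunct with \autoref{lem:diamond} at the single-agent level. The resulting per-agent facts are then reassembled into statements about the group face $X_H$ and the updated face $(X_H,\alpha_H)$. For each $a\in H$, \autoref{lem:diamond} with group $\{a\}$ gives
\[
\mathcal C,X\models\langle\alpha_a\rangle com'(\beta_a)\iff a\in\chi(X),\ \mathcal C,\{v_a\}\models com(\alpha_a),\ \mathcal C\otimes\mathcal U,(\{v_a\},\alpha_a)\models com'(\beta_a),
\]
where $X_{\{a\}}=\{v_a\}$. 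The gate $a\in\chi(X)$ is automatic from $H\subseteq\chi(X)$, so it drops out.

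Step one handles the $\mathcal C$-side precondition. Since $com(\alpha_a)\in\mathcal L^{DDSL}_{\{a\}}$ and $\{v_a\}\subseteq X_H\subseteq X$ all contain $v_a$, \autoref{lem:stabFrag} shows that $\mathcal C,\{v_a\}\models com(\alpha_a)$ holds iff $\mathcal C,X\models com(\alpha_a)$. Conjoining over $a\in H$ then yields exactly $\mathcal C,X\models com(\alpha_H)$, the first conjunct of the right-hand side.

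Step two handles the post-update part. Assuming $\mathcal C,X\models com(\alpha_H)$, the pair $(X_H,\alpha_H)$ is a face of $C^u$: we have $\chi(X_H)=H=\chi'(\alpha_H)$ by \autoref{lem:restriction}, and the commitment condition at $X_H$ follows from stability (\autoref{lem:stabFrag} again). Its vertex $(v_a,\alpha_a)$ forms the face $(\{v_a\},\alpha_a)\subseteq(X_H,\alpha_H)$ of $\mathcal C\otimes\mathcal U$. Since $com'(\beta_a)\in\mathcal L^{DDSL}_{\{a\}}$, applying \autoref{lem:stabFrag} inside the simplicial model $\mathcal C\otimes\mathcal U$ gives
\[
\mathcal C\otimes\mathcal U,(\{v_a\},\alpha_a)\models com'(\beta_a)\iff\mathcal C\otimes\mathcal U,(X_H,\alpha_H)\models com'(\beta_a).
\]
Conjoining over $a$ gives $\mathcal C\otimes\mathcal U,(X_H,\alpha_H)\models com'(\beta_H)$.

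Both directions of the biconditional follow by chaining these per-agent equivalences. In the right-to-left direction, the hypothesis $\mathcal C,X\models com(\alpha_H)$ is what guarantees that $(X_H,\alpha_H)$ exists, so step two is available. In the left-to-right direction, step one is done first to obtain $com(\alpha_H)$, and then step two is applied.

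The main obstacle is bookkeeping in step two. First, we must check that $\mathcal C\otimes\mathcal U$ is a genuine simplicial model, so that \autoref{lem:stabFrag} (which holds for \emph{any} simplicial model) applies there. This includes chromaticity, with vertices $(v_a,\alpha_a)$ coloured $a$. Second, we must confirm that the per-agent faces $(\{v_a\},\alpha_a)$ produced by the single-agent diamonds are exactly the $a$-subfaces of $(X_H,\alpha_H)$. This holds because both are built from the same vertex $v_a$ of $X$ and the same $\alpha_a\in\alpha_H$, so they share the vertex $(v_a,\alpha_a)$. I would verify this identification explicitly from \autoref{def:SPU} before invoking stability.
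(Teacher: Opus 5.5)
Your proposal is correct and follows essentially the same route as the paper. In both, you unfold $com''(\alpha_H;\beta_H)$ agent by agent and evaluate each $\langle\alpha_a\rangle com'(\beta_a)$ with the single-agent diamond lemma, then use group stability at level $\{a\}$ (you cite its corollary) to move $com(\alpha_a)$ from $\{v_a\}$ to $X$ and $com'(\beta_a)$ from $(\{v_a\},\alpha_a)$ to $(X_H,\alpha_H)$ via the shared vertex $(v_a,\alpha_a)$. Your explicit check that $(X_H,\alpha_H)$ is actually a face of $C^u$ and that $\mathcal C\otimes\mathcal U$ is a genuine simplicial model is left implicit in the paper's proof, so it is a useful addition rather than a deviation.
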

\begin{proof}
Fix $a\in H$. Unfolding $com''(\alpha_a;\beta_a)=\langle\alpha_a\rangle com'(\beta_a)$ via \autoref{lem:diamond} at $H=\{a\}$:
\[
\mathcal C,X\models com''(\alpha_a;\beta_a) \iff \mathcal C,X\models E_{\{a\}} \text{ and } \mathcal C,X_{\{a\}}\models com(\alpha_a) \text{ and } \mathcal C\otimes\mathcal U,(X_{\{a\}},\alpha_a)\models com'(\beta_a).
\]
Since $a\in H\subseteq\chi(X)$, $E_{\{a\}}$ holds trivially. Since $com(\alpha_a)\in\mathcal L^{DDSL}_{\{a\}}$ and $v_a$ is the shared $a$-vertex of $X_{\{a\}}$, $X_H$, and $X$ (all containing $v_a$), \autoref{lem:groupstabDDSL} (at $H=\{a\}$) gives $\mathcal C,X_{\{a\}}\models com(\alpha_a)\iff\mathcal C,X\models com(\alpha_a)\iff\mathcal C,X_H\models com(\alpha_a)$.

For the second conjunct: $(X_{\{a\}},\alpha_a)$ and $(X_H,\alpha_H)$ share the vertex-pair $(v_a,\alpha_a)$ in $\mathcal C\otimes\mathcal U$ (as $\{a\}\subseteq H$), so $\{a\}\subseteq\chi^u\big((X_{\{a\}},\alpha_a)\cap(X_H,\alpha_H)\big)$; since $com'(\beta_a)\in\mathcal L^{DDSL}_{\{a\}}$, \autoref{lem:groupstabDDSL} gives $\mathcal C\otimes\mathcal U,(X_{\{a\}},\alpha_a)\models com'(\beta_a)\iff\mathcal C\otimes\mathcal U,(X_H,\alpha_H)\models com'(\beta_a)$.
Taking the conjunction over $a\in H$ of both established equivalences recovers exactly the claim.
\end{proof}

\begin{remark}
\label{rem:pi-eval}
Since $com(\alpha_a)\in\mathcal L^{DDSL}_{\{a\}}$ for each $a\in H$ (\autoref{lem:strat-mono-ddsl} gives $com(\alpha_H)\in\mathcal L^{DDSL}_H$), and $\chi(X\cap X_H)=H=\chi(X_H)$ whenever $X_H$ is defined (i.e.\ $H\subseteq\chi(X)$), \autoref{lem:groupstabDDSL} gives
\[
\mathcal C,X\models com(\alpha_H) \iff \mathcal C,X_H\models com(\alpha_H).
\]
So $\pi(\alpha_H)=E_H\wedge com(\alpha_H)$ may be evaluated indifferently at $X$ or at $X_H$, justifying the shorthand "$\mathcal C,X\models\pi(\alpha_H)$" used throughout the soundness proofs below.
\end{remark}

We can now move to the soundness and completeness proof of the dynamic fragment of the logic. The main goal is to reduce  
%To prove the soundness and completeness of the dynamic fragment of the logic, we reduce 
every instance of the new dynamic operator to a formula that does not contain it, using the following reduction axioms:

\begin{definition}[Reduction axioms]
\label{def:RAX}
For $\varnothing\ne H,K\subseteq\agents$ and $\varphi\in\mathcal L^{DDSL}$:
\begin{itemize}
   \item Axioms of the static logic \textbf{DSL} (\autoref{def:AX})
   \item \textbf{R1}: $[\alpha_H]p_a \Longleftrightarrow \pi(\alpha_H)\to p_a$, for $a\in H$
   \item \textbf{R2}: $[\alpha_H]\neg\varphi \Longleftrightarrow \pi(\alpha_H)\to\neg[\alpha_H]\varphi$
   \item \textbf{R3}: $[\alpha_H](\varphi\wedge\theta) \Longleftrightarrow [\alpha_H]\varphi\wedge[\alpha_H]\theta$
   \item \textbf{R4a}: $[\alpha_H]\mathbf D_K\varphi \Longleftrightarrow \pi(\alpha_H)\to\mathbf D_K[\alpha_H]\varphi$, for $K\subseteq H$, $\varphi\in\mathcal L^{DDSL}_K$
   \item \textbf{R4b}: $[\alpha_H]\mathbf D_K\varphi \Longleftrightarrow \neg\pi(\alpha_H)$, for $K\not\subseteq H$
   \item \textbf{R5}: $[\alpha_H][\beta_H]\varphi \Longleftrightarrow [\alpha_H;\beta_H]\varphi$
\end{itemize}
\end{definition}

These axioms resemble the ones of standard action model logic \cite{ditmarsch2007dynamic}. 
The interpretation of \textbf{R1-R3} is straightforward. 
Axioms \textbf{R4} represents the interaction between commitment and commitment updates: 
\textbf{R4a} says that, whenever the group ($K$) having a joint commitment is among those enacting the update ($K\subseteq H$), $K$'s post-update commitment to $\varphi$ can equivalently be checked \emph{before} the update: $K$ is already committed now to "$\varphi$ will hold once $\alpha_H$ is enacted", provided the update is executable.
%states that if agents in group $K \subseteq H$, have mutual commitment towards $\varphi$ after action $\alpha_H$, then the presence of agents in $K$ implies that they have a mutual commitment towards joint action $\alpha_H$.
\textbf{R4b} covers the case where the acting group $H$ leaves out some agent needed for the commitment being asked about ($K\not\subseteq H$): since the update can only ever touch $H$'s vertices (\autoref{lem:restriction}), a $K$-commitment with $K\not\subseteq H$ can never be witnessed by the result. So $[\alpha_H]\mathbf D_K\varphi$ is true exactly when $\alpha_H$ fails to execute at all, and false whenever it does, regardless of $\varphi$.
This split mirrors the observation from \autoref{fig:bigtri}: joint commitment is not reducible to the commitments of any subgroup, so an action's effects on a group-level commitment must be handled differently depending on whether the acting group covers it or not.
%covers the case where the acting group $H$ does not include everyone needed for the commitment being asked about ($K \not\subseteq H$) and it reads "after $H$'s action, group $K$ ends up jointly committed to $\varphi$ is equivalent to $H$'s action doesn't actually execute", making the formula vacuously true.
Axiom \textbf{R5} reduces the composition of updates models to the application of sequential updates models.

\begin{theorem}[Soundness]
The reduction axioms of \autoref{def:RAX} are sound.
\end{theorem}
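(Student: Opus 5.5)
The plan is to check each reduction axiom separately by unfolding \autoref{def:DDSLsem}, in the same style as the soundness proof for \textbf{DSL}. Fix $\mathcal C$ and $X\in C$. By \autoref{rem:pi-eval}, $\mathcal C,X\models\pi(\alpha_H)$ iff $H\subseteq\chi(X)$ and $\mathcal C,X_H\models com(\alpha_H)$. This says exactly that $(X_H,\alpha_H)\in C^u$, since $\chi(X_H)=H=\chi'(\alpha_H)$ by \autoref{lem:restriction}. Whenever $\pi(\alpha_H)$ fails, every box $[\alpha_H]\varphi$ is vacuously true, and so is every right-hand side of the form $\pi(\alpha_H)\to\cdots$. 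Hence in each case I will only argue under the assumption $\mathcal C,X\models\pi(\alpha_H)$, writing $X^u:=(X_H,\alpha_H)$.

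\textbf{R1}: the valuation of $(v_a,\alpha_a)$ equals that of $v_a$, and $a\in H=\chi^u(X^u)$. Therefore $X^u\models p_a$ iff $X_H\models p_a$ iff $X\models p_a$, using \autoref{lem:groupinv} with $\{a\}\subseteq\chi(X\cap X_H)$.

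\textbf{R2} and \textbf{R3}: under $\pi$, $[\alpha_H]\neg\varphi$ holds iff $X^u\not\models\varphi$, iff $\neg[\alpha_H]\varphi$. Conjunction distributes because the box is a material implication with a fixed antecedent.

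\textbf{R5}: I will combine \autoref{lem:comEquiv} with an identification of $(\mathcal C\otimes\mathcal U)\otimes\mathcal U'$ with $\mathcal C\otimes(\mathcal U;\mathcal U')$. This identification maps the vertex $((v,\alpha_a),\beta_a)$ to $(v,\alpha_a;\beta_a)$. It must preserve colours and valuations, and it sends $((X_H,\alpha_H),\beta_H)$ to $(X_H,\alpha_H;\beta_H)$. \autoref{lem:comEquiv} shows that the face conditions match. Note that restricting $X^u$ to $H$ is trivial, since $\chi^u(X^u)=H$.

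\textbf{R4b}: under $\pi$, $\chi^u(X^u)=H\not\supseteq K$, so the gate of $\mathbf D_K$ fails at $X^u$. Hence $[\alpha_H]\mathbf D_K\varphi$ is false exactly when $\pi$ holds, which is the claim.

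The main obstacle is \textbf{R4a}, where $K\subseteq H$ and $\varphi\in\mathcal L^{DDSL}_K$. The gates $K\subseteq\chi(X)$ and $K\subseteq\chi^u(X^u)$ both hold. I need to match two sets of faces. On the left are the faces $W^u=(Y,\gamma_J)\in C^u$ with $K\subseteq\chi^u(X^u\cap W^u)$. On the right are the faces $Z\in C$ with $K\subseteq\chi(X\cap Z)$, at each of which one evaluates $[\alpha_H]\varphi$.

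For the direction from right to left, take such a $W^u$. Because $K\subseteq\chi(X_H\cap Y)$, \autoref{lem:groupstabDDSL} lets me replace $W^u$ by its $K$-restriction and then compare it with $X^u$ directly. The value of $\varphi$ is the same at all of these faces, so it suffices that $X\models[\alpha_H]\varphi$. This follows by instantiating the right-hand side at $Z=X$.

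For the direction from left to right, take $Z$ with $K\subseteq\chi(X\cap Z)$. By \autoref{lem:groupstabDDSL}, applied at level $K$ to $[\alpha_H]\varphi$, I would like to conclude that $Z$ and $X$ agree on $[\alpha_H]\varphi$. The catch is that $[\alpha_H]\varphi$ lies in $\mathcal L^{DDSL}_H$, not in $\mathcal L^{DDSL}_K$, so the lemma does not apply as stated. This stratification mismatch is where I expect the real work. I would first try to handle it by distinguishing whether $Z$ contains all of $H$ and shares $X_H$. If it does, $Z_H=X_H$ and the evaluation is literally the same. If it does not, I would try to show that the $H$-part of $Z$ that survives the update still agrees with $X^u$ on $K$, and then apply \autoref{lem:groupstabDDSL} in $\mathcal C\otimes\mathcal U$. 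If this fails in general, the fallback is to restrict \textbf{R4a} to $K=H$, or to read $\mathbf D_K[\alpha_H]\varphi$ as requiring the stratification $\varphi\in\mathcal L^{DDSL}_K$ together with $H=K$.
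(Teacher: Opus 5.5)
Your handling of \textbf{R1}--\textbf{R3}, \textbf{R4b} and \textbf{R5}, and your right-to-left direction of \textbf{R4a}, follow the paper's proof. Concretely: you unfold the box under $\pi(\alpha_H)$, observe that the gate of $\mathbf D_K$ fails at $X^u$ when $K\not\subseteq H$, and use the Composition equivalence lemma together with the identification of $(\mathcal C\otimes\mathcal U)\otimes\mathcal U'$ with $\mathcal C\otimes(\mathcal U;\mathcal U')$.

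The gap is the left-to-right direction of \textbf{R4a}. You leave it open and even propose weakening the axiom to $K=H$.

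Your first idea cannot work, because it is false that $Z$ and $X$ agree on $[\alpha_H]\varphi$. Take $H=\{a,b\}$, $K=\{a\}$, $com(\alpha_H)=\top$, $\varphi=\bot$, $X=\{v_a,v_b\}$ and $Z=\{v_a\}$. Then $[\alpha_H]\bot$ is false at $X$ but vacuously true at $Z$, since $E_H$ fails there.

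You only need the one-way implication, and it is immediate. Assume $X^u\models\mathbf D_K\varphi$ and let $Z$ satisfy $K\subseteq\chi(X\cap Z)$. There are two cases.
\begin{itemize}
\item If $H\not\subseteq\chi(Z)$ or $\mathcal C,Z_H\not\models com(\alpha_H)$, then $[\alpha_H]\varphi$ holds vacuously at $Z$.
\item Otherwise $(Z_H,\alpha_H)\in C^u$. Since $v_a\in X\cap Z$ for every $a\in K$, this face contains the vertex pairs $(v_a,\alpha_a)$ for $a\in K$, which also lie in $X^u$. So $K\subseteq\chi^u\big(X^u\cap(Z_H,\alpha_H)\big)$, and $(Z_H,\alpha_H)$ is itself one of the witnesses quantified over by $\mathbf D_K\varphi$ at $X^u$. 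Hence $\varphi$ holds there, and $\mathcal C,Z\models[\alpha_H]\varphi$.
\end{itemize}
No stability lemma is needed for this direction. \textbf{R4a} is therefore sound for every $K\subseteq H$, and your fallback is unnecessary; it would only prove a weaker statement.

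The paper's own argument for \textbf{R4a} glosses over exactly the point you flagged. It says the right-hand side reduces ``by the same stability argument \ldots\ (taking $Z=X$)''. But instantiating at $Z=X$ yields only the right-to-left direction. The stability lemma also cannot be applied to $[\alpha_H]\varphi$ at level $K$, for the reason you give.

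In fact, for $K\subsetneq H$ the formula $\mathbf D_K[\alpha_H]\varphi$ lies outside the stratified grammar altogether. It must therefore be read purely semantically, which is harmless because the semantic clauses do not depend on stratification. The two-case argument above is what closes the step, both in your proposal and in the paper.
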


\begin{proof}
\textbf{R1.} By \autoref{def:DDSLsem}: $\mathcal C,X\models[\alpha_H]p_a \iff \mathcal C,X\models\pi(\alpha_H)$ implies $\mathcal C\otimes\mathcal U,(X_H,\alpha_H)\models p_a$. Assume the antecedent: then $a\in H\subseteq\chi(X_H)$ (\autoref{lem:restriction}), and by $l^u(v,\alpha_a)=l(v)$ (\autoref{def:SPU}), $p_a\in l^u(X_H,\alpha_H)\iff p_a\in l(v_a)\iff p_a\in l(X)$, where $v_a$ is the unique $a$-vertex of $X_H$, hence of $X$. So the consequent matches $\mathcal C,X\models p_a$, giving $\mathcal C,X\models[\alpha_H]p_a\iff\mathcal C,X\models\pi(\alpha_H)\to p_a$.

\textbf{R2.} $\mathcal C,X\models[\alpha_H]\neg\varphi \iff \mathcal C,X\not\models\pi(\alpha_H)$ or $(\mathcal C,X\models\pi(\alpha_H)$ and $\mathcal C\otimes\mathcal U,(X_H,\alpha_H)\models\neg\varphi)$ (\autoref{def:DDSLsem}, prop.\ reas.) $\iff \mathcal C,X\not\models\pi(\alpha_H)$ or $\mathcal C,X\not\models[\alpha_H]\varphi \iff \mathcal C,X\models\pi(\alpha_H)\to\neg[\alpha_H]\varphi$.

\textbf{R3.} $\mathcal C,X\models[\alpha_H](\varphi\wedge\theta) \iff \mathcal C,X\models\pi(\alpha_H)$ implies $\mathcal C\otimes\mathcal U,(X_H,\alpha_H)\models\varphi\wedge\theta$ $\iff$ (prop.\ reas., $P\to(Q\wedge R)\equiv(P\to Q)\wedge(P\to R)$) $\mathcal C,X\models[\alpha_H]\varphi\wedge[\alpha_H]\theta$.

\textbf{R4a.} Assume $\mathcal C,X\models\pi(\alpha_H)$ (otherwise both sides are vacuous). Since $K\subseteq H=\chi^u(X_H,\alpha_H)$, the gate for $\mathbf D_K$ holds at $(X_H,\alpha_H)$ in $\mathcal C\otimes\mathcal U$; every witness $(Z,\beta)\in C^u$ with $K\subseteq\chi^u((X_H,\alpha_H)\cap(Z,\beta))$ shares, for each $a\in K$, the vertex-pair $(v_a,\alpha_a)$, so by \autoref{lem:groupstabDDSL} (applied in $\mathcal C\otimes\mathcal U$, at level $K$) all such witnesses agree with $(X_K,\alpha_K)$ on $\varphi\in\mathcal L^{DDSL}_K$:
\[
\mathcal C\otimes\mathcal U,(X_H,\alpha_H)\models\mathbf D_K\varphi \iff \mathcal C\otimes\mathcal U,(X_K,\alpha_K)\models\varphi.\tag{$\ast$}
\]
Separately, $\mathcal C,X\models\mathbf D_K[\alpha_H]\varphi$ (gate holds since $K\subseteq H\subseteq\chi(X)$) reduces, by the same stability argument applied to every witness $Z$ with $K\subseteq\chi(X\cap Z)$ (taking $Z=X$), to $\mathcal C\otimes\mathcal U,(X_K,\alpha_K)\models\varphi$ as well. Combined with $(\ast)$ and \autoref{def:DDSLsem}: $\mathcal C,X\models[\alpha_H]\mathbf D_K\varphi \iff \mathcal C,X\models\pi(\alpha_H)\to\mathbf D_K[\alpha_H]\varphi$.

\textbf{R4b.} If $\mathcal C,X\not\models\pi(\alpha_H)$: LHS vacuously true, RHS $\neg\pi(\alpha_H)$ true. If $\mathcal C,X\models\pi(\alpha_H)$: then $(X_H,\alpha_H)\in C^u$ with color set $H$; since $K\not\subseteq H$, the gate for $\mathbf D_K$ fails there, so $\mathcal C\otimes\mathcal U,(X_H,\alpha_H)\not\models\mathbf D_K\varphi$, making $[\alpha_H]\mathbf D_K\varphi$ false at $X$, matching $\neg\pi(\alpha_H)$, false under this hypothesis.

\item[\textbf{R5.}] $\mathcal{C},X \models [\alpha_H][\beta_H] \varphi \iff \mathcal{C},X \models\pi(\alpha_H) \text{ implies } \mathcal{C}\otimes\mathcal{U},(X_H,\alpha_H) \models [\beta_H] \varphi$ (\autoref{def:DDSLsem})

$\iff \mathcal{C},X\models\pi(\alpha_H)$ implies $\big(\mathcal C\otimes\mathcal U,(X_H,\alpha_H)\models\pi'(\beta_H)$ implies $\mathcal C\otimes\mathcal U\otimes\mathcal U',(X_H,\alpha_H,\beta_H)\models\varphi\big)$ (\autoref{def:DDSLsem} applied within $\mathcal C\otimes\mathcal U$; note $\chi^u(X_H,\alpha_H)=H$)

$\iff \big(\mathcal{C},X\models\pi(\alpha_H) \text{ and } \mathcal C\otimes\mathcal U,(X_H,\alpha_H)\models\pi'(\beta_H)\big)$ implies $\mathcal C\otimes\mathcal U\otimes\mathcal U',(X_H,\alpha_H,\beta_H)\models\varphi$ (prop.\ reas.)

$\iff \mathcal{C},X\models\pi''(\alpha_H;\beta_H)$ implies $\mathcal C\otimes\mathcal U\otimes\mathcal U',(X_H,\alpha_H,\beta_H)\models\varphi$ (\autoref{lem:comEquiv})

$\iff \mathcal{C},X\models[\alpha_H;\beta_H]\varphi$ (\autoref{def:DDSLsem}, identifying $\mathcal C\otimes\mathcal U\otimes\mathcal U'$ with $\mathcal C\otimes(\mathcal U;\mathcal U')$ at the corresponding point, associativity of the product update, as standard for DEL \cite{ditmarsch2007dynamic})
\end{proof}

We now move to the completeness proof, for which we introduce the following translation:

\begin{definition}[Translation]
\label{def:trans}
\begin{itemize}
    \item $t(\top)=\top$, \quad $t(p_a)=p_a$, \quad $t(\neg\varphi)=\neg t(\varphi)$, \quad $t(\varphi\wedge\psi)=t(\varphi)\wedge t(\psi)$, \quad $t(\mathbf D_G\varphi)=\mathbf D_G t(\varphi)$
    \item $t([\alpha_H]p_a) = t(\pi(\alpha_H)\to p_a)$
    \item $t([\alpha_H]\neg\varphi) = t(\pi(\alpha_H)\to\neg[\alpha_H]\varphi)$
    \item $t([\alpha_H](\varphi\wedge\theta)) = t([\alpha_H]\varphi\wedge[\alpha_H]\theta)$
    \item $t([\alpha_H]\mathbf D_K\varphi) = \begin{cases} t(\pi(\alpha_H)\to\mathbf D_K[\alpha_H]\varphi) & K\subseteq H \\ t(\neg\pi(\alpha_H)) & K\not\subseteq H\end{cases}$
    \item $t([\alpha_H][\beta_H]\varphi) = t([\alpha_H;\beta_H]\varphi)$
\end{itemize}
\end{definition}

The core idea of a reduction proof is to show that the complexity of the argument on the LHS is greater than the complexity of the RHS. 
In order to correctly parametrize the complexity for the composition axiom \textbf{R5}, we need to introduce some auxiliary definitions. The remainder of this subsection is dedicated entirely to showing that repeatedly applying \textbf{R1}--\textbf{R5} always eventually eliminates every action modality. A reader interested in what the axioms mean, rather than why the translation is well-founded, may skip ahead to \autoref{translation}.

\begin{definition}[Action size]
\label{def:actsize}
For a vertex $\gamma\in V(E)$ of any update model $\mathcal U$ (\autoref{def:commUmod}), define $\|\gamma\|\in\mathbb N_{\ge1}$ by structural recursion on how $\gamma$ was built:
\begin{itemize}
    \item If $\mathcal U$ is a \emph{primitive} update model (i.e.\ not itself constructed via \autoref{def:compo}, so $com(\gamma)\in\mathcal L^{DSL}_{\{a\}}$ for $\gamma$ of color $a$): $\|\gamma\|:=1$.
    \item If $\gamma=\alpha_a;\beta_a\in V(E'')$ (\autoref{def:compo}): $\|\alpha_a;\beta_a\|:=\|\alpha_a\|+\|\beta_a\|$.
\end{itemize}
For a face $\alpha_H=\{\alpha_a\}_{a\in H}$, extend additively over the group: $\|\alpha_H\|:=\sum_{a\in H}\|\alpha_a\|$.
\end{definition}

In order to account for the reduction of consecutive commitment updates \textbf{R5} we introduce two complexity measures:

\begin{definition}[Primary complexity $\delta$]
\label{def:delta}
\begin{align*}
\delta(\top)=\delta(p_a) &= 0 \\
\delta(\neg\varphi) &= \delta(\varphi) \\
\delta(\varphi\wedge\psi) = \delta(\varphi\to\psi) &= \delta(\varphi)+\delta(\psi) \\
\delta(\mathbf D_G\varphi) &= \delta(\varphi) \\
\delta([\alpha_H]\varphi) &= \|\alpha_H\| + \delta(\varphi)
\end{align*}
\end{definition}

\begin{definition}[Secondary complexity $n$]
\label{def:ncx}
\begin{align*}
n(\top)=n(p_a) &= 1 \\
n(\neg\varphi) = n(\mathbf D_G\varphi) = n([\alpha_H]\varphi) &= 1+n(\varphi) \\
n(\varphi\wedge\psi) = n(\varphi\to\psi) &= 1+\max(n(\varphi),n(\psi))
\end{align*}
\end{definition}

Note that $n$ is completely uniform across $\neg$, $\mathbf{D}_G$, and $[\alpha_H]$.
While $\delta$  owns the "how much action-unfolding is left" dimension, $n$ only ever needs to break ties between formulas built from the same action, where the action itself never varies.
The overall order is $(\delta,n)$
 compared lexicographically: $(\delta_1,n_1) < (\delta_2,n_2)$ iff $\delta_1<\delta_2$, or $\delta_1=\delta_2$ and $n_1<n_2$. This is well-founded since it's a lexicographic product of two copies of $(\mathbb N,<)$.

\begin{lemma}[Commitment formulas are strictly simpler than their action]
\label{lem:comdepth}
For every vertex $\gamma$ of any update model (primitive or composed), $\delta(com(\gamma)) < \|\gamma\|$, where $\delta$ is defined on formulas in \autoref{def:delta}.
\end{lemma}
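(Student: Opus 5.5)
The plan is to argue by structural induction on how $\gamma$ was built, following the recursion that defines $\|\gamma\|$ in \autoref{def:actsize}. There are two cases: $\gamma$ is a vertex of a primitive update model, or $\gamma=\alpha_a;\beta_a$ is a vertex of a composition $\mathcal U;\mathcal U'$ (\autoref{def:compo}). The induction hypothesis is the statement itself for the component vertices $\alpha_a$ and $\beta_a$.

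\emph{Base case.} If $\gamma$ belongs to a primitive model, then $com(\gamma)\in\mathcal L^{DSL}_{\{a\}}$ by \autoref{def:commUmod}. Such a formula contains no action modality. A quick induction on $\mathcal L^{DSL}$ using \autoref{def:delta} shows that $\delta$ vanishes on every static formula: the atomic clauses give $0$, the clauses for $\neg$ and $\mathbf D_G$ preserve the value, and $\wedge$ adds. Hence $\delta(com(\gamma))=0<1=\|\gamma\|$.

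\emph{Inductive case.} Let $\gamma=\alpha_a;\beta_a$, so $com''(\gamma)=\langle\alpha_a\rangle com'(\beta_a)=\neg[\alpha_a]\neg com'(\beta_a)$. Unfolding $\delta$, where $\neg$ is transparent and $[\alpha_a]$ contributes $\|\alpha_a\|$ (taking $H=\{a\}$), gives
\[
\delta(com''(\gamma))=\|\alpha_a\|+\delta(com'(\beta_a)).
\]
By the induction hypothesis applied to $\beta_a$, $\delta(com'(\beta_a))<\|\beta_a\|$. Therefore
\[
\delta(com''(\gamma))<\|\alpha_a\|+\|\beta_a\|=\|\alpha_a;\beta_a\|,
\]
which is the claim. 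Only the hypothesis for $\beta_a$ is actually used; the left component enters solely through its size.

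\emph{Main obstacle.} The hardest step is making the induction well-founded and unambiguous. Vertices of composed models can themselves be compositions, nested arbitrarily deep, so the induction should be on the construction tree of $\gamma$ (equivalently on $\|\gamma\|$, since $\|\alpha_a\|,\|\beta_a\|\geq 1$ makes both components strictly smaller). One must also check that $\|\cdot\|$ is well defined, i.e.\ that each vertex has a unique construction history. I would handle this by treating vertices $\alpha_a;\beta_a$ as formal pairs tagged with their origin. With that convention, $\|\cdot\|$ and $com$ are defined by the same recursion, and the computation above goes through verbatim.
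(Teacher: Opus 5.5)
Your proposal is correct and follows essentially the same argument as the paper: you induct on the construction of $\gamma$ (the paper phrases this as strong induction on $\|\gamma\|$), with base case $\delta(com(\gamma))=0<1$ for primitive vertices, and an inductive step that unfolds $com''(\alpha_a;\beta_a)=\neg[\alpha_a]\neg com'(\beta_a)$ to obtain $\|\alpha_a\|+\delta(com'(\beta_a))<\|\alpha_a\|+\|\beta_a\|$, using the hypothesis only for the second component. Your extra remarks, namely the side induction showing that $\delta$ vanishes on $\mathcal{L}^{DSL}$ and the tagging of composed vertices so that $\|\cdot\|$ is well defined, make explicit points the paper leaves implicit.
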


\begin{proof}
By strong induction on $\|\gamma\|$.

\emph{Base case, $\|\gamma\|=1$}: then $\gamma$ is primitive (composed vertices always have $\|\cdot\|\ge2$, being a sum of two terms each $\ge1$), so $com(\gamma)\in\mathcal L^{DSL}_{\{a\}}$, hence $\delta(com(\gamma))=0<1$.

\emph{Inductive case}: $\gamma=\mu;\nu$ with $\|\gamma\|=\|\mu\|+\|\nu\|$. By \autoref{def:compo}, $com(\gamma)=\langle\mu\rangle com_\nu(\nu)=\neg[\mu]\neg com_\nu(\nu)$, so (using \autoref{def:delta} below, which gives $\delta(\neg\chi)=\delta(\chi)$ and $\delta([\mu]\chi)=\|\mu\|+\delta(\chi)$):
\[
\delta(com(\gamma)) = \|\mu\| + \delta(com_\nu(\nu)).
\]
Since $\|\nu\|<\|\gamma\|$, the induction hypothesis applies to $\nu$: $\delta(com_\nu(\nu))<\|\nu\|$. Hence $\delta(com(\gamma)) = \|\mu\|+\delta(com_\nu(\nu)) < \|\mu\|+\|\nu\| = \|\gamma\|$.
\end{proof}

The intuitive meaning behind this lemma is that no matter how deep the composition, the commitment formula attached to an action is always built from strictly fewer primitive actions than the action itself has.

\begin{theorem}[Well-foundedness of the reduction axioms]
\label{thm:wf}
For every instance of \textbf{R1}–\textbf{R5}, every formula \(t\) recurses into on the right-hand side has strictly smaller $(\delta,n)$ than the left-hand side.
\end{theorem}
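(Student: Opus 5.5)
The proof is a case analysis over \textbf{R1}--\textbf{R5}. For each axiom, and following the clauses of \autoref{def:trans}, I compute $(\delta,n)$ of the left-hand side. I then do the same for every formula that $t$ is applied to on the right-hand side. This includes the formulas it recurses into after the connectives $\neg,\wedge,\to,\mathbf D_K$ have been peeled off by the homomorphic clauses, and in particular the precondition $\pi(\alpha_H)=E_H\wedge com(\alpha_H)$. Throughout I use $\delta(E_H)=\delta(\mathbf D_H\top)=0$, so that $\delta(\pi(\alpha_H))=\delta(com(\alpha_H))=\sum_{a\in H}\delta(com(\alpha_a))$. By \autoref{lem:comdepth}, summed over $a\in H$, this quantity is strictly below $\|\alpha_H\|$. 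This single inequality is the engine of the whole argument: any occurrence of the precondition on the right is strictly $\delta$-smaller than the left-hand side, because the left-hand side always carries at least $\|\alpha_H\|$.

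For \textbf{R1}, the left-hand side has $\delta=\|\alpha_H\|$. The right-hand side consists of $p_a$ (with $\delta=0$) and $\pi(\alpha_H)$ (with $\delta<\|\alpha_H\|$), so both components drop in $\delta$. For \textbf{R4b} the argument is the same with $\neg\pi(\alpha_H)$.

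For \textbf{R2}, \textbf{R3} and \textbf{R4a}, the recursive calls are $[\alpha_H]\varphi$, $[\alpha_H]\theta$, or (after stripping $\mathbf D_K$) $[\alpha_H]\varphi$. These have $\delta$ at most that of the left-hand side, since $\delta$ ignores $\neg$ and $\mathbf D_K$ and is additive on $\wedge$. Where $\delta$ is equal, the tie is broken by $n$, because the left-hand side carries one extra outer connective inside the box. For example, $n([\alpha_H]\neg\varphi)=2+n(\varphi)>1+n(\varphi)=n([\alpha_H]\varphi)$, and similarly $n([\alpha_H](\varphi\wedge\theta))>n([\alpha_H]\varphi)$ via the $\max$. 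In \textbf{R3}, $\delta([\alpha_H]\varphi)=\|\alpha_H\|+\delta(\varphi)\le\|\alpha_H\|+\delta(\varphi)+\delta(\theta)$, with equality only when $\delta(\theta)=0$, and then $n$ decides. The precondition conjuncts are handled as in \textbf{R1}.

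The main obstacle is \textbf{R5}, and also making precise what ``the formulas $t$ recurses into'' are, so that the measure is compared on the right objects. For \textbf{R5} I compute $\delta([\alpha_H][\beta_H]\varphi)=\|\alpha_H\|+\|\beta_H\|+\delta(\varphi)$. By \autoref{def:actsize}, extended additively over the group, $\|\alpha_H;\beta_H\|=\sum_a(\|\alpha_a\|+\|\beta_a\|)$, so $\delta$ is \emph{equal} on both sides and the decrease must come from $n$: $n([\alpha_H][\beta_H]\varphi)=2+n(\varphi)>1+n(\varphi)=n([\alpha_H;\beta_H]\varphi)$.

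What must still be checked is that the composed action does not reintroduce $\delta$-growth through its precondition. When $[\alpha_H;\beta_H]\varphi$ is later unfolded by \textbf{R1}--\textbf{R4}, $\pi(\alpha_H;\beta_H)$ appears. Its $\delta$ is bounded by \autoref{lem:comdepth} applied to the composed vertices $\alpha_a;\beta_a$, which is exactly why that lemma was proved for composed as well as primitive actions. I would therefore state the \textbf{R5} case as a strict $n$-decrease at equal $\delta$, and rely on the uniform precondition bound for all subsequent steps. Well-foundedness of the lexicographic order on $\mathbb N\times\mathbb N$ then finishes the argument.
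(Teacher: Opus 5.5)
Your proposal is correct and follows the paper's proof essentially step for step. It uses the same case split over \textbf{R1}--\textbf{R5} and the same key bound $\delta(\pi(\alpha_H))<\|\alpha_H\|\le\delta([\alpha_H]\varphi)$ from \autoref{lem:comdepth}, which you usefully make explicit by summing over $a\in H$. It also uses the same $n$-tie-breaks for \textbf{R2}, \textbf{R3}, \textbf{R4a} and \textbf{R5}, and compares on the peeled targets rather than on the whole right-hand side, as the paper implicitly does; that reading is necessary, since for example in \textbf{R3} the full right-hand side counts $\|\alpha_H\|$ twice and so has strictly larger $\delta$.
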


\begin{proof}

The proof is structured checking the complexity measure $\delta$ first, and then checking $n$ in case of ties.
We use the following, which holds by\autoref{lem:comdepth}:

\[
\delta(\pi(\alpha_H)) = \delta(E_H)+\delta(com(\alpha_H)) = 0+\delta(com(\alpha_H)) < \|\alpha_H\| \le \|\alpha_H\|+\delta(\varphi) = \delta([\alpha_H]\varphi) \tag{P}
\]

\textbf{R1}: $[\alpha_H]p_a\to \pi(\alpha_H)\to p_a$. Only nontrivial recursive target is $\pi(\alpha_H)$: strict by (P), since $\delta([\alpha_H]p_a)=\|\alpha_H\|+\delta(p_a)=\|\alpha_H\|$.

\textbf{R2}: $[\alpha_H]\neg\varphi\to\pi(\alpha_H)\to\neg[\alpha_H]\varphi$. Recursive targets: $\pi(\alpha_H)$ (strict by (P)) and $[\alpha_H]\varphi$. For the latter: $\delta([\alpha_H]\varphi)=\|\alpha_H\|+\delta(\varphi)=\|\alpha_H\|+\delta(\neg\varphi)=\delta([\alpha_H]\neg\varphi)$. 
Tie-break: $n([\alpha_H]\varphi)=1+n(\varphi)$ versus $n([\alpha_H]\neg\varphi)=1+n(\neg\varphi)=1+(1+n(\varphi))=2+n(\varphi)$, resulting in: $1+n(\varphi)<2+n(\varphi)$.

\textbf{R3}: $[\alpha_H](\varphi\wedge\theta)\to[\alpha_H]\varphi\wedge[\alpha_H]\theta$. Recursive targets $[\alpha_H]\varphi$, $[\alpha_H]\theta$ (since they ara analogous, we check only one). $\delta([\alpha_H]\varphi)=\|\alpha_H\|+\delta(\varphi)$ versus $\delta([\alpha_H](\varphi\wedge\theta))=\|\alpha_H\|+\delta(\varphi)+\delta(\theta)$: tied iff $\delta(\theta)=0$, else strict decrease already. If tied ($\delta(\theta)=0$): $n([\alpha_H]\varphi)=1+n(\varphi)$ versus $n([\alpha_H](\varphi\wedge\theta))=1+n(\varphi\wedge\theta)=1+1+\max(n(\varphi),n(\theta))=2+\max(n(\varphi),n(\theta))\ge2+n(\varphi)$.

\textbf{R4a}: $[\alpha_H]\mathbf D_K\varphi\to\pi(\alpha_H)\to\mathbf D_K[\alpha_H]\varphi$ ($K\subseteq H$). Recursive targets: $\pi(\alpha_H)$ (strict by (P)) and, via $t(\mathbf D_K\psi)=\mathbf D_K t(\psi)$, ultimately $[\alpha_H]\varphi$. $\delta([\alpha_H]\varphi)=\|\alpha_H\|+\delta(\varphi)$; $\delta([\alpha_H]\mathbf D_K\varphi)=\|\alpha_H\|+\delta(\mathbf D_K\varphi)=\|\alpha_H\|+\delta(\varphi)$. Tie-break: $n([\alpha_H]\varphi)=1+n(\varphi)$ versus $n([\alpha_H]\mathbf D_K\varphi)=1+n(\mathbf D_K\varphi)=1+(1+n(\varphi))=2+n(\varphi)$.

\textbf{R4b}: $[\alpha_H]\mathbf D_K\varphi\to\neg\pi(\alpha_H)$ ($K\not\subseteq H$). Only recursive target is $\pi(\alpha_H)$: strict by (P), since $\delta([\alpha_H]\mathbf D_K\varphi)=\|\alpha_H\|+\delta(\varphi)\ge\|\alpha_H\|>\delta(\pi(\alpha_H))$.

\textbf{R5}: $[\alpha_H][\beta_H]\varphi\to[\alpha_H;\beta_H]\varphi$. First notice that
\[
\delta([\alpha_H][\beta_H]\varphi) = \|\alpha_H\|+\delta([\beta_H]\varphi) = \|\alpha_H\|+\|\beta_H\|+\delta(\varphi).
\]
\[
\delta([\alpha_H;\beta_H]\varphi) = \|\alpha_H;\beta_H\|+\delta(\varphi) = (\|\alpha_H\|+\|\beta_H\|)+\delta(\varphi)
\]
using the additivity of $\|\cdot\|$ under composition (\autoref{def:actsize}). 
Tie-break via $n$:
\[
n([\alpha_H][\beta_H]\varphi) = 1+n([\beta_H]\varphi) = 1+(1+n(\varphi)) = 2+n(\varphi),
\]
\[
n([\alpha_H;\beta_H]\varphi) = 1+n(\varphi).
\]
\end{proof}

\begin{theorem}
\label{translation}
For every formula $\varphi\in\mathcal L^{DDSL}$, $\vdash\varphi\leftrightarrow t(\varphi)$.
\end{theorem}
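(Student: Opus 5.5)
The plan is a well-founded induction on the lexicographic order $(\delta,n)$ of \autoref{def:delta} and \autoref{def:ncx}. This order is well-founded, and by \autoref{thm:wf} every recursive call of $t$ in \autoref{def:trans} lands on a formula of strictly smaller $(\delta,n)$ than its argument. So $t$ is well-defined as a total function, and the same order supports induction. The claim to prove by induction on $(\delta(\varphi),n(\varphi))$ is: $\vdash\varphi\leftrightarrow t(\varphi)$ for every $\varphi\in\mathcal L^{DDSL}$.

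The static clauses come first.
\begin{itemize}
\item For $\top$ and $p_a$ the claim is the tautology $\varphi\leftrightarrow\varphi$.
\item For $\neg\psi$ and $\psi_1\wedge\psi_2$, the subformulas have $\delta$ no larger and $n$ strictly smaller, so they are smaller in the order. The induction hypothesis and \textbf{Taut}+\textbf{MP} then give the claim.
\item For $\mathbf D_G\psi$, the induction hypothesis gives $\vdash\psi\leftrightarrow t(\psi)$. Applying \textbf{RM} in both directions yields $\vdash\mathbf D_G\psi\leftrightarrow\mathbf D_G t(\psi)$.
\end{itemize}
For \textbf{RM} to apply, $t$ must preserve stratification: if $\psi\in\mathcal L^{DDSL}_G$ then $t(\psi)\in\mathcal L^{DDSL}_G$. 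I would prove this as a side lemma by the same induction. The key observations are that $\pi(\alpha_H)\in\mathcal L^{DDSL}_H$ (by \autoref{lem:strat-mono-ddsl}, since each $com(\alpha_a)\in\mathcal L^{DDSL}_{\{a\}}$) and that each reduction clause keeps the right-hand side inside the fragment of the left-hand side.

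Next come the dynamic clauses. Let $\varphi=[\alpha_H]\theta$ and split on the shape of $\theta$, matching \autoref{def:trans}. In each case the reduction axiom (\textbf{R1}, \textbf{R2}, \textbf{R3}, \textbf{R4a}, \textbf{R4b} or \textbf{R5}) gives $\vdash\varphi\leftrightarrow\rho$, where $\rho$ is the formula that $t$ is applied to on the right. We need $\vdash\rho\leftrightarrow t(\rho)$. Here $\rho$ is itself a compound formula, an implication or conjunction whose components are $\pi(\alpha_H)$, $[\alpha_H]\theta'$, $[\alpha_H;\beta_H]\theta'$, or $\mathbf D_K[\alpha_H]\theta'$. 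By \autoref{thm:wf} each of these components has $(\delta,n)$ strictly below that of $\varphi$, so the induction hypothesis applies to them. The translation $t$ distributes over $\neg$, $\wedge$ (hence $\to$) and $\mathbf D_K$, so $t(\rho)$ is built from the translations of those components. Propositional reasoning, together with \textbf{RM} for the $\mathbf D_K$ layer in \textbf{R4a}, then gives $\vdash\rho\leftrightarrow t(\rho)$. Chaining this with the axiom instance finishes the case. Replacement of provable equivalents under $\mathbf D_K$ in \textbf{R4a} again uses \textbf{RM} with stratification: $[\alpha_H]\theta'\in\mathcal L^{DDSL}_K$ because $\theta'\in\mathcal L^{DDSL}_K$ and $K\subseteq H$.

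I expect the main obstacle to be bookkeeping around the order rather than any logical step. The point to watch is that the induction is applied only to the components of $\rho$, never to $\rho$ itself. The formula $\rho$ can have $n$ larger than $\varphi$: for example, $\pi(\alpha_H)\to\mathbf D_K[\alpha_H]\theta'$ carries extra connectives. So I would phrase the induction so that the hypothesis is applied only to the maximal subformulas of $\rho$ on which $t$ genuinely recurses, and use that $t$ commutes with the Boolean and $\mathbf D$ structure of $\rho$. A second, smaller point: in the \textbf{R4a} case \autoref{thm:wf} compares $[\alpha_H]\theta'$ with $\varphi$, and the step down through $\mathbf D_K$ is handled by the commuting clause for $t$, not by a separate order comparison. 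Finally, $\pi(\alpha_H)$ may contain dynamic modalities when the update models are composed. It is covered because (P) in \autoref{thm:wf}, resting on \autoref{lem:comdepth}, gives $\delta(\pi(\alpha_H))<\delta(\varphi)$.
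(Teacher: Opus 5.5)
Your skeleton is the paper's own: strong induction on $(\delta,n)$, the reduction axioms for the dynamic cases, and \autoref{thm:wf} for the decrease. You are more careful than the paper's two-line proof in two places. You apply the hypothesis only to the components of $\rho$, which can exceed $\varphi$ in $n$, and even in $\delta$ once $com(\alpha_H)$ comes from a composition. You also route replacement under $\mathbf D$ through \textbf{RM} with stratification. However, two steps you make explicit fail as stated; both are inherited from the paper's definitions, which its proof passes over.

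First, your stratification claim in the \textbf{R4a} case is backwards. By \autoref{def:langLDDSL}, $[\alpha_H]\psi\in\mathcal L^{DDSL}_K$ requires $H\subseteq K$, so together with $K\subseteq H$ it holds only for $K=H$. For $K\subsetneq H$ (a well-formed case, e.g.\ $[\alpha_{\{a,b\}}]\mathbf D_{\{a\}}p_a$), the right-hand side $\mathbf D_K[\alpha_H]\theta'$ is not a formula at all. Three things then break:
\begin{enumerate}
\item \textbf{RM} cannot be applied to it.
\item Your side lemma, that each reduction clause stays in the fragment of its left-hand side, is false.
\item $t$ would leave $\mathcal L^{DSL}$.
\end{enumerate}
This needs a changed axiom, not better bookkeeping. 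For instance, $[\alpha_H]\mathbf D_K\theta'\leftrightarrow(\pi(\alpha_H)\to\mathbf D_K[\alpha_K]\theta')$, with $\alpha_K$ the $K$-face of $\alpha_H$, is sound by \autoref{lem:groupstabDDSL}, well-stratified, and still $(\delta,n)$-decreasing.

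Second, well-foundedness gives only termination of $t$, not totality. Even termination holds only under your component-wise reading, since the call $t(\rho)$ is not on a smaller argument. Your case split on $\theta$ ``matching \autoref{def:trans}'' is not exhaustive:
\begin{itemize}
\item There is no clause, and no axiom, for $[\alpha_H]\top$.
\item There is none for $[\alpha_H][\beta_K]\theta$ with $K\subsetneq H$. This formula is well-formed because $[\beta_K]\theta\in\mathcal L^{DDSL}_H$, while \textbf{R5} and \autoref{def:compo} only cover matching colour sets.
\end{itemize}
Setting $t([\alpha_H]\top):=\top$ does not close the first gap, because nothing in the system derives $[\alpha_H]\top$. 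To see this, read $[\alpha_H]\top$ as false, and define every other dynamic formula by recursion along \textbf{R1}--\textbf{R5} (arbitrarily where none applies). Every axiom and rule stays valid under this reading, so an axiom such as $[\alpha_H]\top\leftrightarrow\top$ must be added. The mismatched-group case needs its own reduction, e.g.\ $[\alpha_H][\beta_K]\theta\leftrightarrow(\pi(\alpha_H)\to[\alpha_K][\beta_K]\theta)$. This is sound and strictly lowers $\delta$, since $\|\alpha_K\|<\|\alpha_H\|$.
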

\begin{proof}
By strong induction on $(\delta(\varphi),n(\varphi))$, ordered lexicographically (\autoref{def:delta}, \autoref{def:ncx}). The ordinary structural cases ($\neg$, $\wedge$, $\mathbf D_G$) tie in $\delta$ and strictly decrease in $n$ by exactly 1, exactly as for standard modal completeness proofs. The five dynamic cases are handled by the corresponding axiom of \autoref{def:RAX} (\textbf{R1}–\textbf{R5}), each shown in \autoref{thm:wf} to recurse only into strictly $(\delta,n)$-smaller formulas.
\end{proof}

\begin{lemma}[Translation preserves stratification]
\label{lem:t-preserves-strat}
For every $\varnothing\ne G\subseteq\agents$ and $\varphi\in\mathcal L^{DDSL}_G$: $t(\varphi)\in\mathcal L^{DSL}_G$.
\end{lemma}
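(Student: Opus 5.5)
The natural route is strong induction on the well-founded order $(\delta(\varphi),n(\varphi))$ of \autoref{thm:wf}, since $t$ itself is defined by recursion along this order and not by plain structural recursion. The claim to prove is: for every $G$ and every $\varphi\in\mathcal L^{DDSL}_G$, $t(\varphi)\in\mathcal L^{DSL}_G$. Two auxiliary facts are needed first.

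The first is that $\pi(\alpha_H)\in\mathcal L^{DDSL}_H$. We have $E_H=\mathbf D_H\top\in\mathcal L^{DDSL}_H$ directly. Each $com(\alpha_a)$ lies in $\mathcal L^{DDSL}_{\{a\}}\subseteq\mathcal L^{DDSL}_H$ by \autoref{lem:strat-mono-ddsl}. This also covers composed actions, since $com''(\alpha_a;\beta_a)=\langle\alpha_a\rangle com'(\beta_a)$ is built by the $[\alpha_{\{a\}}]$ clause at level $\{a\}$.

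The second is that every right-hand side in \autoref{def:trans} stays inside $\mathcal L^{DDSL}_G$ whenever the left-hand side $[\alpha_H]\psi$ does. Here $H\subseteq G$ and $\psi\in\mathcal L^{DDSL}_H$.

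The static clauses are routine. For $\top$ and $p_a$ with $a\in G$, $t$ is the identity. For $\neg$ and $\wedge$, apply the IH at the same $G$. For $\mathbf D_K\psi$ with $K\subseteq G$ and $\psi\in\mathcal L^{DDSL}_K$, the IH at level $K$ gives $t(\psi)\in\mathcal L^{DSL}_K$, so $\mathbf D_K t(\psi)\in\mathcal L^{DSL}_G$. Throughout, the IH must be quantified over all groups, because $\mathbf D_K$ and $[\alpha_K]$ drop to a smaller stratum.

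For the dynamic clauses, each time I show the right-hand side formula lies in $\mathcal L^{DDSL}_H$. The IH, which applies by \autoref{thm:wf}, then places its translation in $\mathcal L^{DSL}_H\subseteq\mathcal L^{DSL}_G$ by \autoref{lem:strat-mono}.
\begin{itemize}
\item \textbf{R1}: $\psi=p_a$ forces $a\in H$, so $\pi(\alpha_H)\to p_a\in\mathcal L^{DDSL}_H$.
\item \textbf{R2} and \textbf{R3}: $[\alpha_H]\varphi$ and $[\alpha_H]\theta$ lie in $\mathcal L^{DDSL}_H$ because $\varphi,\theta\in\mathcal L^{DDSL}_H$.
\item \textbf{R4a}: here $K\subseteq H$ and $\varphi\in\mathcal L^{DDSL}_K$. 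The intended reading is that the inner $[\alpha_H]\varphi$ is re-stratified at level $H$, giving $\mathbf D_K[\alpha_H]\varphi$.
\item \textbf{R4b}: the right-hand side is $\neg\pi(\alpha_H)$, which is fine by the first fact.
\item \textbf{R5}: $[\alpha_H;\beta_H]\varphi\in\mathcal L^{DDSL}_H$, since the composed face is again $H$-colored.
\end{itemize}

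The main obstacle is \textbf{R4a}. The grammar only allows $\mathbf D_K\chi$ when $\chi\in\mathcal L^{DDSL}_K$. But $[\alpha_H]\varphi$ with $K\subsetneq H$ is not literally $K$-stratified, because the $[\cdot]$ clause needs $H\subseteq K$. So I would not argue that the intermediate DDSL formula is stratified. Instead I would reason directly about the output: by the IH, $t([\alpha_H]\varphi)$ is a static formula. I would then show it actually lies in $\mathcal L^{DSL}_K$.

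My first attempt would be to strengthen the induction hypothesis. The strengthened claim would be: for $\varphi\in\mathcal L^{DDSL}_K$ and any acting group $H$, $t([\alpha_H]\varphi)$ lies in $\mathcal L^{DSL}_{K}$ up to the conjuncts of $\pi(\alpha_H)$ restricted to $K$, i.e. $E_K\wedge com(\alpha_K)$. This is motivated by \autoref{lem:groupstabDDSL} and the $(X_K,\alpha_K)$ reduction in the soundness proof of \textbf{R4a}. If this bookkeeping fails, the fallback is to read \textbf{R4a} and $t$ with $\alpha_K$ in place of $\alpha_H$ under the $\mathbf D_K$. This is semantically justified by $(\ast)$ in that soundness proof, and it makes the \textbf{R4a} clause fall under the uniform case above.
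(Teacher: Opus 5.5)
You have found the real weak point, and the paper's own proof does not get past it either. The paper handles every dynamic case uniformly by asserting that the right-hand side is built ``at level $H$''. That is false for \textbf{R4a} when $K\subsetneq H$: $\mathbf D_K[\alpha_H]\varphi$ requires $[\alpha_H]\varphi\in\mathcal L^{DDSL}_K$, and the action clause only permits this when $H\subseteq K$. Put differently, the IH applied to the target $[\alpha_H]\varphi$ yields level $H$, while reassembling under $\mathbf D_K$ needs level $K$.

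However, your primary remedy cannot succeed, because the statement is false for the translation as defined. You propose to show that $t([\alpha_H]\varphi)$ nevertheless lies in $\mathcal L^{DSL}_K$. Take $H=\{a,b\}$, $K=\{a\}$, $\varphi=p_a$. Then $[\alpha_H]\mathbf D_{\{a\}}p_a\in\mathcal L^{DDSL}_H$. Its translation contains $\mathbf D_{\{a\}}\bigl(t(\pi(\alpha_H))\to p_a\bigr)$, which has $E_{\{a,b\}}=\mathbf D_{\{a,b\}}\top$ (and in general $com(\alpha_b)$) inside the scope of $\mathbf D_{\{a\}}$. So $t([\alpha_H]p_a)\notin\mathcal L^{DSL}_{\{a\}}$, and the translation of the original formula is not in $\mathcal L^{DSL}$ at all.

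Your strengthened hypothesis (membership in $\mathcal L^{DSL}_K$ ``up to the conjuncts of $\pi(\alpha_H)$ restricted to $K$'') is not a membership claim. At best it is a provable-equivalence claim. Since $t$ is a fixed syntactic map, no bookkeeping removes $E_H$ and $com(\alpha_b)$ from its output.

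So the only working route in your proposal is the fallback, and you should present it as what it is. It corrects \textbf{R4a} and the matching clause of $t$ to $[\alpha_H]\mathbf D_K\varphi\leftrightarrow\bigl(\pi(\alpha_H)\to\mathbf D_K[\alpha_K]\varphi\bigr)$, where $\alpha_K\subseteq\alpha_H$ is the $K$-face. What you then prove is the lemma for the corrected translation, not for the paper's. That requires three checks you have only gestured at.

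\emph{Soundness.} Given $\pi(\alpha_H)$ at $X$, $(X_K,\alpha_K)\in C^u$ by stability of each $com(\alpha_a)$. Every $Z$ with $K\subseteq\chi(X\cap Z)$ satisfies $\pi(\alpha_K)$ with $Z_K=X_K$. Then $(\ast)$ together with \autoref{lem:groupstabDDSL} in $\mathcal C\otimes\mathcal U$ closes the equivalence.

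\emph{Well-foundedness.} $\|\alpha_K\|<\|\alpha_H\|$ for $K\subsetneq H$ because each $\|\alpha_a\|\ge1$, and the $n$ tie-break handles $K=H$.

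\emph{Stratification.} This case is now uniform, since $[\alpha_K]\varphi\in\mathcal L^{DDSL}_K$ gives $\mathbf D_K t([\alpha_K]\varphi)\in\mathcal L^{DSL}_H$.

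There are also two smaller gaps in the same case analysis. First, apply the IH to the recursive targets and reassemble, rather than to the whole right-hand side. For \textbf{R3} the right-hand side has $\delta$ larger than the left by $\|\alpha_H\|$, so the IH cannot be applied to it directly; \autoref{thm:wf} only controls the recursive targets. Second, your \textbf{R5} case, like the paper's, assumes the inner action is $H$-colored. The grammar also admits $[\alpha_H][\beta_{H'}]\psi$ with $H'\subsetneq H$, as well as $[\alpha_H]\top$, and $t$ has no clause for either.
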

\begin{proof}
By induction on $(\delta(\varphi),n(\varphi))$ (\autoref{def:delta}, \autoref{def:ncx}), ordered lexicographically.

$\varphi=\top$: $t(\top)=\top\in\mathcal L^{DSL}_G$.

$\varphi=p_a$ ($a\in G$): $t(p_a)=p_a\in\mathcal L^{DSL}_G$.

$\varphi=\neg\psi$, $\varphi=\psi_1\wedge\psi_2$: immediate from the IH, since $t$ commutes with $\neg,\wedge$ and both cases preserve $G$.

$\varphi=\mathbf D_H\psi$ ($H\subseteq G$, $\psi\in\mathcal L^{DDSL}_H$): $t(\mathbf D_H\psi)=\mathbf D_H t(\psi)$; by IH (applied at level $H$, lower complexity), $t(\psi)\in\mathcal L^{DSL}_H$; since $H\subseteq G$, $\mathbf D_H t(\psi)\in\mathcal L^{DSL}_G$ by the defining clause of \autoref{def:langL}.

$\varphi=[\alpha_H]\theta$ ($H\subseteq G$, $\theta\in\mathcal L^{DDSL}_H$), any of R1--R5: in every case, $t([\alpha_H]\theta)$ is defined as $t(\rho)$ for some $\rho$ built from $com(\alpha_H)$, $\theta$ (or its immediate subformulas), $\mathbf D_H$, and Boolean connectives all at level $H$. Since $com(\alpha_a)\in\mathcal L^{DDSL}_{\{a\}}\subseteq\mathcal L^{DDSL}_H$ for each $a\in H$ (\autoref{lem:strat-mono-ddsl}), $com(\alpha_H)\in\mathcal L^{DDSL}_H$; combined with $\theta\in\mathcal L^{DDSL}_H$, $\rho\in\mathcal L^{DDSL}_H$. By IH (lower complexity), $t(\rho)\in\mathcal L^{DSL}_H\subseteq\mathcal L^{DSL}_G$.
\end{proof}

\begin{theorem}[Completeness] For every formula $\varphi \in \mathcal{L}^{DDSL}$, $\models \varphi \text{ implies } \vdash \varphi$

\end{theorem}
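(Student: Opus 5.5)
The plan is the standard reduction argument: transfer validity of $\varphi$ to validity of its translation, and then invoke completeness of the static logic. Suppose $\models\varphi$ for some $\varphi\in\mathcal L^{DDSL}$.

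The first step is to show $\models\varphi\leftrightarrow t(\varphi)$. By \autoref{translation}, $\vdash\varphi\leftrightarrow t(\varphi)$ holds in the system of \autoref{def:RAX}. That system is sound: the static axioms are sound by the soundness theorem of \autoref{sec:SC1}, which extends verbatim to models evaluated with the DDSL clauses, and \textbf{R1}--\textbf{R5} are sound by the soundness theorem for the reduction axioms. Soundness of the whole system therefore gives $\models\varphi\leftrightarrow t(\varphi)$, hence $\models t(\varphi)$.

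The second step moves to the static level. By \autoref{lem:t-preserves-strat} at $G=\agents$, $t(\varphi)\in\mathcal L^{DSL}$. The DDSL semantics agrees with the DSL semantics on action-free formulas (\autoref{def:DDSLsem} only adds a clause), so $t(\varphi)$ is valid in the sense of \autoref{def:semL}. Completeness of DSL (\autoref{theo:compl}) then yields $DSL\vdash t(\varphi)$. Since every DSL axiom and rule is included in the dynamic system, this derivation is also a derivation in DDSL.

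The third step closes the argument. From $\vdash t(\varphi)$ and $\vdash\varphi\leftrightarrow t(\varphi)$, propositional reasoning (\textbf{Taut}, \textbf{MP}) gives $\vdash\varphi$.

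The main obstacle is the first step. It needs soundness of the combined system, including the rules \textbf{RM} and \textbf{MP} applied to formulas containing action modalities. It also needs \autoref{translation} to be genuinely derivable, which raises two issues.

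\begin{itemize}
\item The substitution of provable equivalents inside $\mathbf D_G$ (for the case $t(\mathbf D_G\psi)=\mathbf D_G t(\psi)$) must be obtained from \textbf{RM} applied in both directions. This in turn requires both $\psi$ and $t(\psi)$ to lie in $\mathcal L^{DDSL}_G$, which \autoref{lem:t-preserves-strat} provides.
\item Replacement under $[\alpha_H]$ is not needed, because $t$ always rewrites outermost actions via \textbf{R1}--\textbf{R5}, and termination is guaranteed by \autoref{thm:wf}.
\end{itemize}

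I would spell out the $\mathbf D_G$ replacement case explicitly and treat the rest as routine.
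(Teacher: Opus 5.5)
Your proposal is correct and follows the paper's proof essentially step for step: soundness of the combined system together with \autoref{translation} gives $\models t(\varphi)$, \autoref{lem:t-preserves-strat} places $t(\varphi)$ in $\mathcal L^{DSL}$ so that \autoref{theo:compl} yields $\vdash t(\varphi)$, and $\vdash\varphi\leftrightarrow t(\varphi)$ then gives $\vdash\varphi$. The differences are minor. You make explicit two points the paper leaves tacit: the DDSL semantics agrees with the DSL semantics on action-free formulas, and every DSL derivation is a DDSL derivation. Your remarks on replacement under $\mathbf D_G$ via \textbf{RM} belong to the proof of \autoref{translation} rather than to this theorem.
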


\begin{proof}
Suppose $\textbf{DDSL} \models \varphi$. By the soundness of the proof system and by \autoref{translation}, $\textbf{DSL} \models t(\varphi)$. By \autoref{lem:t-preserves-strat}, $t(\varphi)\in\mathcal L^{DSL}_\agents=\mathcal L^{DSL}$ is well-formed, so \autoref{theo:compl} applies, giving $\textbf{DDSL}\vdash t(\varphi)$. Since $\textbf{DDSL}\vdash \varphi \leftrightarrow t(\varphi)$, $\textbf{DDSL}\vdash \varphi$.
\end{proof}

\section{Conclusion and Further Work}
\label{sec:concl}

This paper introduces the novel deontic simplicial logic (\textbf{DSL}), the first deontic logical framework based on (impure) simplicial complexes. Its geometric interpretation allows us to formalize joint commitments among agents in a natural way.
We further extend \textbf{DSL} to the Dynamic Deontic Simplicial Logic (\textbf{DDSL}), resulting, to our knowledge, the first dynamic \emph{deontic} logic based on simplicial complexes, representing model transforming actions as agents' choices among mutually exclusive (joint) commitments. Both logics are shown to be sound and complete, and their usefulness is illustrated through several examples.

As future work, we plan to extend the framework to accommodate non-symmetric commitments, in which an agent $a$ is committed toward agent $b$ without a corresponding commitment from $b$ to $a$.
This can be achieved by using directed simplicial complexes, where arrow represents commitments from only one of the parties.
We also intend to incorporate graded commitments, attributing values to vertices, edges and faces, thereby enabling a more fine-grained form of normative reasoning. In addition, we aim to explore the representation of other deontic notions, such as permissions and prohibitions, within the simplicial setting.
Finally, the satisfiability/decidability of these logics is an open problem.
%as well as decidability results of the existing simplicial logics.

\section*{Acknowledgments}

We are thankful to Hans van Ditmarsch for the multiple illuminating discussions.

\section*{Funding}
This work was supported by the Austrian Science Fund (FWF) project A Logical Framework for Graded Deontic Reasoning [10.55776/PAT2141924] (\doi{10.55776/PAT2141924}).

%\nocite{*}
\bibliographystyle{plainurl}
\bibliography{references}

\end{document}